\documentclass{article}
\usepackage{graphicx} \usepackage{paper}
\usepackage{array}
\usepackage{multirow}
\usepackage{dsfont}

\usepackage{tikz}
\usepackage{amsmath}
\usepackage{caption}
\usepackage{bbm}
\usepackage{enumitem}

\newcommand{\bin}[2]{\binom{#1}{#2}}
\newcommand{\Ind}{\mathcal{I}}
\newcommand{\bit}{\mathsf{b}}
\newcommand{\one}{\mathbbm{1}}

\newcommand{\A}{\mathcal{A}}
\newcommand{\PP}{\mathcal{P}}
\newcommand{\AAA}{\mathcal{A}}

\newcommand{\MMM}{\mathcal{M}}
\newcommand{\XXX}{\mathcal{X}}
\newcommand{\PPP}{\mathcal{P}}

\newcommand{\remove}[1]{}

\title{Is Randomness Necessary for Adaptive Data Analysis?}
\author{
Edith Cohen\textsuperscript{*,$\dagger$} 
\and 
\hspace{-8px}Haim Kaplan\textsuperscript{$\dagger$,*} 
\and 
\hspace{-8px}Yishay Mansour\textsuperscript{$\dagger$,*} 
\and 
\hspace{-8px}Shay Sapir\textsuperscript{$\ddagger$,*} 
\and 
\hspace{-8px}Uri Stemmer\textsuperscript{$\dagger$,*}
}
\date{}

\theoremstyle{plain}

\newtheorem{remark}[definition]{Remark}  

\begin{document}
\maketitle

\begingroup
\renewcommand{\thefootnote}{} 
\footnotetext{
  \hspace{-7px}
  \textsuperscript{*}Google Research \quad
  \textsuperscript{$\dagger$}Tel Aviv University \quad
  \textsuperscript{$\ddagger$}Weizmann Institute of Science
}
\endgroup

\begin{abstract}
The Adaptive Data Analysis (ADA) problem formalizes the challenge of preventing false discovery and overfitting when a dataset is repeatedly reused. Formally, our input is a dataset containing $n$ i.i.d.\ samples from an unknown distribution $\mathcal{P}$ over a domain $\mathcal{X}$, and our goal is to answer a sequence of $k$ adaptively chosen statistical queries with respect to $\mathcal{P}$. The main question is how many queries we can support (i.e., how large $k$ can be), primarily as a function of the number of samples $n$. This question has been intensively studied and is relatively well-understood for randomized mechanisms: there are computationally efficient mechanisms that support $k \approx n^2$ queries, and no computationally efficient mechanism can answer $k \gg n^2$ queries. In this paper, we address a fundamental question: is randomness necessary for ADA?

Despite a decade of work on ADA, this question remains open. A folklore observation dating back to the initial works on ADA is that randomness is {\em not} necessary when the analyst is computationally bounded. Yet, the necessity of randomness against computationally unbounded analysts has remained elusive. Our main contribution resolves this gap in the information-theoretic setting. Perhaps surprisingly, we show that randomness is {\em strictly necessary} to answer a non-trivial number of adaptive queries: when the analyst is unbounded, any deterministic mechanism can be forced to fail after just $k = \tilde{O}(n)$ queries.  

\end{abstract}

\section{Introduction}
Classical statistical theory for asserting the validity of a hypothesis tells us that the description of the hypothesis should be independent from the data on which it is evaluated. In practice, however, analysts frequently reuse the same dataset to answer multiple questions. Often, this exploration is adaptive: an analyst asks a statistical question, observes the answer, and uses that information to decide what to ask next. While this is how data analysis naturally happens, it makes it incredibly easy to overfit. Without careful intervention, the analyst will quickly find patterns that exist in the specific sample but fail to generalize to the true underlying distribution.

This problem was formalized by Dwork et al.~\cite{DworkFHPRR15} in what has come to be known as the {\em Adaptive Data Analysis (ADA)} problem. In this problem, a mechanism $\mathcal{M}$ holds a dataset $S$ consisting of $n$ samples drawn i.i.d.\ from an unknown distribution $\mathcal{P}$ over a domain $\mathcal{X}$. An analyst iteratively (and adaptively) submits a sequence of $k$ statistical queries $q : \mathcal{X} \to [0, 1]$. The mechanism $\mathcal{M}$ must answer these queries such that every answer is $\eps$-accurate (i.e., within $\pm\eps$) with respect to the true expectation of the query over $\mathcal{P}$.
To provide worst-case guarantees, the analyst is assumed to be adversarial, and is often referred to as an \emph{attacker}.  
The central question in ADA is establishing the optimal sample complexity: how large can $k$ be as a function of $n$?

This question is relatively well-understood for {\em randomized} mechanisms. A deep connection between ADA and various stability notions (such as differential privacy) has shown that randomized mechanisms can safely answer a large number of adaptive queries without overfitting \cite{DworkFHPRR15,DworkFHPRR15b,BassilyNSSSU16,RogersRST16,RussoZ16,FeldmanS17,FeldmanS18,FishRR18,LigettS19,SteinkeZ20,JungLN0SS20,ShenfeldL23,Blanc23}. Specifically, there are computationally efficient randomized mechanisms supporting $k \approx n^2$ queries, and, without making specific assumptions, no mechanism, even computationally unbounded, can answer $k \gg n^2$ queries.\footnote{If one assumes that the domain $\mathcal{X}$ is not too large, namely that $n\geq \polylog|\mathcal{X}|$, then there exist computationally {\em inefficient} mechanisms supporting $k \approx \exp(n/\polylog |\mathcal{X}|)$ queries. See Section~\ref{sec:related} (related works).}

However, all known constructions for the ADA problem rely fundamentally on the mechanism's ability to inject carefully calibrated random noise into its answers (again, unless one makes specific assumptions, under some of which deterministic mechanisms do exist). This reliance raises a natural question about the nature of adaptive data analysis: is randomness strictly necessary for ADA?

Resolving this question is fundamental to understanding the algorithmic toolkit required to prevent overfitting in the face of adaptive adversaries. This pursuit mirrors central lines of inquiry in other areas of theoretical computer science where separating the power of deterministic and randomized algorithms has been a long-standing theme. This includes works on communication complexity~\cite{yao1979some,kalyanasundaram1992probabilistic}, streaming algorithms \cite{flajolet1985probabilistic,alon1996space,harvey2008sketching,chakrabarti2016strong,kamath2021simple,kaplan2021separating,ChakrabartiS24}, dynamic algorithms \cite{BeimelKMNSS22,bateni2023optimal,BernsteinBFKS26}, and distributed computing \cite{ben1983another,fischer1985impossibility,lehmann1981advantages}.

\subsection{Our Contributions}

The primary contribution of this work is resolving the role of randomness for ADA in the information-theoretic setting.
We establish that to answer a super-linear number of adaptive queries, an algorithm must fundamentally rely on internal randomness. We demonstrate this by proving a strong separation: while deterministic mechanisms can match the performance of randomized ones against computationally bounded analysts, they are provably insufficient against unbounded analysts.

\paragraph{Baseline for bounded analysts.} To put our impossibility result in context, we first formalize a folklore observation (dating back
to the initial works on ADA) that randomness is not necessary when the analyst is computationally bounded. Specifically, by extracting a small number of random bits from the input sample and stretching them using cryptographic pseudorandom generators, we recover the same quadratic bound achievable by randomized mechanisms. Thus, when the analyst is bounded, randomness is a convenience rather than a necessity.

\begin{observation}[informal]\label{obs:intro}
Assuming that pseudorandom generators exist, there is a computationally efficient {\em deterministic} mechanism that, given $n$ i.i.d.\ samples from an unknown distribution $\PPP$, answers $k=\tilde{\Omega}(n^2)$ adaptively chosen statistical queries w.r.t.\ $\PPP$, provided that the entity selecting the queries is computationally efficient. 
\end{observation}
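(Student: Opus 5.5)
Fix a randomized, computationally efficient mechanism $\MMM_R$ that answers $k=\tilde{\Omega}(n^2)$ adaptive statistical queries to accuracy $\eps$ with probability $1-\beta$, using a uniformly random seed $r\in\{0,1\}^m$ with $m=\mathrm{poly}(n,k)$. Gaussian noise addition analyzed via differential privacy and the DP-to-generalization transfer theorem is a suitable choice. The deterministic mechanism $\MMM_D$ works as follows. It splits the sample into $S=S_1\cup S_2$, with $|S_1|=\tilde O(\lambda)$ for a security parameter $\lambda=n^{c}$, $c<1$ small. It deterministically extracts a short seed $s\in\{0,1\}^{\lambda}$ from $S_1$. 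It stretches the seed to $r=G(s)$ with a PRG $G:\{0,1\}^\lambda\to\{0,1\}^m$, and then runs $\MMM_R(S_2;r)$. Since $|S_2|=n-o(n)$, $\MMM_D$ still supports $\tilde{\Omega}(n^2)$ queries.

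\textbf{Step 1: seed extraction.} We need $s$ to be close to uniform, and this must hold even though $\PPP$ is unknown and could be nearly deterministic. I would handle this by a case split inside the mechanism. If $S_1$ contains at least $\lambda$ disjoint pairs of distinct elements, output one bit per pair via von Neumann's trick, e.g.\ the indicator that the first element is lexicographically smaller. Conditioned on the unordered pairs, these bits are exactly uniform and independent, because the samples are i.i.d. If no such pairs exist, a single element has mass close to $1$. In that case $\PPP$ is essentially a point mass, and the mechanism answers every query by its value at the empirical mode, deterministically. A Chernoff/union argument then shows that, except with negligible probability, every query is answered $\eps$-accurately. The threshold must be chosen so that the two regimes overlap, i.e.\ so that whenever the extraction branch is not reached, the heavy element has mass at least $1-\eps/2$. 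Getting this bookkeeping right, so that the failure probability is small and the branch decision itself leaks nothing, is routine but fiddly. One point needs care: the analyst sees the answers, which depend on $S_1$ only through $s$. So the analysis should condition on $S_1$ and use that, given the unordered pairs, $s$ is uniform and independent of $S_2$.

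\textbf{Step 2: reduction to PRG security.} This is the main step. Consider the experiment in which an efficient analyst $\AAA$ interacts with $\MMM_R(S_2;r)$ and we output $1$ iff some answer is $\eps$-inaccurate. When $r$ is truly uniform, this event has probability at most $\beta$ by the guarantee of $\MMM_R$. Suppose that with $r=G(s)$ the probability is at least $\beta+\delta$ for non-negligible $\delta$. Then we obtain a distinguisher for $G$: given $r$, sample $S_2\sim\PPP^{|S_2|}$, simulate the interaction, and test accuracy. The obstacle is that testing accuracy needs the true means $\mathbb{E}_\PPP[q]$, which an efficient distinguisher cannot compute. I would fix this by making the reduction non-uniform, or by restricting to sampleable $\PPP$. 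The distinguisher draws a fresh large sample $T\sim\PPP^{\mathrm{poly}}$, independent of everything else, and estimates each $\mathbb{E}_\PPP[q_i]$ on $T$ to within $\eps/4$; this works because $T$ is independent of the adaptively chosen queries, so Hoeffding plus a union bound over $k$ queries applies. Failure is then declared at threshold $\eps$ up to this slack, which costs only a constant factor in $\eps$. The formal statement should say that $\PPP$ is efficiently sampleable, or that the adversary receives non-uniform advice containing samples.

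\textbf{Step 3: conclusion.} Combining Steps 1 and 2, the failure probability of $\MMM_D$ against any efficient analyst is at most $\beta+\mathrm{negl}(\lambda)+\exp(-\Omega(\lambda))$. $\MMM_D$ is deterministic as a function of $S$, runs in polynomial time, and answers $\tilde{\Omega}(n^2)$ queries, which proves Observation~\ref{obs:intro}.
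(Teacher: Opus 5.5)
Your proposal is correct and follows essentially the same route as the paper (Appendix~\ref{sec:bounded_adversary}). The paper deterministically extracts a short uniform seed from part of the sample, falling back to answering every query by the empirical mode when $\PPP$ is nearly a point mass; it stretches the seed with a PRG, runs the randomized mechanism of \cite{BassilyNSSSU16} on the remaining samples, and reduces to PRG security via a distinguisher that simulates the interaction and checks accuracy on a fresh validation sample, with non-uniform advice $(S^*,V^*)$ covering non-sampleable $\PPP$. The only difference is the extractor: the paper gets $\lfloor\sqrt{n}\rfloor$ bits by lexicographically comparing the two halves of each block and declares FAIL on a tie, while your von Neumann extractor over fixed consecutive pairs skips ties, and both rest on the same exchangeability argument.
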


This observation was hinted at by some prior work without being stated or proven explicitly (see, e.g., \cite{stemmer2016}). Our contribution is in formalizing and proving it, which requires addressing some edge cases. The details are deferred to Appendix \ref{sec:bounded_adversary}.

\paragraph{Main result: Separation for unbounded analysts.} We now turn to the main technical focus of this paper, showing that when the analyst is computationally unbounded, the situation changes drastically. First, we prove that in the Random Oracle (RO) model\footnote{A random oracle is an infinite public random string $\RO$, such that the algorithm can read the $i$-th bit in $\RO$, denoted $\RO(i)$, at the
cost of one time unit. We assume that each bit of $\RO$ is uniformly distributed and independent of all other bits.
Thus, the only way to get any information on $\RO(i)$ is to read this bit. The random oracle model is a common assumption in cryptography, dating back to at least~\cite{FiatS86} and formalized in the seminal work of Bellare and Rogaway~\cite{bellare1993random}.}, any deterministic mechanism can be forced to fail after just $k = \tilde{O}(n)$ queries. We establish this negative result in two stages:

\begin{itemize}
    \item {\bf Natural mechanisms:} We first consider a restricted class of deterministic mechanisms, called ``natural'' mechanisms, which observe only the evaluations of a query on the sample points. That is, a natural mechanism observes only $(q(x_1), \dots, q(x_n))$, where $x_1,\dots,x_n$ denote the sample points, rather than the full query description $q$. 
    Natural mechanisms were also considered by \cite{HardtU14,SteinkeU15,NissimSSSU18} as a steppingstone towards showing impossibility results for randomized mechanisms. These prior works showed that randomized natural mechanisms cannot answer more than $O(n^2)$ queries. We present a completely different attack (and analysis) showing that {\em deterministic} natural mechanisms cannot answer more than $\tilde{O}(n)$ queries before failing.

    \item {\bf The random oracle model:} We then extend this barrier to general (i.e., not necessarily natural) deterministic mechanisms, under the assumption that both the analyst and the mechanism have access to a random oracle. Even when the deterministic mechanism can query the oracle at arbitrary positions and base its future actions on the obtained values, we prove that it cannot exceed the $\tilde{O}(n)$ query limit.
\end{itemize}

Our main result is specified in the following theorem.

\begin{theorem}[informal]\label{thm:mainIntro}
In the random oracle model, for every deterministic mechanism $\MMM$ and every sample size $n$, there is an attacker $\AAA$ that with high probability, forces $\MMM$ to fail after just $k=\tilde{O}(n)$ adaptive queries.
\end{theorem}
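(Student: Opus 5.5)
Against a deterministic mechanism, an unbounded attacker can \emph{simulate} $\MMM$ exactly on any hypothetical input, so the only thing hidden from it is the sample $S$ itself. The plan is to run a reconstruction attack in which each answer leaks $\Omega(1)$ bits about $S$. After $\tilde{O}(n)$ queries the attacker knows $S$, or enough of it to write down a query $q$ with $q(x_i)=1$ on the sample points and $\mathbb{E}_{\PPP}[q]\approx 0$. Any answer to that query is inaccurate on the sample or on the population. We proceed in two stages, following the outline above: natural mechanisms first, then a lift to general mechanisms in the random oracle model.

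\textbf{Stage 1 (natural mechanisms).} Take $\PPP$ uniform over a large domain, $|\XXX|\gg n^2$, so the points are distinct w.h.p.\ and each point carries $\log|\XXX|$ bits. The attacker keeps the posterior over $S$; since $\MMM$ is deterministic, this is the prior restricted to samples consistent with the answers so far. Each round, the attacker chooses a query $q$ of the form ``random-looking $\{0,1\}$ labelling'' (or a bit of a hash of $x$), tuned so that its population mean is $1/2$ and the mechanism's answer is forced to depend on $\sum_i q(x_i)/n$.

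\emph{Key step.} Since $\MMM$ sees only $(q(x_1),\dots,q(x_n))$ and is deterministic, its answer is a fixed function $a(v)$ of the evaluation vector $v$. We show that either
\begin{itemize}
\item[(i)] the answer reveals $\Omega(1)$ bits of information about $S$ under the current posterior, or
\item[(ii)] $a$ is nearly constant over the likely $v$'s.
\end{itemize}
In case (ii), the attacker can compute that constant in advance. It then picks $q$ \emph{adaptively} so that, under the posterior, the empirical mean deviates from that constant by more than $\eps$; accuracy w.r.t.\ $\PPP$ forces the answer near $1/2$, so this is a failure either way. To get this step, we choose $q$ as a threshold on a fresh hash of $x$ and argue by an anti-concentration/bisection argument, much like binary search for each sample point. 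Each point has $O(\log|\XXX|)$ bits of entropy and the number of points is $n$, so the total entropy is $O(n\log|\XXX|)$. A potential argument then bounds the number of rounds at $\tilde{O}(n)$, with $\log|\XXX|=O(\log n)$ suffices.

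\textbf{Stage 2 (random oracle).} A general mechanism sees $q$ in full, so the attacker hides the query's content inside the oracle. Queries have the form $q(x)=\RO(h(x,r))$ for a fresh index $r$. The mechanism learns $q$'s values only by reading $\RO$ at the relevant positions. Positions it does not read are uniform bits independent of its view, so the attacker can treat them as fresh randomness. Reading positions on non-sample points gives $\MMM$ nothing useful. We must handle a mechanism that reads oracle bits for points outside $S$ so as to disguise $S$. Here the attacker, being unbounded, simulates $\MMM$ and observes exactly which positions it would read. The queries $\MMM$ makes leak which points are "special", and this leakage can only help the attacker. So a hybrid argument reduces Stage 2 to Stage 1 up to a $\tilde{O}(1)$ loss.

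\textbf{Main obstacle.} The dichotomy in the key step is the hardest part: ensuring each round yields either $\Omega(1/\mathrm{polylog})$ bits or an immediate failure, against a mechanism free to choose any deterministic answer rule. The first thing to try is an entropy/KL potential on the posterior together with a hash-threshold query family, whose randomness comes from the attacker's own coins or from $\RO$.
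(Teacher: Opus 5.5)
Your overall framework matches the paper's: simulate the deterministic $\MMM$, shrink the set of consistent samples by a constant factor per round, and finish with an overfitting query. But both of your key steps have genuine gaps.

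\textbf{Stage~1.} Your dichotomy conflates empirical and population accuracy. Suppose every query has population mean $\approx 1/2$. A natural mechanism that answers $1/2$ whenever the evaluation vector has between $0.4n$ and $0.6n$ ones is then accurate on all of these queries and reveals essentially nothing. So case~(ii) is not a failure, and nothing forces the answer to depend on $\sum_i q(x_i)/n$. The same confusion appears in your endgame: being wrong ``on the sample'' is not a failure. The actual failure comes from the overfitting query and the all-ones query being indistinguishable to a natural mechanism.

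The missing idea is to force the answer to \emph{move}, while hiding each move from most candidate samples. The paper does this deterministically in Lemma~\ref{lem:exists_separating_query}:
\begin{itemize}
\item Set aside the $m=O(n)$ points that lie in at least $8/10$ of the consistent set $H_T$. By double counting, $\frac{8}{10}|H_T|\,m\le n|H_T|$.
\item Walk along threshold queries $q_0^{X_m},\dots,q_{N-m}^{X_m}$, whose means go from $\approx 0$ to $\approx 1$.
\item Consecutive queries differ on one non-common point, so a natural mechanism's answer can change only on samples containing that point.
\item At the step where the majority answer crosses $1/2$, one of the two queries is separating: every answer is consistent with at most $9/10$ of $H_T$.
\end{itemize}
This gives worst-case shrinkage of $H_T$ with no entropy potential. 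Your ``threshold on a hash, bisection'' remark gestures at this, but you leave open exactly the two essential pieces: the walk across means, and the exclusion of common points (without it, a single step could flip the answer on almost all of $H_T$).

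\textbf{Stage~2.} This stage fails more basically, for two reasons.

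First, the mechanism is not forced to be natural. As written, $q(x)=\RO(h(x,r))$ gives the attacker no control over query values; it needs masked queries $\tilde q\oplus y$. Nothing prevents $\MMM$ from computing $h(x,r)$ for out-of-sample $x$ and reading those bits. Over your polynomial-size domain, a mechanism with running time $t\ge|\XXX|$ evaluates $q$ everywhere, so consecutive queries in the walk are distinguishable on every sample. Your claim that reading non-sample positions ``gives nothing useful'' is precisely what fails. The paper hides out-of-sample locations by embedding secret offsets $m_i\in[R]$ in the support: $\PPP$ is uniform over $\{(i,m_i)\}\subset[N]\times[R]$, with $R$ growing with $t$. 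It then proves, by a union bound over samples, masked-query sequences and pointer sequences, that $\MMM$ never reads an out-of-sample mask bit (Claim~\ref{claim:no_oos_mask_query}).

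Second, ``unread bits are uniform and independent of the view'' is only a distributional statement. The halving argument needs naturalness for each fixed realization; Section~\ref{sec:naive} explains why the distributional version breaks. The paper gets realization-wise naturalness from paths $(q_j,v\oplus q_j)$, along which the masked query is constant. This yields a separating pair only for a $1/(N+1)$ fraction of masks (Lemma~\ref{lemma:seminatural}). To amplify, it needs $E=O(N\log\frac1\delta)$ experiments per round on fresh oracle blocks, a lying-oracle simulation, and freshness to sever dependence on earlier rounds (the pitfalls of Section~\ref{sec:Pitfalls}). Your ``hybrid argument with $\tilde O(1)$ loss'' addresses none of this.
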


\begin{remark}\label{rem:13}
An appealing aspect of the attacks presented by prior works was that they constructed a {\em single} attacker that defeats {\em every} mechanism (albeit requiring $n^2$ rounds). Our order of quantifiers is necessarily weaker: for every deterministic mechanism, we present a different attacker. This is unavoidable to beat the quadratic query bound. Indeed, if there were a single attacker that defeated every deterministic algorithm in $\tilde{O}(n)$ rounds, then by the law of total probability, this attacker would also defeat randomized algorithms, which is impossible as randomized algorithms can support $\tilde{\Omega}(n^2)$ queries.
\end{remark}

\paragraph{Extension to mechanisms with limited randomness.} Our techniques also extend to algorithms equipped with a limited amount of private randomness. Specifically, if a mechanism has access to $r$ random bits, we can conceptually view these bits as an extension of the empirical sample itself. By suitably adapting our attack, we show that it is possible to force such a mechanism to fail after $k = \tilde{O}(n + r)$ adaptive queries. This further clarifies the role of randomness in adaptive data analysis (at least in the RO model): to answer $\tilde{\Omega}(n^2)$ adaptive queries, an algorithm must fundamentally rely on a substantial amount of internal randomness. Any attempt to substitute this with deterministic operations or insufficient randomness ($r \ll n^2$) leaves the mechanism  vulnerable to rapid overfitting. We remark that $\tilde \Theta(n^2)$ random bits are known to be sufficient, see for example~\cite{BassilyNSSSU16,Blanc23}.

\begin{theorem}[informal]\label{thm:secondIntro} In the random oracle model, for every mechanism $\mathcal{M}$ that uses at most $r$ random bits, and every sample size $n$, there is an attacker $\mathcal{A}$ that with high probability forces $\mathcal{M}$ to fail after just $k = \tilde{O}(n + r)$ adaptive queries.
\end{theorem}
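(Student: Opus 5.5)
Since the mechanism's $r$ random bits are fixed once at the start, I would reduce Theorem~\ref{thm:secondIntro} to the deterministic case of Theorem~\ref{thm:mainIntro}. Fix $\MMM$ and write its random string as $\rho\in\{0,1\}^r$. For each fixed $\rho$, the mechanism $\MMM_\rho$ is deterministic. The key conceptual step is to view $(S,\rho)$ as an \emph{extended sample}. We have $n$ i.i.d.\ points from $\PPP$, plus $r$ uniform bits. These bits can be packed into $r$ additional "virtual sample points" drawn from an auxiliary distribution that the attacker knows exactly, namely uniform bits. The mechanism is then a deterministic function of this extended input of size $n+r$ together with the random oracle. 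The goal is to rerun the deterministic attack with $n$ replaced by $n+r$ and obtain failure after $\tilde O(n+r)$ queries.

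Concretely, I would revisit the deterministic attack in two stages. First comes the natural-mechanism attack: the attacker encodes in its queries random-oracle-derived functions and learns, from the deterministic answers, about one bit of information per round about the unknown sample. For a deterministic mechanism, every answer is a function of the hidden input. A counting or entropy argument should then show that, as long as the mechanism stays accurate, the attacker's posterior on the hidden input shrinks. After $\tilde O(\text{(entropy of hidden input)})$ rounds, the attacker effectively pins down the relevant part of the input and asks a query that overfits it. Before the final step the hidden input contains $\tilde O(n)$ bits of relevant sample information; now it additionally contains the $r$ bits of $\rho$, so the entropy budget becomes $\tilde O(n+r)$. Second comes the RO extension: the simulation argument that handles mechanisms which read the oracle at arbitrary places treats those reads as functions of the hidden input. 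It should therefore go through unchanged once the hidden input is $(S,\rho)$.

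The plan in order is as follows. (i) Define the extended hidden input $Z=(S,\rho)$ and restate the deterministic attack as operating against a deterministic map $Z\mapsto$ answers. (ii) Verify that the attack never uses that $Z$ consists of i.i.d.\ samples from $\PPP$, except to define the failure event, namely accuracy with respect to $\PPP$. The $\rho$ coordinates should not affect the query values, because queries are defined on $\XXX$. (iii) Redo the potential or entropy accounting with $\log|\text{support of } Z|$ contributions, yielding $\tilde O(n+r)$ rounds. (iv) Average over $\rho$ to obtain a high-probability statement over both the sample and the mechanism's coins.

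The main obstacle is step (iv) combined with the quantifier issue from Remark~\ref{rem:13}. The attacker for a deterministic $\MMM$ may depend on $\MMM$, but here it cannot depend on $\rho$, which is private. So I cannot simply pick the deterministic attacker for $\MMM_\rho$. Instead, the attacker must maintain a posterior over $\rho$ jointly with the sample, and the attack must be "uniform" over the family $\{\MMM_\rho\}$ of $2^r$ deterministic mechanisms. I would handle this by making the deterministic attack operate on a known finite set of candidate mechanisms, or equivalently a candidate set of hidden inputs, and showing that its round complexity grows only logarithmically in the number of candidates. With $2^r$ candidates, this logarithmic dependence contributes the additive $r$ term.
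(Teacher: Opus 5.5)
Your proposal is correct and is essentially the paper's argument: the paper also treats the $r$ coins as part of the hidden state and runs the halving attack on the set of consistent pairs $(S,\rho)$, which has size at most $N^n2^r$; this gives $O(n\log N+r)$ rounds, and common points counted over these states still number only $O(n)$, since $\rho$ contributes no domain points. The one thing you gloss over is that the RO part does not go through literally unchanged: the union bound in the no-out-of-sample-query claim and the analyst's exhaustive simulations now range over all $(S,\rho)$, so the paper enlarges $R$ and $L_{\mathrm{prev}}$ by a $2^r$ factor, which adds $O(r)$ bits to the pointers.
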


\paragraph{Removing the RO assumption.} 
We also prove analogous statements for Theorems~\ref{thm:mainIntro} and~\ref{thm:secondIntro} in the \emph{plain} model, without assuming a random oracle or other structural assumptions. The price is that the result is purely information-theoretic in the sense that it uses a domain of tower-type size and uses queries whose descriptions is of enormous absolute size. 

\begin{theorem}[informal]\label{thm:plain-informal}
For every deterministic mechanism $\M$ there exist a domain, a target distribution, and a computationally unbounded analyst that force $\M$ to err after $k=\tilde{O}(n)$ adaptively chosen queries, with probability $1-o(1)$ over the sample.
\end{theorem}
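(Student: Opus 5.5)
The approach is to derandomize the random-oracle result (Theorem~\ref{thm:mainIntro}) by a probabilistic-method argument. Fix a deterministic mechanism $\M$. The RO attack runs over a domain $\XXX$ and target distribution $\PPP$. In it, the attacker and the mechanism both access a public random string $\RO$, and $\M$ fails within $k=\tilde O(n)$ rounds with probability $1-o(1)$ over the joint choice of the sample $S$ and of $\RO$.

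\textbf{Step 1: absorb the oracle into the domain.} The idea is to let the queries themselves carry the randomness the oracle used to provide. The new domain is built in levels: a base domain $\XXX_0$, and then, at each level, queries (and hence domain elements) whose descriptions encode (pseudo)random tables indexed by objects of the previous level. This nesting of function spaces over function spaces is why the size becomes tower-type. Concretely, I would fix a huge family $\mathcal{F}$ of candidate ``oracle strings'' restricted to the finitely many positions $\M$ could possibly read during a $k$-round interaction on samples of size $n$. Because $\M$ is deterministic and $k,n$ are fixed, the set of reachable positions is finite. Its size is bounded in terms of $|\XXX|$ and of the number of possible query descriptions, which is what forces the growth.

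\textbf{Step 2: averaging.} The RO theorem gives
\[
\Pr_{\RO,S}[\M \text{ fails}] \ge 1-o(1).
\]
By averaging over $\RO$, there exists a fixed string $\RO^*$ with $\Pr_S[\M^{\RO^*}\text{ fails}]\ge 1-o(1)$. The key point is that this fixing happens \emph{after} $\M$ is fixed. This is exactly the order of quantifiers in Theorem~\ref{thm:plain-informal}, and it is also why Remark~\ref{rem:13} is respected: the domain, distribution and analyst all depend on $\M$.

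\textbf{Step 3: hardwire $\RO^*$.} In the plain model, $\M$ has no oracle. So I need to define the plain-model instance so that $\M$'s view is faithfully simulated. The difficulty is that the RO proof presumably uses that $\M$ learns nothing about unread oracle bits, whereas the plain-model $\M$ can do anything. My fix is to put the content of $\RO^*$ inside the query descriptions and domain elements, so that the attack's hardness argument never needs secrecy from $\M$. The argument then needs only that the statistic the attacker exploits is independent of $S$, which holds because $\RO^*$ is chosen independently of $S$ in Step 2. The analyst, being unbounded and knowing $\M$, simulates the RO attacker with $\RO^*$.

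\textbf{Main obstacle.} The hardest step is Step 3. The RO analysis likely argues that $\M$ cannot adapt to oracle values it has not queried, so a fixed $\RO^*$ that is known to, or encoded for, $\M$ could break the attack. What I would try first is to replace the RO by a random function drawn from a family so large that, for the fixed deterministic $\M$, a union bound over all of $\M$'s possible transcripts (finitely many, since $n$, $k$ and the domain are finite) shows a fixed choice works for all transcripts simultaneously. This union bound is exactly what requires domain size and query descriptions of tower-type magnitude. The remaining check is that the $o(1)$ failure probability over $S$ survives the union bound, which requires the per-transcript failure of the attack to be exponentially small in the number of transcripts.
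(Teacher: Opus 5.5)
Your Steps 2--3 have a genuine gap: the random-oracle theorem does not apply to a plain-model mechanism, so averaging over $\RO$ tells you nothing about $\MMM$. The RO lower bound (Lemma~\ref{lem:masked_RO}, Theorem~\ref{thm:RO_main}) covers only mechanisms that make at most $t$ oracle reads per round. The step that makes such a mechanism behave naturally is Claim~\ref{claim:no_oos_mask_query}: since $R\ge \delta^{-1}N^n2^{Nk}E^k kt$, the $kt$ locations the mechanism reads almost surely miss every out-of-sample mask location $p_j+(i-1)R+m_i$.

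In the plain model there is no oracle. Once you hardwire $\RO^*$ into the queries or domain elements, $\MMM$ receives the query itself and so sees the whole block, including every out-of-sample mask bit. Equivalently, simulating a plain $\MMM$ as an oracle mechanism needs $t\ge NR$, which contradicts $R\gtrsim kt$; the paper stresses after Theorem~\ref{thm:RO_main} that the mechanism cannot evaluate the full truth table. So your averaging in Step~2 is valid, but only for the time-bounded oracle mechanisms the theorem covers.

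Your remedy does not close the gap. The halving attack (Lemma~\ref{lem:exists_separating_query}) does not exploit a ``statistic independent of $S$''. It needs that, for each fixed realization of everything else, changing an out-of-sample query value leaves the mechanism's answer unchanged. That per-realization naturalness is exactly what Section~\ref{sec:naive} explains encryption-style hiding cannot provide. A union bound over transcripts cannot supply it either: enlarging the family of hidden tables also enlarges the set of inputs $\MMM$ can react to. So your requirement that the per-transcript failure be exponentially small in the number of transcripts is circular.

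The missing idea is a \emph{deterministic} way to force natural behavior; the paper uses Ramsey theory.
\begin{itemize}
\item It colors each encoding $(S,\vec p_1,\dots,\vec p_j)\in[M]^{r_{\max}}$ by $\MMM$'s discretized answers $\one\{a\ge\tfrac12\}$.
\item By the order-type Ramsey lemma (Lemma~\ref{lem:plain-ordramsey}) it extracts a set $D$ of size $2Nk$ on which these bits depend only on weak order type (Theorem~\ref{thm:plain-invariance}).
\item It splits $D$ into adjacent twin pairs $(x^i_{j,a},x^i_{j,b})$. Support tuples are built from the $a$-points, and the product domain $[M]^k$ ensures round $j$ touches only its own pairs.
\end{itemize}
Swapping the twin in slot $i$ of a round-$j$ query preserves weak order type unless $x^i_{j,a}\in S$ (Lemma~\ref{lem:plain-keep}). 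Hence $\MMM$ is bit-natural for every realization, and the halving attack of Section~\ref{sec:natural} applies directly, with none of the amplification or cross-round dependence issues. The tower-type domain size comes from the Ramsey numbers, not from nesting function spaces.
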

 
\subsection{Technical Overview for Theorem~\ref{thm:mainIntro}}

\subsubsection{A negative result for natural mechanisms}\label{sec:introNatural}
We first describe an attack against {\em natural} mechanisms (recall that a natural mechanism does not observe the full description of the queries it needs to answer; only their evaluations on the empirical sample points). Denote $N=\poly(n)$, let the target distribution be uniform over the domain $[N]$, and fix a {natural} deterministic mechanism $\MMM$. We construct a computationally unbounded attacker that throughout the execution maintains a set of ``surviving datasets'',  denoted $H_T$, containing all possible datasets of size $n$ that are consistent with the transcript $T$ of the interaction with $\MMM$ so far.
Initially, $H_T$ contains all possible datasets of size $n$ over $[N]$, so $|H_T|=N^n$. The attacker's goal is to rapidly shrink $H_T$ by issuing a sequence of adaptively chosen queries. We say that a query is {\em separating} if every valid answer (which must be close to the true distributional expectation, i.e., to the average over $[N]$) is forced to be inconsistent with a constant fraction of the datasets currently in $H_T$. Intuitively (but somewhat inaccurately) after $k = O(n \log N)$ such separating queries, the analyst will have learned the dataset, allowing it to present a query that highly overfits to the data, thereby forcing $\MMM$ to fail.

We now explain how the attacker finds such a separating query. For simplicity, let us focus on the first round of the interaction (so currently $H_T$ contains all datasets). The separating query our attacker chooses will be a {\em threshold query} over $N$, i.e., a query that evaluates to 0 on a suffix of the domain $[N]$. 
To this end, consider an $|H_T|\times (N+1)$ table, where cell $(i,j)$ contains the answer $\MMM$ would have returned had its dataset been $S_i$ and had it been presented with the threshold query $q_j$, defined as $q_j(x)=\mathds{1}_{\{x\leq j\}}$. See Figure~\ref{fig:naturalMatrix} for an illustration of how this table might look like.

\begin{figure}[htpb]
    \captionsetup{font={small, sf}, labelfont=bf}

    \hfill
    \begin{minipage}[c]{0.222\textwidth}
        \caption{Example of an $|H_T|{\times}(N{+}1)$ table. In this example, when the input dataset is $S_2$ and the query is $q_j$, the mechanism $\MMM$ returns the answer 0.74.}
        \label{fig:naturalMatrix}

        \vspace{40px}
        
    \end{minipage}\hfill 
\begin{minipage}[c]{0.65\textwidth}
  
    \centering
    {\small
    \newcolumntype{C}[1]{>{\centering\arraybackslash}m{#1}}
    \renewcommand{\arraystretch}{1.25} 

\begin{tabular}{cc|c|c|c|c|c|c|}
        
\cline{3-8} 
        
\multirow{11}{*}{\rotatebox[origin=c]{90}{\textbf{Datasets in $\text{H}_{\text{T}}$}}} 
        & $S_{|H_T|}$  & $\approx 0$  & \dots\dots  & 0.71  & 0.71 & \dots  &  $\approx 1$  \\ \cline{3-8}
        & $S_{|H_T|\text{-}1}$  & $\approx 0$  & \dots\dots  & 0.77  & 0.77 & \dots  &  $\approx 1$  \\ \cline{3-8}
        & $S_{|H_T|\text{-}2}$  & $\approx 0$  & \dots\dots  & 0.75  & 0.75 & \dots  &  $\approx 1$  \\ \cline{3-8}
        & $S_{|H_T|\text{-}3}$  & $\approx 0$  & \dots\dots  & 0.69  & {\bf 0.76} &\dots  &  $\approx 1$  \\ \cline{3-8}
        & $S_{|H_T|\text{-}4}$  & $\approx 0$  & \dots\dots  & 0.68  & 0.68 & \dots  &  $\approx 1$  \\ \cline{3-8}
        & \vdots  & \vdots  & \dots\dots  & \vdots  & \vdots & \vdots  &  $\approx 1$  \\ \cline{3-8}
        & $S_5$  & $\approx 0$  & \dots\dots  & 0.72  & 0.72 & \dots  &  $\approx 1$  \\ \cline{3-8}
        & $S_4$  & $\approx 0$  & \dots\dots  & 0.81  & 0.81 & \dots &  $\approx 1$  \\ \cline{3-8}
        & $S_3$  & $\approx 0$  & \dots\dots  & 0.71  & {\bf 0.82} & \dots  &  $\approx 1$  \\ \cline{3-8}
        & $S_2$  & $\approx 0$  & \dots\dots  & 0.74  & 0.74 & \dots  &  $\approx 1$  \\ \cline{3-8}
        & $S_1$ & $\approx 0$ & \dots\dots & 0.68 & 0.68 & \dots &  $\approx 1$ \\ \cline{3-8}
        
\multicolumn{2}{c}{} & 
        \multicolumn{1}{c}{$q_0$} & \multicolumn{1}{c}{\dots\dots} & \multicolumn{1}{c}{$q_{j\text{-}1}$} & 
        \multicolumn{1}{c}{\phantom{-}$q_j$\phantom{1}} & \multicolumn{1}{c}{\dots} & 
        \multicolumn{1}{c}{$q_N$} \\[0.5ex] 

\multicolumn{2}{c}{} & \multicolumn{6}{c}{\textbf{Threshold queries}} \\ 
        
    \end{tabular}
    }
\end{minipage}
\end{figure}

Note that the attacker can compute this table before selecting its first query (by simulating $\MMM$). Also, observe that by the accuracy guarantees of $\MMM$, it must be that the first column is essentially $\vec{0}$ and the last column is essentially $\vec{1}$. The key observation regarding this table is that when going from one column $j-1$ to the next column $j$, only a small fraction of the coordinates might change. To see this, recall that the threshold queries $q_{j-1}$ and $q_j$ differ only in the $j$th point. Thus, as $\MMM$ is natural, it cannot notice this change unless its input dataset contains the point $j$. The fraction of datasets containing the $j$th point is at most $n/N$, which is small. So far we have established that:
\begin{enumerate}[itemsep=0px]
    \item[(1)] The left column is essentially all zeros.
    \item[(2)] The right column is essentially all ones.
    \item[(3)] Only a small fraction of the coordinates can change between every pair of consecutive columns.
\end{enumerate}
These three properties together imply the existence of a ``non-uniform'' column, meaning a column such that no value appears in, say, more than 90\% of its coordinates. To see this, let $j$ be the {\em first} column where more than 50\% of the coordinates are strictly bigger than 0.5. So in column $j-1$ at least $50\%$ of the coordinates are at most 0.5. As the columns change slowly, at least $(\frac{1}{2}-\frac{n}{N})$ of the coordinates in column $j$ must also be at most 0.5. So column $j$ is ``non-uniform'', and the corresponding query $q_j$ is {\em separating}: no matter what answer the attacker gets after querying $q_j$, this answer allows it to rule out $\approx 1/2$ of the datasets in $H_T$.

In the next round, in order to find the next separating query, the attacker constructs a new table where the y-axis contains only datasets that were not ruled out in previous rounds. We might also need to exclude a small number of points from the x-axis, corresponding to points that appear in ``too many'' of the surviving datasets. Otherwise, keeping such ``common'' points active might invalidate our key observation about the columns changing slowly. We show that there could be at most $O(n)$ such common points, and that excluding them has almost no effect on the calculation. Note that the values in the next table could be different than the values of the previous table, as the values returned by $\MMM$ could be history dependent.

Intuitively, by sequentially applying such separating queries, the attacker isolates the true dataset in at most $k = O(n \log N)$ rounds (unless the mechanism errs before that, in which case the attacker still wins).

\subsubsection{A na\"ive attempt at lifting this barrier to general deterministic mechanisms}\label{sec:naive}
Extending our impossibility result to general deterministic (not necessarily natural) mechanisms requires overcoming their ability to evaluate the full query description. The standard path for trying to achieve this would be for the attacker to encrypt the queries such that the mechanism could only decrypt the values corresponding to its empirical sample. More specifically, the target distribution would no longer be uniform over $[N]$. Instead, we first generate $N$ encryption keys $s_1,\dots,s_N$, and then define the target distribution to be uniform over the pairs $(j,s_j)$. Now, instead of directly issuing a query $q:[N]\rightarrow\{0,1\}$, the attacker encrypts $c_j={\rm Encrypt}(q(j),s_j)$ for every $j\in[N]$, and gives the vector of all encryptions $\vec{c}$ to the mechanism. The mechanism has the secret keys for points in its empirical sample, and can hence learn the value of $q$ restricted to its sample, but learns nothing on $q$ outside of its sample. Thus, the mechanism is effectively forced to behave like a natural mechanism. This strategy allowed the prior works of \cite{HardtU14,SteinkeU15,NissimSSSU18} to lift their impossibility result from (randomized) natural mechanisms to general mechanisms.

However, our attack against natural \emph{deterministic} mechanisms (defeating them in just $\tilde{O}(n)$ rounds) is fundamentally different from attacks presented in prior works. As a result, the strategy outlined above breaks down in our setting. Intuitively, the reason is that the mechanism could potentially use the ciphertexts and the (in-sample) keys as a source of randomness, undermining our assumption that the mechanism is deterministic. 
One might hope that the argument could still go through, because these keys and ciphertexts are known to the attacker, and are thus more akin to \emph{public} randomness rather than private randomness. Indeed, our attack would work seamlessly against a mechanism with public randomness, provided that it is natural for every realization of the public randomness. But this is not what the encryption-based strategy outlined above guarantees: changing an \emph{out-of-sample} query value, without changing the corresponding encryption key, alters the resulting ciphertext. This changes the description of the query, which can subsequently affect the mechanism's output. 
In other words, this encryption-based strategy only guarantees that the \emph{distribution} of answers given by the mechanism must be independent of out-of-sample query values \emph{over the randomness of generating the keys and encryptions}. It does not guarantee that the mechanism is natural for every fixed realization of the keys and ciphertexts, which is what is needed for the arguments from Section~\ref{sec:introNatural} to go through.

\subsubsection{Towards an impossibility result for non-natural mechanisms}\label{sec:hypothetical}

To overcome the challenge outlined above, we take a different route for extending our impossibility result also to non-natural mechanisms. We still use encryption (more precisely, {\em one time pads} or {\em random masks}) to hide out-of-sample query values. But we will argue directly about the behavior of the deterministic mechanism when given {\em specific realizations} of the mask, rather than arguing about what the mechanism ``learns'' or ``cannot learn'' when these masks are random.

To simplify the presentation, let us not embed these masks in the underlying distribution just yet. So the target distribution is still uniform over $[N]$. Now, imagine a hypothetical model where in every round the attacker specifies a query $q\in\{0,1\}^N$ and a mask $y\in\{0,1\}^N$, and the mechanism obtains $q\oplus y\in\{0,1\}^N$ (this corresponds to the encrypted truth-table of $q$ using $y$), and also obtains $y|_S$, which is the vector $y$ restricted to the coordinates in the empirical sample $S$ held by the mechanism. Conceptually, we can think of such a mechanism as being ``partially natural'', as it gets all the coordinates of $q\oplus y$ while only getting the in-sample coordinates of $y$.

\begin{remark}
This hypothetical model gives the attacker the ability to choose the masks $y$ adaptively during every round, rather than committing and  embedding them in target distribution before the game begins.
\end{remark}

The goal of the attacker remains the same: it maintains the set $H_T$ of ``surviving datasets'' and in every round it aims to find a {\em separating} masked-threshold query that will necessarily rule out a constant fraction of the  datasets in $H_T$. To do this, the attacker simulates $\MMM$ on every possible dataset $S\in H_T$ with every possible query-mask pair $q,y$ (the inputs of the mechanism in this case are the dataset $S$, the encrypted threshold query $q\oplus y$, and the in-sample mask values $y|_S$). See Figure~\ref{fig:maskedMatrix} for an illustration.

\begin{figure}[htpb]

    \captionsetup{font={small, sf}, labelfont=bf}
    \hfill
    \begin{minipage}[c]{0.27\textwidth}
        \caption{Extending the table from Figure~\ref{fig:naturalMatrix} into a 3D cube to account for the mask vector $y$. For every choice of $S,q,y$, the cell $(S,q,y)$ in this cube specifies the answer $\MMM$ returns when its input dataset is $S$ and when it is given $(q{\oplus}y,\, y|_S)$ as the masked query. The proportions in the figure might be misleading, as the query axis is actually much smaller than the other two axes (contains $O(N)$ points rather than $\exp(n)$ points).}
        \label{fig:maskedMatrix}
    \end{minipage}\begin{minipage}[c]{0.67\textwidth}

    \hfill
    \begin{tikzpicture}[
x={(0.9cm, 0cm)},  y={(0cm, 1.1cm)}, 
        z={(0.5cm, 0.35cm)}, 
        scale=1, 
        every node/.style={font=\small}
    ]

\def\W{6} \def\H{4} \def\D{4} 

\draw[gray, dashed] (0,0,\D) -- (\W,0,\D);
    \draw[gray, dashed] (0,0,\D) -- (0,\H,\D);
    \draw[gray, dashed] (0,0,\D) -- (0,0,0);

\filldraw[fill=blue!10, draw=blue!80, thick, opacity=0.7] 
        (0,0,0) -- (0,\H,0) -- (0,\H,\D) -- (0,0,\D) -- cycle;
    
\foreach \y in {0.5, 1.5, 2.5, 3.5} {
        \foreach \z in {0.5, 1.5, 2.5, 3.5} {
            \node[blue!80, font=\footnotesize] at (0, \y, \z) {$0$};
        }
    }
\node[blue!90, anchor=south, font=\bfseries] at (-0.1, \H, \D/2) {$q_0$};

\def\sq{2}   \def\sz{2.5} 

\draw[thick, cyan, -latex] (\sq, 0, \sz) -- (\sq, \H+1.2, \sz) 
        node[above] {$(q_j, y)\quad$};
\filldraw[cyan] (\sq, 0, \sz) circle (2pt);
    \filldraw[cyan] (\sq, \H, \sz) circle (2pt);
    
\def\sqq{3.5} \def\szz{1}    

    \draw[thick, orange, -latex] (\sqq, 0, \szz) -- (\sqq, \H+1.4, \szz) 
        node[above] {$\qquad(q_{j+1}, y')$};
    \filldraw[orange] (\sqq, 0, \szz) circle (2pt);
    \filldraw[orange] (\sqq, \H, \szz) circle (2pt);

\filldraw[fill=red!10, draw=red!80, thick, opacity=0.7] 
        (\W,0,0) -- (\W,\H,0) -- (\W,\H,\D) -- (\W,0,\D) -- cycle;
        
\foreach \y in {0.5, 1.5, 2.5, 3.5} {
        \foreach \z in {0.5, 1.5, 2.5, 3.5} {
            \node[red!80, font=\footnotesize] at (\W, \y, \z) {$1$};
        }
    }
\node[red!90, anchor=south, font=\bfseries] at (\W-0.1, \H, \D/2) {$q_N$};

\draw[thick] (0,0,0) -- (\W,0,0) -- (\W,\H,0) -- (0,\H,0) -- cycle;
\draw[thick] (0,\H,0) -- (0,\H,\D) -- (\W,\H,\D) -- (\W,\H,0);
\draw[thick] (\W,0,0) -- (\W,0,\D) -- (\W,\H,\D);

\draw[-latex, thick] (0, -0.5, 0) -- (\W, -0.5, 0) 
        node[midway, below] {Threshold query axis ($q$)};
    
\draw[-latex, thick] (-0.5, 0, 0) -- (-0.5, \H, 0) 
        node[midway, above, sloped, align=center] {Dataset axis ($S$)};
    
\draw[-latex, thick] (\W+0.6, 0, 0) -- (\W+0.6, 0, \D) 
        node[midway, below, sloped] {Mask axis ($y$)};

    \end{tikzpicture}
    \end{minipage}
    \hfill
\end{figure}

As before, the attacker can compute this cube before selecting its query. Furthermore, the left face of the cube is essentially all zeroes (corresponding to the query $q_0\equiv0$) and the right face of the cube is essentially all ones (corresponding to the query $q_N\equiv1$). 
Note that fixing the query $q$ and the mask $y$, while leaving the dataset $S$ ``free'', can be visualized as a vertical pin in Figure~\ref{fig:maskedMatrix}. The attacker's goal is to find such a vertical pin with ``non-uniform'' values along it, in the sense that no value appears in, say, more than 90\% of its coordinates. To show that such a non-uniform pin exists, let us introduce the following notation: fix values $j\in[N]$ and $y\in\{0,1\}^N$ defining a pin $(q_j,y)$. We say that its {\em neighboring pin} is the pin $(q_{j+1},y')$, where $y'=y\oplus e_{j+1}$. That is, the neighboring pin of $(q_j,y)$ is obtained by advancing to the next threshold query $q_{j+1}$ and by flipping the $(j+1)$th coordinate in $y$. See the blue and orange pins in Figure~\ref{fig:maskedMatrix} for an illustration. 

The key observation regarding these neighboring pins is that unless the input dataset contains the point $j+1$, then the value returned by $\MMM$ cannot change when going from pin $(q_j,y)$ to pin $(q_{j+1},y')$, because its inputs remain exactly the same (because $q_j\oplus y=q_{j+1}\oplus y'$). Thus, the vast majority of the coordinates of these two pins are identical.

Now fix some value for $y$, and consider the pin $(q_0,y)$, which is a vertical line on the left face of the cube. Consider a {\em path of neighboring pins} where we start from $(q_0,y)$ and iteratively switch to the next neighboring pin until we reach the right face of the cube. This path behaves similarly to our 2D table illustrated in Figure~\ref{fig:naturalMatrix}: (1) The initial pin is essentially all zeros; (2) the final pin is essentially all ones; and (3) only a small fraction of the coordinates can change between every pair of consecutive pins along this path. Thus, via similar arguments, there must be a ``non-uniform'' pin along this path.

The fact that such a non-uniform pin exists suffices for the attacker to win in this hypothetical model: in every iteration the attacker constructs a 3D cube, finds a non-uniform pin and queries it, thereby shrinking the set of surviving datasets multiplicatively. However, as we alluded to before, we will later want to embed the mask $y$ (or multiple masks) in the underlying distribution, so the attacker will not have the freedom to select arbitrary masks at different rounds of the interaction. Luckily, as we now explain, a noticeable fraction of the $y$'s are {\em good} in the sense that there is a non-uniform pin with this specific $y$. To see this, observe that since {\em every} path (starting from every $y\in\{0,1\}^N$) has a non-uniform pin, and as there are at most $N+1$ queries, then there must exist a query $q$ that captures at least $\frac{1}{N+1}$ fraction of these non-uniform pins. That is, there must exist a query $q$ such that for at least $\frac{1}{N+1}$-fraction of the $y$'s it holds that $(q,y)$ is non-uniform. In particular, at least $\frac{1}{N+1}$-fraction of the $y$'s are {\em good} in the sense that there is a query $q$ such that the pin $(q,y)$ is non-uniform.

\subsubsection{Pitfalls for embedding masks in the target distribution}\label{sec:Pitfalls}

Now that we know that a random mask $y\in\{0,1\}^N$ is {\em good} with probability at least $\frac{1}{N+1}$, can we embed such random masks in the target distribution, as we contemplated in Section~\ref{sec:naive}? We now list several pitfalls that, surprisingly, make this very unclear.

\paragraph{Pitfall~\#1: Boosting the success probability of the first round.} 
A success probability of $\frac{1}{N+1}$ per round is insufficient to break the mechanism in $O(n \log N)$ queries, as we need roughly $n\log N$ {\em successful} rounds to isolate the input dataset. To emphasize Pitfall~\#1, let us now focus only on the first round and on amplifying its success probability. Since every random mask is good with probability $\frac{1}{N+1}$, one might try to embed $r=\poly(N)$ random masks in the target distribution, hoping that one of them will be good. Specifically, we first sample $r$ random masks $y_1,\dots,y_r\in\{0,1\}^N$ uniformly, and then define the target distribution to be uniform over the tuples $(i,y_1[i],\dots,y_r[i])$. Now, before asking its first query, the attacker constructs its 3D cube, where the mask axis (the depth axis in Figure~\ref{fig:maskedMatrix}) is restricted to the vectors $y_1,\dots,y_r$ (because these are the masks that are ``available'' for the attacker to choose from). If there is a non-uniform pin in this cube, then the attacker queries it and succeeds in shrinking $H_T$. But what is the probability of this event? Is it close to 1? 

By the analysis in Section~\ref{sec:hypothetical}, we know that for every fixture of the vectors $y_1,\dots,y_{\ell-1},y_{\ell+1},\dots,y_r$, with probability at least $\frac{1}{N+1}$ over sampling $y_{\ell}$ it holds that $y_{\ell}$ is good. The issue is that the mechanism $\MMM$ gets {\em all} of these $r$ vectors as an input (restricted to its sample), and can base its output on all of them. Thus, it is unclear that the event that $y_1$ is good is independent of the event that $y_2$ is good. In principle, it could be that either all of $y_1,\dots,y_r$ are good, or none of them is good, and so the probability that there exists a good mask remains $\frac{1}{N+1}$.

\begin{example}
To illustrate this issue, consider $r$ random variables $Z_1, \dots, Z_r$ distributed independently and uniformly over $\{0, 1, 2, \dots, 6\}$, and suppose that $Z_{\ell}$ is good if $\sum_{i=1}^r Z_i \pmod 7=0$. If we fix all other variables and consider a single $Z_{\ell}$, the residue equals $0$ with probability exactly $1/7$. While this marginal probability holds for all $\ell$, the global constraint rigidly couples the variables. No matter how many variables you inspect, i.e., no matter what $r$ is, it does not yield any overall probability amplification. 
\end{example}

\paragraph{Pitfall~\#2: Success probability of the second round.} Suppose that we do get amplification for the success probability in the first round. What about the second round? Is it even true that it succeeds with probability at least $\frac{1}{N+1}$? The issue now is that the attacker chooses the first query based on the constructed 3D cube, which depends on all the mask vectors $y_1,\dots,y_r$ since its values are determined by simulating $\MMM$ with different portions of these masks. Then, as $\MMM$ might be history dependent, in the second round its behavior might depend on the first query, and thus depend on the mask vectors. So $y_1,\dots,y_r$ are no longer independent of the cube in the second round. Is it still true that each of them is good with noticeable probability? If so, do we still get amplification by having $r$ masks?

\subsubsection{Resolution using a random oracle and dynamic pointers}

To bypass these pitfalls, we turn to the Random Oracle (RO) model, where the attacker and mechanism have access to an oracle $\RO:\mathbb{N}\to\{0,1\}$ whose values are uniformly random and independent bits. The conceptual benefit we get from the random oracle is that we will not give the mechanism the actual mask bits upfront.
Instead, we will specify these masks dynamically using different pointers to the random oracle. To do this, the underlying domain is extended to $[N] \times [R]$, where $R$ is a large integer, and each sample point $x_i$ is accompanied by a uniformly random pointer offset $m_i \in [R]$. Thus, at the start of the game, the mechanism receives the pointers $m_i$ corresponding to the points $x_i$ in its empirical sample. In each round $v$, the analyst dynamically determines an additional pointer $p_v$. The mask for a domain element $i$ in this round is then derived from the oracle as $\mathcal{O}(p_v + m_i)$. (For technical reasons, in the actual construction the positions in the RO are determined somewhat differently.) This construction guarantees that:
\begin{enumerate}
    \item Because the mechanism only receives the offset $m_i$ for its sample points, it cannot easily guess the locations of out-of-sample masks. This enforces the behavior we needed in the analysis of our hypothetical model in Section~\ref{sec:hypothetical}.
    \item The additional offset $p_v$ acts as a ``refresh button'', allowing the attacker to test and choose pointers to previously unread positions of the Random Oracle. This allows us to bypass our two pitfalls: 
    \begin{itemize}
        \item Within each round $v$, the attacker could try out several values for the offset $p_v$, simulating the behavior of $\MMM$ with each one of them. The different values for $p_v$ guarantee strict independent repetitions, bypassing Pitfall~\#1 and allowing the attacker to amplify the probability of finding a separating query.
        \item Across different rounds, the dynamic pointer completely severs historical dependency, guaranteeing that the masks for the new round are drawn perfectly fresh, entirely independent of the transcript so far. So this also bypasses Pitfall~\#2. 
    \end{itemize}
\end{enumerate}

\subsubsection{Challenges in the random oracle model}
To make the aforementioned guarantees hold, the random offsets $m_i$ and the dynamic pointers $p$ must be drawn from a sufficiently large range. But how large? If we are not careful, analyzing the interactions between a computationally unbounded analyst and an adaptive mechanism could lead to a catastrophic blowup in the required range. Specifically, suppose that the mechanism runs in time $t$. As the mechanism can be adaptive, it can potentially explore a decision tree of oracle queries of size $2^t$ per simulation. Na\"ively union-bounding over all possible adaptive paths could force us to set the pointer range to be exponential in $t$. This creates a fatal contradiction: a range of that magnitude would result in pointers whose description length is larger than $t$. But a mechanism constrained to run in time $t$ would be unable to even read these pointers, let alone answer the queries.

We show that it suffices to represent pointers using $O(n N \log N + \log t)$ bits, which is larger than the size of the dataset by only a polynomial factor (recall $N=\poly(n)$), guaranteeing that the deterministic mechanism can feasibly read the pointers and answer queries within its time bound $t$. This is somewhat easier to achieve for the secret offsets $m_i$ (we show that with high probability, in all possible executions, the mechanism does not observe an out-of-sample mask bit).
Arguing about the dynamic pointer $p$ is more subtle, because the unbounded attacker needs to continually find ``fresh'' positions in the random oracle, which requires carefully tracking the exhaustive search tree of the simulation. That is, unlike $m_i$, arguing about $p$ requires us to take into account also the attacker's behavior, not just the mechanism's behavior.

\subsection{Other Related Works}\label{sec:related}

The literature on ADA is vast, with many exciting results beyond those discussed earlier. For example, researchers have explored variants of the ADA problem where the mechanism only needs to guarantee utility against ``weaker'' or ``more realistic'' adversaries, and shown impossibility results for these settings. Nissim et al.~\cite{NissimST23} proposed a ``balanced'' model that separates the entity choosing the target distribution from the attacker, thus removing the ability to embed secrets in the target distribution that only the attacker is privy to. They demonstrated that even in this more realistic setting, answering more than $\tilde{O}(n^2)$ queries is hard. In a different vein, Dinur et al.~\cite{DinurSWZ23} proved that existing lower bounds for ADA can be interpreted as arising from a {\em space} bottleneck rather than a sampling bottleneck. 

A growing body of literature considers the challenges of ADA with {\em correlated} input samples. Kontorovich et al.~\cite{KontorovichSS22} extended stability-based generalization guarantees to non-i.i.d.\ settings by formally quantifying the degree of correlation between samples. Rapoport et al.~\cite{RapoportCS25} recently established tight lower bounds for these settings, proving an unavoidable utility gap: under natural conditions, one can answer at most $O(n)$ adaptively chosen queries on correlated data (even using randomized mechanisms), in sharp contrast to the $\tilde{O}(n^2)$ limit in the independent setting.

In our work, we focus on the vanilla, unrestricted formulation of the ADA problem. As we mentioned, if we assume that the domain $\XXX$ is not too large, namely that $n\geq\polylog|\XXX|$, then answering more than $n^2$ queries becomes possible (albeit inefficiently), even using {\em deterministic} mechanisms. Specifically, Dwork et al.~\cite{DworkFHPRR15} and Bassily et al.~\cite{BassilyNSSSU16} showed that there are (inefficient) mechanisms that answer $\exp(n/\sqrt{\log|\XXX|})$ adaptive queries using a sample of size $n$. Dwork et al.~\cite{DworkFHPRR15b} showed that a qualitatively similar positive result can be obtained using an (inefficient) {\em deterministic} mechanism. Recently, this query bound of $\exp(n/\sqrt{\log|\XXX|})$ was shown to be tight by Lyu and Talwar~\cite{LyuT25}.

At a high level, beneath the machinery required to instantiate it in our setting, our attack is driven by a search-by-halving strategy. Isolating a hidden object by repeatedly shrinking a space of possibilities is a recurring theme across theoretical computer science, appearing in the Halving algorithm for online learning~\cite{littlestone1988learning}, query-based learning~\cite{angluin1988queries}, randomness lower bounds for adversarially robust streaming~\cite{ChakrabartiS24,stoeckl2023streaming}, oracle attacks in cryptography~\cite{bleichenbacher1998chosen,manger2001chosen}, and binary search with erroneous answers~\cite{rivest1980coping,pelc2002searching}.

\section{Preliminaries}

We first define the Adaptive Data Analysis (ADA) framework, distributional accuracy, and natural algorithms.

\begin{definition}[Adaptive Data Analysis (ADA) Framework]
In the ADA framework, a mechanism $\MMM$ interacts with a data analyst $\AAA$ over $k$ rounds. Given a dataset $S = (x_1, \dots, x_n) \sim \mathcal{P}^n$ drawn i.i.d. from a distribution $\mathcal{P}$ over a domain $\mathcal{X}$, in each round $j \in [k]$:
\begin{enumerate}
    \item The analyst $\AAA$ outputs a statistical query $q_j: \mathcal{X} \to [0, 1]$, which may depend on previous queries and answers.
    \item The mechanism $\MMM$ observes $q_j$ and outputs an answer $a_j \in \mathbb{R}$.
\end{enumerate}
\end{definition}

\begin{definition}[$\eps$-Accuracy]\label{def:epsAcc}
Let $\eps>0,k,n,N,\X$.
A mechanism $\MMM$ is $\eps$-accurate for $k$ adaptive queries given a sample of size $n$, if for each distribution $\mathcal{P}$ over $\X$ and analyst $\AAA$,
with probability at least $9/10$ over the sample $S \sim \mathcal{P}^n$, and the randomness of $\MMM$, the answers $a_1, \dots, a_k$ satisfy
\[ \max_{j \in [k]} \left|a_j - \mathbb{E}_{x \sim \mathcal{P}}[q_j(x)]\right| \le \eps. \]
\end{definition}

\begin{definition}[Natural Mechanism]
A mechanism $\MMM$ is called \emph{natural} if, when given a dataset $S = (x_1, \dots, x_n)$, then for every given statistical query $q : \mathcal{X} \to [0, 1]$, the mechanism $\MMM$ only observes the query evaluations on the sample points, i.e., $(q(x_1), \dots, q(x_n))$.
\end{definition}

\section{Natural Mechanisms}\label{sec:natural}
We first provide an attack against natural deterministic mechanisms.
Suppose our domain is $\X=[N]$, where $N$ is at most $\poly(n)$. (If $|\X|$ is larger, just restrict the distribution to be supported only on the first $N$ elements of $\X$.)
Consider a natural deterministic mechanism $\MMM$ that, given a sample of size $n$, is $0.1$-accurate for $k$ adaptive queries by a computationally unbounded analyst. We aim to show that $k=O(n\log N)$. Consider the uniform distribution over $[N]$, and suppose for simplicity that the sample is a set and not a multi-set (i.e., there are no collisions). This assumption can be achieved with high probability by setting the domain size as $N=n^{10}$.

An unbounded adversary $\AAA$ can simulate the answers of all possible datasets. Thus, for a transcript $T=(q_1,a_1,\dots,q_i,a_i)$, the adversary can compute the set of datasets that are consistent with $T$, denoted $H_T\subseteq[N]^n$. Furthermore, $\AAA$ can compute the answers of datasets in $H_T$ for any query $q$. The adversary aims to find a query for which every answer $a$ is consistent with at most $\frac{9}{10}$ fraction of the datasets in $H_T$, called a {\em separating query}. The next lemma shows that $\AAA$ can indeed find a separating query, and moreover, that query is binary $q:[N]\to \{0,1\}$.

\begin{lemma}\label{lem:exists_separating_query}
Let a deterministic natural mechanism $\MMM$ and a transcript $T=(q_1,a_1,\dots,q_i,a_i)$. There exists a query $q:[N]\to\{0,1\}$ that either causes the mechanism to have error $>0.1$, or is a separating query for $\MMM$.
\end{lemma}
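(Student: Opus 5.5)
We will use threshold queries $q_j(x)=\mathds{1}_{\{x\le j\}}$ for $j=0,1,\dots,N$, together with a modification that handles ``common'' points. Fix the transcript $T$ and the current set $H=H_T$ of surviving datasets. Since $\MMM$ is deterministic and the transcript is fixed, for each $S\in H$ and each candidate query $q$ the answer $a(S,q)$ is well defined; the adversary can compute it by simulating $\MMM$ on $S$ along $T$ and then feeding it $q$.

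First we remove common points. Call a point $x\in[N]$ \emph{common} if it lies in more than a $\delta$ fraction of the datasets in $H$, where $\delta=1/(100N)$. Each dataset contains $n$ points, so double counting gives
\[
\sum_x \Pr_{S\in H}[x\in S]=n,
\]
and hence at most $n/\delta$ points are common. That bound is too weak, so we do not try to bound the number of common points. Instead we make the threshold sequence skip them. Let $C$ be the set of common points. Define the modified thresholds
\[
q'_j(x)=\mathds{1}_{\{x\le j\}\,\vee\, \{x\in C,\ x\le N/2\}},
\]
or more simply freeze the common points: $q'_j$ agrees with $q_j$ on $[N]\setminus C$ and is $0$ on $C$, while the final query is $\mathds{1}_{[N]\setminus C}$. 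For this to work we need $|C|$ small relative to $N$, say $|C|\le N/100$, so that the last query still has true mean at least $0.99$. To get this, take $\delta=100n/N$; then $|C|\le N/100$ by the counting above. Consecutive queries $q'_{j-1},q'_j$ differ only at $j$ when $j\notin C$ (and coincide when $j\in C$). Because $\MMM$ is natural, $a(S,q'_{j-1})\ne a(S,q'_j)$ can happen only if $j\in S$, which occurs for at most a $\delta=100n/N$ fraction of $H$. With $N=n^{10}$ this is tiny.

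Now the sweep. The true mean of $q'_0\equiv0$ is $0$ and the true mean of $q'_N$ is $\ge 0.99$. If for some $j$ some answer value $a$ is an accurate answer for $q'_j$ and is shared by more than $9/10$ of $H$, we handle it below; otherwise $q'_j$ is already separating. Call a dataset $S$ \emph{low} at $j$ if $a(S,q'_j)\le 1/2$. Suppose the query $q'_0$ is not itself error-causing. Then answers that are accurate for $q'_0$ lie within $0.1$ of $0$, so along a surviving dataset that is answered accurately, column $0$ is low. We must be careful with datasets in $H$ whose column-$0$ answer is inaccurate. If a $\ge 1/10$ fraction of $H$ gives an inaccurate answer to $q'_0$, then every accurate value is consistent with at most $9/10$ of $H$, so $q'_0$ is separating (or, if the true dataset is among the inaccurate ones, it causes error). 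Hence we may assume more than $90\%$ of $H$ is low at $0$ and, symmetrically, more than $90\%$ is high at $N$. Let $j$ be the first index where at most half are low. Then column $j-1$ has more than half low, and since at most a $\delta$ fraction changes, column $j$ has at least $1/2-\delta$ low and at least $1/2$ high. No single value can cover more than $1/2+\delta<9/10$ of $H$, so $q'_j$ is separating.

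The step I expect to be the main obstacle is the handling of common points together with the precise meaning of ``separating versus error''. We must fix the convention that the answer $a$ observed either is consistent with at most $9/10$ of $H$, or is inaccurate, in which case the mechanism has erred. The parameter choices of $\delta$ and $N$ have to keep both the total weight of $C$ and the per-step change probability small simultaneously.
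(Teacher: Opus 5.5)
Your proof is correct. Its skeleton matches the paper's: set aside the points that are common in $H_T$, sweep threshold queries between a query of true mean near $0$ and one of true mean near $1$, and use naturalness, so that adjacent queries change the answer only for datasets containing the flipped point. The real difference is the commonness threshold.

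\textbf{The paper's choice.} A point is common if it lies in at least $8/10$ of $H_T$, so only $m\le 1.25n$ points are set aside. They are fixed to $1$, which costs only $m/N$ in the mean of $q_0^{X_m}$. The price is that consecutive columns may disagree on up to $8/10$ of $H_T$. The paper therefore needs a two-step finish: either $q_j^{X_m}$ or $q_{j+1}^{X_m}$ is separating, using the gap between $9/10$ and $8/10$, and it only gets the $9/10$ bound there.

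\textbf{Your choice.} You take the vanishing threshold $\delta=100n/N$. This may set aside up to $N/100$ points, which is harmless because you freeze them at $0$ and the last query still has mean at least $0.99$. In exchange, consecutive columns differ on at most a $\delta$ fraction of $H_T$. Then the first column with at most half the datasets low is itself separating, and every answer there is shared by less than a $1/2+\delta$ fraction.

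\textbf{Trade-off.} Your version gives a single-column argument and a near-balanced split at the transition. Note that your endpoint cases still only give $9/10$. It needs $N\gg n$, roughly $N>250n$, whereas the paper's needs only $m/N\le 0.1$. Both hold with $N=n^{10}$.

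\textbf{Write-up.} Delete the false starts: the choice $\delta=1/(100N)$, the formula involving $x\le N/2$, and the vestigial sentence before you define ``low.''
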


The proof uses \emph{threshold queries}, defined as follows.

\begin{definition}[Threshold query]\label{def:threshold_query}
Let $X_m = \{x_1, \dots, x_m\} \subset [N]$ be a set of  points. Let the remaining elements of the domain be arbitrarily ordered as $z_1, z_2, \dots, z_{N-m}$. A \emph{threshold query} is any query of the form,
\begin{equation*}
    q_j^{X_m}(x) = 
    \begin{cases} 
      1 & \text{if } x \in X_m \cup \{z_1, \dots, z_j\} \\
      0 & \text{otherwise.}
    \end{cases}
\end{equation*}
\end{definition}

The set $X_m$ as defined in \Cref{def:threshold_query} is arbitrary, but in the proof of \Cref{lem:exists_separating_query}, it will be  a set of $m=O(n)$ points that appear in most of the databases in $H_T$.

\begin{proof}[Proof of \Cref{lem:exists_separating_query}]
    Let $X_m=\{x_1, \dots, x_m\} \subset [N]$ be the set of all points $x\in [N]$ s.t. $|\{S\in H_T:x\in S\}|\ge\frac{8}{10}|H_T|$, i.e., $x$ is in most of the surviving datasets. We claim that $m=O(n)$, by the following counting argument.

    We  calculate the number of points in $H_T$, including multiplicities, in two ways. First, every $S\in H_T$ contains $n$ points, so we have precisely $n|H_T|$ points. On the other hand, each of $x_1,\dots,x_m$ is in $\frac{8}{10}|H_T|$ of the datasets in $H_T$, thus contributes at least that number of copies to the overall number. Summing over these $m$ common points, we obtain 
    \begin{equation}\label{eq:common_points}
        \frac{8}{10}|H_T|\cdot m\leq|H_T|\cdot n,
    \end{equation}
    and thus indeed $m=O(n)$.

    Consider the query that assigns $1$ to $x_1,\dots,x_m$, and $0$ to all other points, which is the threshold query $q_0^{X_m}$. The answer $a$ to query $q_0^{X_m}$ must satisfy $a\le0.1+\frac{m}{N}\le0.2$. If $\leq \tfrac{9}{10}$ fraction of the datasets in $H_T$ return answer smaller than $0.2$, then the query $q_0^{X_m}$ satisfies the conclusion of the theorem, so suppose more than $9/10$ fraction of the datasets in $H_T$ return a valid answer to $q_0^{X_m}$.
    Similarly, the answer to the query that assigns $1$ to all points (which is $q_{N-m}^{X_m}$) is at least $0.9$. If $\leq \tfrac{9}{10}$ fraction of the datasets in $H_T$ return answer larger  than $0.9$, then the query $q_{N-m}^{X_m}$ satisfies the conclusion of the theorem, so suppose more than $9/10$ fraction of the datasets in $H_T$ return a valid answer to $q_{N-m}^{X_m}$.
    Therefore, there  must be some $j>m$, for which 
    most datasets in $H_T$ answer $\le0.5$ for $q_j^{X_m}$, but most datasets in $H_T$ answer $>0.5$ for $q_{j+1}^{X_m}$.
    
    The analyst simulates the answer of every $S\in H_T$ for every threshold query to locate $q_j^{X_m}$.
    By construction, the point $z_{j+1}$ is not common. Since the mechanism is natural, only datasets that contain $z_{j+1}$ may have different answers for the queries $q_j^{X_m}$ and $q_{j+1}^{X_m}$. 
    
    We claim that either $q_j^{X_m}$ or $q_{j+1}^{X_m}$ is a separating query, which concludes the proof.
    Suppose that $q_j^{X_m}$ is not a separating query, meaning that at least $\tfrac{9}{10}$ fraction of the datasets in $H_T$ return some answer $a \le 0.5$ for the query $q_j^{X_m}$. At most $\tfrac{8}{10}|H_T|$ of these contain $z_{j+1}$ (since $z_{j+1}$ is not common), and thus the most common answer for $q_{j+1}^{X_m}$, which is $>0.5$, is returned by at most $\tfrac{9}{10}|H_T|$ datasets, so $q_{j+1}^{X_m}$ is a separating query.
\end{proof}

\begin{theorem}
A natural deterministic mechanism $\MMM$ that is $0.1$-accurate can answer at most $k = O(n \log N)$ queries from a computationally unbounded analyst.
\end{theorem}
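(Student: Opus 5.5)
The attacker repeatedly invokes \Cref{lem:exists_separating_query}, tracking the potential $|H_T|$. It keeps the transcript $T$ and the set $H_T\subseteq[N]^n$ of datasets consistent with $T$; initially $T$ is empty and $|H_T|\le N^n$. In each round the attacker, being unbounded, simulates $\MMM$ on every $S\in H_T$ (feasible since $\MMM$ is deterministic, so its answer is a function of $S$ and $T$ alone). It then computes the query $q$ guaranteed by \Cref{lem:exists_separating_query} and issues it.

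Two outcomes are possible in each round. If $q$ makes the mechanism err on the true dataset, the attacker has already won. Otherwise $q$ is separating, so whatever answer $a$ the true dataset returns, the new set $H_{T'}$ (datasets in $H_T$ answering $a$) satisfies $|H_{T'}|\le \frac{9}{10}|H_T|$. The true dataset always remains in $H_T$, since it is consistent with the actual transcript. So as long as $\MMM$ has not erred, after $i$ rounds $1\le|H_T|\le (9/10)^i N^n$. Hence within $i_0=\lceil n\log N/\log(10/9)\rceil=O(n\log N)$ rounds either $\MMM$ has erred, or $|H_T|=1$ and the attacker knows the sample $S$ exactly.

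In the latter case the attacker issues one more query, $q^*=\one_{\{x\in S\}}$. Its true expectation under the uniform distribution is at most $n/N\le n^{-9}$. Because $\MMM$ is deterministic and $H_T=\{S\}$, the attacker knows the answer it will receive. For $\MMM$ to be accurate that answer must be at most $0.1+n/N$. The lemma's guarantee suggests a fallback: if this query does not force failure, it is itself non-separating, which contradicts nothing. So the overfitting step should be argued directly rather than through the lemma. I would handle it this way. Whatever $\MMM$ answers on $q^*$ with dataset $S$, consider $S$'s answer to the complementary query $1-q^*$ as well. The empirical mean of $q^*$ on $S$ is $1$, but naturality alone does not force $\MMM$ to report that. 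So the cleaner route is to note that once $|H_T|=1$, the lemma still applies: it yields a query that either makes $\MMM$ err or is separating. A separating query would have to reduce $|H_T|$ below $1$, which is impossible because the true dataset survives. Therefore the query must make $\MMM$ err. This removes the need for an explicit overfitting query.

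Finally, the probabilistic bookkeeping. The argument above is deterministic given that the sample has no collisions. With $N=n^{10}$ that event has probability $1-O(n^2/N)$, and the attack forces an error in $O(n\log N)$ rounds with probability above $1/10$, contradicting $0.1$-accuracy for larger $k$. The main subtlety I expect is making sure \Cref{lem:exists_separating_query} is applied to $H_T$ restricted to collision-free datasets. Its counting bound $m=O(n)$ and the "not common point" step should hold for any subset $H_T$, so I would verify that the proof never used that $H_T$ is the full consistent set.
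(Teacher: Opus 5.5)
Your proposal is correct and follows the paper's proof: you iterate Lemma~\ref{lem:exists_separating_query}, the surviving set $|H_T|\le N^n$ shrinks by a factor of $\frac{9}{10}$ per round while the true dataset always survives, and so the mechanism is forced to err within $O(n\log N)$ rounds. The only difference is cosmetic and lies in the endgame: you deduce the forced error at $|H_T|=1$ from the lemma's statement, since no query can be separating for a singleton, whereas the paper observes directly that $X_m=S$, so the natural mechanism gives the same answer to $q_0^{X_m}$ and $q_{N-m}^{X_m}$ and must err on one of them.
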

\begin{proof}
By \Cref{lem:exists_separating_query}, as long as the mechanism returns correct answers, in every round, the analyst can find a separating query. Initially, the number of possible datasets is $\le N^n$. Every such query rules out at least a constant factor of the datasets. Therefore, the analyst can fail the mechanism after $O(n\log N)$ queries. (Observe that if $|H_T|=1$, then $X_m=S$, hence the mechanism must reply the same for $q_0^{X_m}$ and $q_{N-m}^{X_m}$, and thus errs on at least one of these queries.) 
\end{proof} 
\subsection{Natural Mechanism with a Mask}

We now consider the following variant of natural mechanisms, which we call masked mechanisms. In this variant, for each query $q$, the adversary draws a mask $y$ uniformly at random from $\{0,1\}^N$. 
The mechanism receives the masked query $q \oplus y$ (computed bitwise) and the mask on the sample points $(y(x_1),\dots,y(x_n))$.
The mechanism can trivially compute $(q(x_1), \dots, q(x_n))$, and for simplicity of presentation, we use an equivalent formulation of the model where the mechanism is given the evaluations on the sample points $(q(x_1), \dots, q(x_n))$ instead of $(y(x_1),\dots,y(x_n))$.
We now prove that, with non-negligible probability, there exists a separating query for this variant.

\begin{lemma}[Existence of separating queries for masked mechanisms]\label{lemma:seminatural}
Let a deterministic masked mechanism $\MMM$ and a transcript $T$. There exists a query $q:[N]\to\{0,1\}$ such that for at least a $\tfrac{1}{N+1}$ fraction of masks $y \in \{0,1\}^N$, the pair $(q, y)$ is separating for $\MMM$.
\end{lemma}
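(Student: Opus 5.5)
The plan is to follow the path-of-neighboring-pins argument from Section~\ref{sec:hypothetical}, combined with the common-points trick from the proof of \Cref{lem:exists_separating_query}. As in that proof, let $X_m$ be the set of points that appear in at least $\tfrac{8}{10}$ of the datasets in $H_T$. By the double counting of \eqref{eq:common_points}, $m=O(n)$. Order the remaining points $z_1,\dots,z_{N-m}$ and work with the threshold queries $q_j^{X_m}$, $j=0,\dots,N-m$. There are at most $N+1$ of them.

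For a pair $(q,y)$, the mechanism on dataset $S$ receives $q\oplus y$ and $q|_S$, and its output may depend on the transcript $T$. Call the pair \emph{separating} if either no answer is consistent with more than $\tfrac{9}{10}|H_T|$ datasets, or the datasets answering accurately are at most a $\tfrac{9}{10}$ fraction, exactly as in the natural case. Define the neighbor of $(q_j^{X_m},y)$ to be $(q_{j+1}^{X_m},\,y\oplus e_{z_{j+1}})$. Then $q_j\oplus y=q_{j+1}\oplus y'$. Moreover, for every $S$ not containing $z_{j+1}$ the in-sample evaluations coincide, so the mechanism's entire input is identical and its answer is unchanged. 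Since $z_{j+1}\notin X_m$, fewer than $\tfrac{8}{10}|H_T|$ of the datasets can change their answer between neighboring pins.

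Now fix any starting mask $y_0$ and follow the path $(q_0^{X_m},y_0),(q_1^{X_m},y_1),\dots,(q_{N-m}^{X_m},y_{N-m})$, where $y_{j+1}=y_j\oplus e_{z_{j+1}}$. I will rerun the argument of \Cref{lem:exists_separating_query} verbatim along this path. Accuracy forces the answer at $q_0^{X_m}$ to be at most $0.2$ and at $q_{N-m}^{X_m}$ to be at least $0.9$. Hence either an endpoint pin is separating, or there is a first index where the majority of answers crosses $0.5$. At that crossing, slow change between neighbors gives that one of the two adjacent pins is separating. So every path contains at least one separating pin.

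Finally, count. The map $y_0\mapsto y_j$ is a bijection of $\{0,1\}^N$ for each fixed $j$, namely XOR with a fixed vector. The $2^N$ paths are therefore disjoint, and each index $j$ is hit by exactly one pin from each path. Each path contributes at least one separating pin, so there are at least $2^N$ separating pins in total, distributed among at most $N+1$ query indices. By pigeonhole, some $j$ has at least $2^N/(N+1)$ masks $y$ for which $(q_j^{X_m},y)$ is separating. This gives the claimed $\tfrac{1}{N+1}$ fraction for $q=q_j^{X_m}$.

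The main obstacle I anticipate is the neighbor-invariance step. It must be verified carefully in the equivalent formulation, where the mechanism sees $q|_S$ rather than $y|_S$, and it must be checked that the pin's values on $H_T$ are well defined given the fixed transcript $T$. The pigeonhole step also relies on the paths being parallel translates, and this needs to be stated precisely.
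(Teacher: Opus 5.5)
Your proposal is correct and follows the paper's proof. Your path from a starting mask $y_0$ is exactly the paper's path $P_v$ with constant masked query $v=y_0\oplus q_0^{X_m}$, and the pigeonhole over at most $N+1$ threshold indices is the same. The only difference is cosmetic: you rerun the crossing argument of \Cref{lem:exists_separating_query} along each path, whereas the paper applies that lemma as a black box to the restricted natural mechanism $\MMM_v(E)=\MMM(E,v)$.
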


\begin{proof}
As before, the analyst can simulate the mechanism, and compute the set of datasets $H_T$ that are consistent with the transcript $T$.
Let $X_m \subset [N]$ be the set of common points in $H_T$, i.e., the set of all points $x\in [N]$ s.t. $|\{S\in H_T:x\in S\}|\ge\frac{8}{10}|H_T|$. By the same argument (\Cref{eq:common_points}), $|X_m| = O(n)$.
Let $z_1, \dots, z_{N-m}$ be the remaining non-common points. Consider the sequence of threshold queries $q_0, q_1, \dots, q_{N-m}$ defined by $q_j(x) = \mathbb{I}[x \in X_m \cup \{z_1, \dots, z_j\}]$ (as per \Cref{def:threshold_query}).

For any fixed ``base'' mask $v \in \{0,1\}^N$, we define a ``path'' of query-mask pairs $P_v = \{ (q_j, y_j) \}_{j=0}^{N-m}$ where
$$ y_j = v \oplus q_j. $$
A key observation is that along this path, the masked query input to the mechanism is constant,
$$ q_j \oplus y_j = q_j \oplus (v \oplus q_j) = v. $$
Now, define a \textit{restricted natural mechanism} $\MMM_v$ that, given evaluations $E = (q(x_1), \dots, q(x_n))$, outputs:
$$ \MMM_v(E) = \MMM(E, v). $$
Since $v$ is fixed, $\MMM_v$ is a natural mechanism (its output depends only on the evaluations). By applying Lemma \ref{lem:exists_separating_query} to $\MMM_v$, there must exist at least one index $j_v \in \{0, \dots, N-m\}$ such that the query $q_{j_v}$ is a separating query for $H_T$ under $\MMM_v$.
By construction, $q_{j_v}$ being separating for $H_T$ under $\MMM_v$ implies that $(q_{j_v}, y_{j_v})$ is a separating pair for $H_T$ under the original masked mechanism $\MMM$. 

There are $2^N$ possible base masks $v$, and each produces at least one separating pair $(q_{j_v}, v \oplus q_{j_v})$. Since there are at most $N+1$ possible threshold queries in the sequence, there are at most $(N+1)2^N$ pairs $(q_i,y)$, where $q_i$ is the $i$-th threshold query and $y\in \{0,1\}^N$ is a mask.
By the pigeonhole principle, there must exist at least one query $q^*$ that belongs to the set of separating pairs for at least $\frac{1}{N+1}$ fraction of the masks $y \in \{0,1\}^N$, which concludes the proof.
\end{proof}
 
\section{Random Oracle}\label{sec:RO-main}

We now extend the lower bound on deterministic natural mechanisms to deterministic mechanisms that have access to a random oracle (RO), proving \Cref{thm:mainIntro}. The random oracle is a function $\RO:\N\to\{0,1\}$ whose values are uniformly random and independent bits.

\begin{definition}[Query in the RO Model]
A query in the random oracle model is an oracle-aided function $q^{\RO} : \XXX \to [0,1]$. For brevity, when $\RO$ is clear from the context, we simply write $q$ instead of $q^{\RO}$.
\end{definition}

\begin{remark}
In our case, for any $x \in \XXX$, the query $q^{\RO}$ computes its output using the point $x$ and exactly one bit from the random oracle $\RO$. Specifically, there exist functions $I_q: \XXX \to \N$ and $g_q: \XXX \times \{0,1\} \to [0,1]$ such that $q^{\RO}(x) = g_q(x, \RO(I_q(x)))$.
More generally, one could consider queries that rely on more than one bit of $\RO$, whose coordinates might even be determined adaptively. But for our purposes, one bit will suffice. 
\end{remark}

\begin{definition}[ADA in the RO Model]
In the RO model, both the analyst and the mechanism have oracle access to the same random oracle $\RO:\N\to\{0,1\}$. A mechanism $\MMM$ receives a dataset $S =(x_1,\ldots,x_n)\sim \mathcal{P}^n$ drawn i.i.d. from a distribution $\mathcal{P}$ over a domain $\X$. In round $j\in[k]$,
\begin{enumerate}
    \item The analyst formulates a statistical query $q_j^\RO:\X\to[0,1]$, which may depend on previous queries and answers. \item The mechanism receives $q_j^\RO$ and outputs an answer $a_j\in\R$.
\end{enumerate}
\end{definition}

\begin{definition}[$\eps$-Accuracy in the RO Model]
Let $\eps>0,k,n,N,\X$.
A mechanism $\MMM$ in the RO model is $\eps$-accurate for $k$ adaptive queries given a sample of size $n$, if for each distribution $\mathcal{P}$ over $\X$ and analyst $\AAA$,
with probability at least $9/10$ over the choice of the random oracle $\RO$, the sample $S \sim \mathcal{P}^n$, and the randomness of $\MMM$, the answers $a_1, \dots, a_k$ satisfy
\[ \max_{j \in [k]} \left|a_j - \mathbb{E}_{x \sim \mathcal{P}}[q^{\RO}_j(x)]\right| \le \eps. \]
We emphasize that the population mean $\mathbb{E}_{x \sim \mathcal{P}}[q^{\RO}_j(x)]$ evaluates the query with respect to the specific, fixed realization of $\RO$ drawn in the experiment; the expectation is taken solely over the domain distribution $\mathcal{P}$.
\end{definition}

\subsection{The masked RO model}
We first provide an attack against deterministic mechanisms in an intermediate model, called {\em masked RO}. This attack contains the main arguments in the proof of \Cref{thm:mainIntro}. 
In the masked RO model, the domain is $[N]$, the distribution is uniform over $[N]$, and the analyst samples (independently of $\RO$) offsets $m_1,\ldots,m_N$ independently and uniformly from $[R]$, where $R\in\N$ is a number we set later. In addition to a dataset $S\subset[N]$, the mechanism is also given the offsets $\{m_i:i\in S\}$.

For a pointer $p$, define the mask induced by the block $[p+1,p+NR]$ as
\[
    y^p_i=\RO(p+(i-1)R+m_i),\qquad i\in[N].
\]
In a round, the analyst sends a pointer $p$ and a masked query $\tilde q\in\{0,1\}^N$. This pair represents the unmasked query
\[
    q_i=\tilde q_i\oplus y^p_i,\qquad i\in[N],
\]
and the mechanism is required to answer accurately with respect to the expectation of $q$ under the uniform distribution over $[N]$.

\begin{lemma}\label{lem:masked_RO}
Let $N=n^{10}$, let $\delta=N^{-6}$, let $k=O(n\log N)$, and let
\[
    E=\left\lceil (N+1)\log\frac1\delta\right\rceil.
\]
Suppose
\[
    R\geq \delta^{-1}N^n2^{Nk}E^k kt.
\]
Let $\MMM$ be a deterministic mechanism in the masked RO model that receives $n$ samples from the uniform distribution over $[N]$ and runs in at most $t$ time per round. If $\MMM$ is $0.1$-accurate, then $\MMM$ can answer at most $k$ adaptive queries, issued with pointers of length $O(nN\log N + \log t)$.
\end{lemma}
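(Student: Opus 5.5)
The plan is to turn the halving attack of \Cref{lem:exists_separating_query} into an attack in the masked RO model, using \Cref{lemma:seminatural} as the per-round engine. The attacker keeps the set $H_T$ of pairs (dataset, in-sample offsets) consistent with the transcript so far. In each round it tries $E$ fresh pointers $p^{(1)},\dots,p^{(E)}$, placed on disjoint blocks of $\RO$ that lie beyond every position read so far in any simulation. For each pointer it reads the mask $y^{p}$ and looks for a threshold query $q$ (built as in the proof of \Cref{lemma:seminatural} from the common points of $H_T$) such that $(q,y^p)$ is separating. If it finds one, it sends the masked query $\tilde q=q\oplus y^p$ together with $p$. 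Each successful round removes a constant fraction of $H_T$. Since $|H_T|\le N^n R^n$ at the start, this would need $O(n\log(NR))$ rounds, which is too many. So I will count only datasets ($N^n$ of them) and argue separately that the offsets add nothing. That gives $k=O(n\log N)$ rounds before $|H_T|$ restricted to datasets is $1$, after which the final overfitting query forces an error, as in the proof of the natural-mechanism theorem.

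The core step is to justify \Cref{lemma:seminatural} inside this model. Fix the transcript and a fresh pointer $p$. The mechanism on input $(S,\{m_i\}_{i\in S})$ sees $\tilde q$, its offsets, and whatever $\RO$ bits it reads. If it never reads an out-of-sample mask position $p+(i-1)R+m_i$ with $i\notin S$, then its output is a deterministic function of $(\tilde q,\ q|_S)$. This is exactly a masked mechanism, so \Cref{lemma:seminatural} says that for a fresh uniform $y^p$ there is a separating pair with probability at least $\frac1{N+1}$. Freshness of $p$ makes $y^p$ uniform and independent of everything in the history, which resolves Pitfall~\#2. Disjoint fresh blocks across the $E$ trials give independence, which resolves Pitfall~\#1: all $E$ trials fail with probability at most $(1-\frac1{N+1})^E\le\delta$. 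A union bound over $k$ rounds costs $k\delta$, which is $o(1)$.

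The main obstacle is the bad event that the mechanism, in some simulated execution, reads an out-of-sample mask bit; this is where the bound on $R$ enters. For a fixed execution, the secret offsets $m_i$ with $i\notin S$ are uniform in $[R]$ and independent of the mechanism's view. Each of the at most $t$ probes per round therefore hits a given secret cell with probability at most $1/R$ per point, counting over the $N$ blocks. I will union bound over everything the attacker could possibly simulate: at most $N^n$ datasets, $2^{Nk}$ masked-query histories, $E^k$ pointer choices, $k$ rounds and $t$ probes per round. Because the attacker's choices are themselves determined by simulations, I must fix the attacker's decision tree and take the union over its leaves rather than over adaptive probe paths. That is exactly why the count is $N^n2^{Nk}E^k kt$ and not $2^t$. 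With $R\ge\delta^{-1}$ times this count, the total failure probability is at most $\delta$. This also shows that, with high probability, the offsets are irrelevant to all answers, so counting datasets alone is justified. Finally, pointers need to range over positions up to about $kE\cdot NR$, so their length is $\log(kENR)=O(n\log N+Nk+\log t)=O(nN\log N+\log t)$.
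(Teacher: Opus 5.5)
Your overall structure matches the paper's: $E$ fresh pointers per round, Lemma~\ref{lemma:seminatural} as the engine, amplification to failure probability $\delta$ per round, and the same count $N^n2^{Nk}E^kkt$ for $R$. The gap is in how you combine Lemma~\ref{lemma:seminatural} with the bad event. That lemma concerns a \emph{fixed} deterministic masked mechanism, defined on every input $(\tilde q,q|_{S'})$ for every $S'\in H_T$ and every threshold query on the walk, together with a uniform mask independent of it. The real $\MMM$ is such an object only on the event that no execution reads an out-of-sample mask bit. That event depends on $\RO$, including the fresh mask bits, and on the offsets. Conditioning on it therefore destroys the uniformity and independence of the masks that you use to get $(1-\frac1{N+1})^E$. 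Without conditioning, $\MMM$ is simply not a masked mechanism.

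The missing device is the paper's lying oracle. Simulate $\MMM$ on each candidate $S'$ with out-of-sample mask locations answered by $0$, in-sample ones by the already-read bits, and everything else by $\RO$. For every realization of $\RO_{\mathrm{out}}$ this is an honest deterministic masked mechanism that does not depend on the fresh mask. So Lemma~\ref{lemma:seminatural} and the product bound hold unconditionally for the lying simulation (Claim~\ref{claim:one_round_success}). Your union bound (event $B$, Claim~\ref{claim:no_oos_mask_query}) is then used only to show that on $B$ every real execution coincides with its lying counterpart (Claim~\ref{claim:lying_to_real}). This gives total failure probability at most $(k+1)\delta$.

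Two further steps need correction. First, your assertion that the union bound shows ``the offsets are irrelevant to all answers'' is false. $B$ only forbids out-of-sample mask reads, and $\MMM$ may use its in-sample offsets (about $n\log R\ge nNk$ uniform bits) arbitrarily. If the analyst had to treat them as unknown, they would act as private randomness and $O(n\log N)$ rounds would not suffice. The correct reason is that the analyst sampled $m_1,\dots,m_N$ itself, as it must in order to read $y^p$ at all. Hence each candidate dataset determines its offsets, and $|H_T|\le N^n$ from the start.

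Second, the pointer-length bound is not justified. Requiring blocks to lie ``beyond every position read so far'' can force pointers of $\Omega(t)$ bits, since a time-$t$ mechanism may probe addresses as large as $2^{t}$. Conversely, a range of $kE\cdot NR$ contains only $kE$ disjoint blocks, and the simulations (up to $N^n(N+1)t$ probes per experiment) may touch all of them. The paper uses pigeonhole instead (Claim~\ref{claim:long_pointer_reduction}). Before any experiment at most $L_{\mathrm{prev}}=kE(N+N^n(N+1)t)$ locations are exposed, so one of the $L_{\mathrm{prev}}+1$ disjoint blocks $[bNR+1,bNR+NR]$ is untouched. This gives pointer length $\log((L_{\mathrm{prev}}+1)NR)=O(kN+\log t)=O(nN\log N+\log t)$. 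Your final formula has the right order, but only this argument delivers it.
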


\begin{proof}
We construct an unbounded analyst that, except with probability 
$o(1)$ over the offsets and the random oracle, finds a separating 
query in every round, thus forcing the mechanism to 
fail in $k$ rounds. If the analyst makes a query on which the 
mechanism is already inaccurate, 
the proof is complete. Thus below, the only nontrivial case is 
the one in which the simulated mechanism remains accurate on the 
candidate queries. 
The analyst will conduct a sequence of 
experiments, each one will correspond to the masked ADA model 
of \Cref{lemma:seminatural}. Denote by $E = O(N\log\tfrac1\delta)$
the maximum number of experiments the analyst will perform per 
round.
Throughout, experiments are ordered 
lexicographically by $(j,v)$, where $j\in[k]$ is the ADA round 
and $v\in[E]$ is the experiment number inside the round. The analyst only queries the random oracle at places the mechanism might query given a dataset $S\subset[N]$ of size $n$, for all possible datasets.

\paragraph{The long-pointer idealization.}
We first allow the analyst and mechanism to use very long pointers (in particular, they could be so long that the mechanism cannot read them in $t$ time), and we will later show how to remove this idealization. For each experiment $(j,v)$, the analyst chooses a pointer $p^{\rm long}_{j,v}$ so that the block
\[
    B^{\rm long}_{j,v}=[p^{\rm long}_{j,v}+1,p^{\rm long}_{j,v}+NR]
\]
is disjoint from every RO location that could have been queried before experiment $(j,v)$ by either the analyst or the mechanism, over all datasets, all masked-query transcripts, and all possible RO answers. Such a choice is possible because, before any fixed experiment, the number of executions and simulations considered by the unbounded analyst is finite and every execution makes finitely many oracle queries. Let $P_j=\{p^{\rm long}_{j,v}:v\in[E]\}$. We say that an experiment $(j,v)$ is successful, if exists masked threshold query $\tilde{q}$ such that the pair $(p^{\rm long}_{j,v},\tilde{q})$ is separating.

\begin{claim}[No out-of-sample mask queries]\label{claim:no_oos_mask_query}
With probability at least $1-\delta$ over $m_1,\ldots,m_N$ and $\RO$, the following event $B$ holds. For every dataset $S\subset[N]$ of size $n$, every sequence of masked threshold queries $\tilde q_1,\ldots,\tilde q_k\in\{0,1\}^N$, and every sequence of pointers $p_1\in P_1,\ldots,p_k\in P_k$, during the resulting interaction $\MMM$ never makes a query in
\[
    \{p_j+(i-1)R+m_i \mid j\in[k], i\notin S\}.
\]
I.e., in all rounds, $\MMM$ does not make an out-of-sample mask query.
\end{claim}
\begin{proof}
Fix $S\subset[N]$ of size $n$, masked queries $\tilde q_1,\ldots,\tilde q_k\in\{0,1\}^N$, and $p_1\in P_1,\ldots,p_k\in P_k$. That is, consider a special analyst $\AAA_{\tilde q_1,\ldots,\tilde q_k,p_1,\ldots,p_k}$ that issues the \emph{masked queries} $\tilde q_1,\ldots,\tilde q_k$ and pointers $p_1,\ldots,p_k$, regardless of the mechanism's answers. Also fix the offsets $\{m_i:i\in S\}$ and a realization of $\RO$. Under this conditioning, the deterministic mechanism's oracle queries are fixed, and since $\MMM$ runs in time $t$ in each round, it makes at most $kt$ oracle queries.

Consider one such query location $\ell$. Since the long-pointer blocks are disjoint, $\ell$ belongs to at most one interval of the form
\[
    \{p_j+(i-1)R+r:r\in[R]\},
    \qquad j\in[k],\ i\notin S.
\]
Therefore, over the still-random offset $m_i$ for that possible interval, the probability that $\ell=p_j+(i-1)R+m_i$ is at most $1/R$. A union bound over the at most $kt$ oracle queries gives probability at most $kt/R$ that $\MMM$ ever queries an out-of-sample mask location for the fixed $S,\tilde q_1,\ldots,\tilde q_k,p_1,\ldots,p_k$.

This bound holds for every conditioning on $\{m_i:i\in S\}$ and on $\RO$, so by the law of total probability, it also holds unconditionally. Union bounding over at most $N^n$ datasets, $2^{Nk}$ masked-query sequences, and $E^k$ pointer sequences gives
\[
    \Pr[\bar B]\leq \frac{N^n 2^{Nk}E^k kt}{R}\leq \delta,
\]
where the last step is by the choice of $R$.
\end{proof}
Observe that under $B$, for every analyst $\AAA$, in the resulting interaction between $\AAA$ and $\MMM$, the mechanism $\MMM$ does not make an out-of-sample query. This is since during the interaction, $\AAA$ must output some sequence of masked queries $\tilde q_1,\ldots,\tilde q_k$ and pointers $p_1,\ldots,p_k$, hence it identifies with $\AAA_{\tilde q_1,\ldots,\tilde q_k,p_1,\ldots,p_k}$. By $B$, in the interaction between $\AAA_{\tilde q_1,\ldots,\tilde q_k,p_1,\ldots,p_k}$ and $\MMM$, the mechanism $\MMM$ does not make an out-of-sample query, yielding this observation.

Fix a round $j\in[k]$ and a transcript $T_{j-1}$ produced before that round. The analyst conducts the following experiment for each $v\in[E]$. First, it queries $\RO$ at the $N$ mask locations
\[
    p^{\rm long}_{j,v}+(i-1)R+m_i,\qquad i\in[N],
\]
and thereby learns the mask $y^{(v)}\in\{0,1\}^N$ defined by
\[
    y^{(v)}_i=\RO(p^{\rm long}_{j,v}+(i-1)R+m_i).
\]
The analyst then searches for a separating threshold query. For every candidate threshold query $q\in\{0,1\}^N$, it sets $\tilde q=q\oplus y^{(v)}$. For every candidate dataset $S'\in H_{T_{j-1}}$, it simulates $M$ on transcript $T_{j-1}$, dataset $S'$, offsets $\{m_i:i\in S'\}$, masked query $\tilde q$, and pointer $p^{\rm long}_{j,v}$, with 
a dataset-dependent oracle $\RO_{lying}^{S'}$. This oracle is defined as follows. A query to an in-sample mask location $p^{\rm long}_{j,v}+(i-1)R+m_i$ with $i\in S'$ is answered by the already-read bit $y^{(v)}_i$; a query to an out-of-sample mask location of this form with $i\notin S'$ is answered by $0$; every other oracle query is answered as in the real oracle $\RO$. Thus the analyst's real RO queries in a single experiment are the $N$ mask queries, plus at most $t$ non-mask queries for each pair $(q,S')$. Hence the analyst makes at most
\[
    N+\left|H_{T_{j-1}}\right|(N+1) t\leq N+N^n(N+1) t
\]
real RO queries.

\begin{claim}[Success in one round under $\RO_{lying}$]\label{claim:one_round_success}
Fix a round $j\in [k]$ and condition on any history produced before that round, including the transcript $T_{j-1}$. In the long-pointer idealization, if the simulated mechanism obtains its oracle replies from $\RO_{lying}^{S'}$ on each candidate dataset $S'\in H_{T_{j-1}}$, then the probability that none of the $E$ experiments in round $j$ contains a separating query is at most $\delta$.
\end{claim}
\begin{proof}
Consider experiment $v\in [E]$ in round $j$, and condition on all previous experiments. By construction, the bits of the random oracle in the block $B^{\rm long}_{j,v}$ are fresh, hence $y^{(v)}$ is uniform over $\{0,1\}^N$ and independent of the conditioned transcript.

Let $\RO_{\mathrm{out}}$ denote the restriction of $\RO$ to all locations except the $N$ fresh mask locations $\{p^{\rm long}_{j,v}+(i-1)R+m_i:i\in[N]\}$. Condition on an arbitrary realization of $\RO_{\mathrm{out}}$. Under this conditioning, $y^{(v)}$ remains uniform over $\{0,1\}^N$, and the hybrid simulation defines a deterministic masked mechanism $\MMM'$: on dataset $S'$, masked query $\tilde q$, and sample evaluations $q|_{S'}$, the mechanism can reconstruct the in-sample mask bits from $\tilde q|_{S'}$ and $q|_{S'}$, receives no out-of-sample mask bits, and all other oracle answers are fixed constants. Therefore, by \Cref{lemma:seminatural}, for every such conditioning, with probability at least $1/(N+1)$ over the fresh mask $y^{(v)}$, there is a threshold query $q^{(v)}\in\{0,1\}^N$ such that the pair $(q^{(v)},y^{(v)})$ is separating for this experiment. Since this holds for any realization of $\RO_{\mathrm{out}}$, by the law of total probability, the same bound holds without conditioning on $\RO_{\mathrm{out}}$. Since the analyst is unbounded, it finds such a query whenever one exists.

The same conditional success bound holds after every history of failed experiments. Hence
\[
    \Pr[\text{no experiment succeeds in round }j]
    \leq \left(1-\frac1{N+1}\right)^E
    \leq \delta.
\]
\end{proof}

\begin{claim}[From the lying simulation to the real oracle]\label{claim:lying_to_real}
In the long-pointer idealization, the analyst finds a separating query in every round in the real masked RO game with probability at least $1-(k+1)\delta$.
\end{claim}
\begin{proof}
For each round $j$, let $G_j^{\mathrm{lie}}$ be the event that at least one of the $E$ experiments in round $j$ contains a separating query in the exhaustive simulation that answers oracle queries using $\RO_{\mathrm{lying}}^{S'}$ for each candidate dataset $S'\in H_{T_{j-1}}$. By \Cref{claim:one_round_success}, for every history produced before round $j$,
\[
    \Pr[\neg G_j^{\mathrm{lie}}\mid \text{that history}]\leq \delta.
\]
Therefore, conditioning sequentially on the histories produced in previous rounds,
\[
    \Pr[\bigcap_{j=1}^k G_j^{\mathrm{lie}}]\geq 1-k\delta.
\]
By \Cref{claim:no_oos_mask_query} and a union bound,
\[
    \Pr[B\cap \bigcap_{j=1}^k G_j^{\mathrm{lie}}]\geq 1-(k+1)\delta.
\]

We now argue that under $B\cap \bigcap_{j=1}^k G_j^{\mathrm{lie}}$, the real long-pointers analyst finds a separating query in every round, which will conclude the proof.
Consider any realization of the offsets and oracle for which $B$ holds. For any dataset $S'$, any masked-query transcript considered by the analyst, and any pointer sequence from $P_1,\ldots,P_k$, the only locations on which the real oracle $\RO$ and $\RO_{\mathrm{lying}}^{S'}$ may differ in the simulation on dataset $S'$ are out-of-sample mask locations
\[
    p_j+(i-1)R+m_i,\qquad i\notin S'.
\]
The event $B$ rules out every such query by $\MMM$ throughout the simulated interaction. At in-sample mask locations, both simulations answer using the already-read mask bits; at all non-mask locations, both answer according to $\RO$. Hence, under $B$, every simulated execution in the exhaustive search has the same transcript with the real oracle as with the corresponding lying oracle. In particular, every query that is separating in the lying simulation is also separating in the real masked RO game, which concludes the proof.
\end{proof}

Each separating query reduces the number of datasets consistent with the transcript by a factor of at least $9/10$. Starting from at most $N^n$ possible datasets, after $O(n\log N)$ separating queries  the mechanism must have answered some query inaccurately. Thus the long-pointer analyst forces a failure within $O(n\log N)$ rounds with probability at least $1-(k+1)\delta$.

\paragraph{Replacing long pointers by random finite pointers.}
It remains to remove the idealized long pointers. Let
\[
    L_{\mathrm{prev}}=kE\cdot\left(N+N^n(N+1) t\right).
\]
For any fixed dataset and any fixed masked-query transcript, this is a crude upper bound on the number of RO locations exposed before a given experiment: recall that in each experiment the analyst reads the $N$ mask bits and simulates $M$ on all datasets and all $N+1$ candidate threshold queries, with at most $t$ RO queries per simulation.

We now restrict the analyst into using pointers of the form $p=bNR$, where $b\in [L_{\mathrm{prev}}+1]$. Observe that the number of bits required to represent an RO query in $[p+1,p+NR]$ is 
\[\leq \log((L_{\mathrm{prev}}+1)NR)=\log\left(kE\cdot\left(N+N^n(N+1) t\right)N \delta^{-1}N^n 2^{kN} E^k kt\right)= O(kN+\log t).
\]
Recall that $k N=O(n N\log N)$, which is larger than the number of bits required to represent an input dataset in $[N]^n$ by only a polynomial factor.
Therefore, for reasonable values of $t$ (e.g., polynomial in the input length), the mechanism can indeed read the pointer and issue queries to the Random Oracle.

\begin{claim}\label{claim:long_pointer_reduction}
For every $(j,v)$, there exists $b_{j,v}\in [L_{\mathrm{prev}}+1]$ such that the analyst did not query any RO location in $[b_{j,v}NR+1,b_{j,v}NR+NR]$ before experiment $(j,v)$.
\end{claim}
\begin{proof}
Let $(j,v)$. 
Denote by $W_{j,v}$ the set of RO locations exposed before experiment $(j,v)$. By the construction of the analyst, $|W_{j,v}|\leq L_{\mathrm{prev}}$.
The segments $[bNR+1,bNR+NR]$ for $b\in [L_{\mathrm{prev}}+1]$ are disjoint, and are numbered strictly more than $|W_{j,v}|$, hence there exists $b\in [L_{\mathrm{prev}}+1]$ such that $[bNR+1,bNR+NR]\cap W_{j,v}=\emptyset$.
\end{proof}

Therefore, each finite block is fresh at the moment it is used, and the analysis from the long pointer idealization goes through when using pointers $p_{j,v}=b_{j,v}NR$. 
This completes the proof of \Cref{lem:masked_RO}.
\end{proof}

\subsection{An attack in the RO model}
Now, we are ready to prove the main result (\Cref{thm:mainIntro}). Formally,

\begin{theorem}\label{thm:RO_main}
Let $n,N=n^{10},k=O(n\log N),t,\delta=N^{-6}$ and $R=O(\delta^{-1} N^n (N 2^N\log\tfrac1\delta)^{k}kt)$.
Let $\MMM$ be a deterministic, $0.1$-accurate mechanism in the RO model that is given $n$ samples from an unknown distribution over the domain $[N]\times[R]$, and whose running time is bounded by $t$. 
There exists an analyst $\AAA$ that picks queries whose representation size is $O(nN\log N+\log t)$ bits, such that the queries can be evaluated on each domain point $x\in[N]\times [R]$ in $O(nN\log N+\log t)$ time, and $\AAA$ causes $\MMM$ to fail in $k=O(n\log N)$ rounds.
\end{theorem}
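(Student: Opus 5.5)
The plan is to reduce \Cref{thm:RO_main} to \Cref{lem:masked_RO} by embedding the masked RO model into the plain RO model. The target distribution will be uniform over the pairs $(i,m_i)\in[N]\times[R]$, where the analyst first draws $m_1,\dots,m_N$ i.i.d.\ uniform in $[R]$, independently of $\RO$. A sample of $n$ points from this distribution is then exactly a uniform sample $S\subseteq[N]$ (with repetitions, which occur with probability $o(1)$ for $N=n^{10}$ and which we charge to the failure probability), together with the offsets $\{m_i:i\in S\}$. This is precisely the input of a mechanism in the masked RO model.

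Next we encode a masked-RO round $(p,\tilde q)$ as an RO query. Define $q^{\RO}(x)$ for $x=(i,m)$ by $q^{\RO}(i,m)=\tilde q_i\oplus\RO(p+(i-1)R+m)$. This query uses exactly one oracle bit, in the form of the Remark. On support points $(i,m_i)$ it equals $q_i=\tilde q_i\oplus y^p_i$, so its population mean coincides with the masked-RO target $\frac1N\sum_i q_i$. Consequently $0.1$-accuracy of $\MMM$ in the RO model implies $0.1$-accuracy of the induced masked-RO mechanism $\MMM'$, defined by $\MMM'(S,\{m_i\},(p,\tilde q))=\MMM(\{(i,m_i)\},q^{\RO})$. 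This $\MMM'$ is deterministic and runs in time $t$, up to the cost of handing it the query description.

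The query description is the pair $(p,\tilde q)$. In \Cref{lem:masked_RO} the pointer $p=bNR$ has $O(kN+\log t)=O(nN\log N+\log t)$ bits, and $\tilde q$ has $N$ bits, so the representation size is as claimed. Evaluating $q$ at $(i,m)$ requires computing the address $p+(i-1)R+m$, which is arithmetic on $O(nN\log N+\log t)$-bit numbers, plus one oracle call and an XOR. Finally, $R$ in the theorem should be chosen to dominate $\delta^{-1}N^n2^{Nk}E^kkt$ with $E=\lceil(N+1)\log\frac1\delta\rceil$, which matches $O(\delta^{-1}N^n(N2^N\log\frac1\delta)^kkt)$. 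The analyst $\AAA$ runs the analyst of \Cref{lem:masked_RO} against $\MMM'$ and translates each $(p,\tilde q)$ into $q^{\RO}$, so it forces an error within $k$ rounds except with probability $(k+1)\delta+o(1)=o(1)$.

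The main obstacle is the subtle point that $\MMM$ now receives a general query description rather than the masked-RO input. It can evaluate $q^{\RO}$ at arbitrary points $(i,m)$, including off-support points with $m\ne m_i$. Each such evaluation is just an RO query at some location, but it could in principle be an out-of-sample mask location. I would handle this by arguing that any evaluation $\MMM$ performs reduces to an RO query it could equally have made in the masked RO model, since the description $(p,\tilde q)$ is given and the mechanism is allowed arbitrary oracle access there. Claim~\ref{claim:no_oos_mask_query} then already covers these queries, because $m_i$ for $i\notin S$ remains uniform and unknown to $\MMM$. I would verify that its union bound counts such accesses within the $kt$ oracle queries.
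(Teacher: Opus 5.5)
Your proposal is correct and follows essentially the same route as the paper's proof. It embeds the masked RO model via the distribution uniform over $\{(i,m_i)\}$ and the queries $q(i,m)=\tilde q_i\oplus\RO(p+(i-1)R+m)$ described by $(p,\tilde q)$, notes that every evaluation of $q$ is a single RO read (so the induced mechanism has the same information and oracle access as a masked-RO mechanism, which \Cref{lem:masked_RO}'s Claim~\ref{claim:no_oos_mask_query} already covers), and then invokes \Cref{lem:masked_RO}. Your offset $(i-1)R$ is, if anything, more consistent with the lemma's mask definition than the $iR$ written in the paper's proof.
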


Observe that the mechanism's running time is sublinear in the domain size, hence, in particular, the mechanism cannot evaluate the entire truth table of a query.

\begin{proof}
The proof is by forcing the mechanism to simulate a mechanism in the masked RO model established in \Cref{lem:masked_RO}.
I.e., we will show how to use the mechanism $\MMM$ to construct a deterministic, $0.1$-accurate masked RO mechanism $\MMM_{\text{masked}}$ over the domain $[N]$, with the exact round complexity, which will yield the theorem statement.

\paragraph{Domain Extension and Distribution Setup.} 
Define the domain
$\mathcal{X}' = [N] \times [R]$, for the same value of $R$ as in \Cref{lem:masked_RO}. Draw $N$ pointers $m_1,\ldots,m_N$ independently and uniformly at random over $[R]$. Define the underlying distribution $\mathcal{P}$ over $\mathcal{X}'$ to be the uniform distribution over $\{(i,m_i) \mid i\in [N]\}$. 
When a dataset $X' = (x'_1, \dots, x'_n)$ of size $n$ is drawn i.i.d. from $\mathcal{P}$, each individual sample point takes the form $x'_i = (x_i, m_{x_i})$, where $x_i$ is drawn uniformly and independently from $[N]$. Therefore, the mechanism gets a dataset $(x_1,\ldots,x_n)\in [N]^n$, along with the corresponding pointers $m_{x_1},\ldots,m_{x_n}$.

\paragraph{Formulating a Query.}
To form a query, the analyst $\AAA$ does the following. Let $\AAA_{mask},\MMM_{mask}$ be an analyst and a mechanism operating in the masked RO model. At each round $j\in [k]$, suppose $\AAA_{mask}$ outputs a masked query $\tilde{q}_j\in \{0,1\}^N$ and a pointer $p_j$.
The analyst $\AAA$ formulates the query $q_j(i,m)=\tilde{q}_j(i)\oplus \RO(p_j+iR+m)$ for all $i,m$.
This query can be represented by the pair $(\tilde{q}_j,p_j)$, and thus have size $O(N+nN\log N+\log t)$. Given a domain element $(i,m)$, one can compute $I=p_j+iR+m$, query the random oracle at index $I$ and eventually evaluate $q_j(i,m)=\tilde{q}_j(i)\oplus \RO(I)$. The running time of this evaluation is dominated by the $O(nN\log N+\log t)$ time it takes to compute $I$.

\paragraph{Reduction to Masked RO Mechanism.}
Observe that for a point $(i,m_i)$ in the support, we have $q_j(i,m_i)=\tilde{q}_j(i)\oplus \RO(p_j+iR+m_i)$, which equals the query issued by $\AAA_{mask}$.
Let us observe the information $\MMM$ holds and what it can compute. It holds a dataset $(x_1,\ldots,x_n)$, pointers $(m_{x_1},\ldots,m_{x_n})$, the transcript $T_{j-1}$, the query pointer $p_j$, the masked query $\tilde{q}_j$, and it can read at most $t$ bits of $q_j$ or of the RO. Since the mechanism holds $\tilde{q}_j$, reading the $(i,m)$-th bit of $q_j$ is equivalent to reading the $(p_j+iR+m)$-th bit of the random oracle. Thus, the mechanism has the same information and random oracle access as $\MMM_{mask}$, who may issue at most $t$ queries to the RO.
Therefore, as long as $\MMM$ is $0.1$-accurate, so is $\MMM_{mask}$. By \Cref{lem:masked_RO}, $\MMM_{mask}$ answers at most $O(n\log N)$ adaptive queries, and thus also $\MMM$, concluding the proof.

\end{proof}

\subsection{Limited randomness}\label{sec:limitedRandom}

We now consider mechanisms that have a small amount of private randomness. Specifically, consider mechanisms holding $r$ random bits. In all the models we considered, we can view this randomness as part of the input: that is, instead of a distribution $\mathcal{P}^n$ over $\mathcal{X}^n$, we consider a product distribution $\mathcal{P}^n\otimes U(\{0,1\}^r)$ over $\mathcal{X}^n\otimes\{0,1\}^r$. Therefore, the number of possible datasets is bounded by $N^n2^r$. If the analyst finds a separating query in every round, the mechanism fails in $O(n\log N + r)$ rounds. We briefly explain how to change the presented proofs, and show that the analyst can find such separating queries.

First, observe that the argument for natural mechanisms goes through, as follows. Let $X_m\subseteq [N]$ be the set of common points, i.e., points appearing in the dataset $S$ for at least an $\frac{8}{10}$ fraction of the states $(S, \rho) \in H_T$. The counting argument leading to \Cref{eq:common_points} still holds, hence the number of common points is $m=O(n)$. Consequently, there is a threshold query $q_i$, where $i$ is not common, where the majority's answer transitions from $<0.5$ to $\geq 0.5$, hence $q_i$ is separating.

The argument for natural mechanisms with a mask is unchanged, since it is by a black-box simulation. In the random oracle model, the analyst has to prepare for $2^r$ possibilities of the mechanism's random bits, and have that in (most of) these realizations, with high probability the mechanism does not observe out-of-sample mask bits and does not violate the independence between experiments. To achieve this, it suffices to increase $R$ and $L_{prev}$ by a $2^r$ factor. Therefore, we obtain,

\begin{theorem}
Let $n,N=n^{10},k=O(n\log N),t,\delta=N^{-6}, r$ and $R=O(\delta^{-1} N^n (N 2^N\log\tfrac1\delta)^{k}kt 2^r)$.
Let $\MMM$ be a deterministic, $0.1$-accurate mechanism in the RO model, that is given $r$ private random bits and $n$ samples from an unknown distribution over the domain $[N]\times[R]$, and whose running time is bounded by $t$. 
There exists an analyst $\AAA$ that picks queries whose representation size is $O(nN\log N+\log t+r)$ bits, such that the queries can be evaluated on each domain point $x\in[N]\times [R]$ in $O(nN\log N+\log t+r)$ time, and $\AAA$ causes $\MMM$ to fail in $k=O(n\log N+r)$ rounds.
\end{theorem}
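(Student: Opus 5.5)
The plan is to follow the proof of \Cref{thm:RO_main} verbatim, with the private random string $\rho\in\{0,1\}^r$ treated as an extra, hidden part of the dataset. A \emph{state} is a pair $(S,\rho)\in[N]^n\times\{0,1\}^r$. For a fixed $\rho$, the mechanism $\MMM(\cdot,\rho)$ is deterministic. The analyst keeps the set $H_T$ of states consistent with the transcript, and initially $|H_T|\le N^n2^r$. If every round produces a separating query, each one removes at least a $1/10$ fraction of the states. Hence after $O(\log(N^n2^r))=O(n\log N+r)$ rounds the mechanism must already have erred, which fixes $k=O(n\log N+r)$.

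\textbf{Step 1: natural and masked layers.} I first redo \Cref{lem:exists_separating_query} with states in place of datasets. A point $x$ is \emph{common} if it lies in $S$ for at least $\frac{8}{10}|H_T|$ of the states $(S,\rho)\in H_T$. The double count in \Cref{eq:common_points} is unchanged, so $m=O(n)$. Moving from threshold $q_j$ to $q_{j+1}$ can only change the answers of states whose $S$ contains $z_{j+1}$. The same majority-crossing argument therefore yields a separating threshold query, or else a query on which the mechanism errs.

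\Cref{lemma:seminatural} then follows word for word, since it applies \Cref{lem:exists_separating_query} as a black box to the restricted mechanism $\MMM_v(E,\rho)=\MMM(E,v,\rho)$.

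\textbf{Step 2: masked RO layer.} I re-prove \Cref{lem:masked_RO} with $N^n$ replaced by $N^n2^r$ everywhere it counts datasets. In \Cref{claim:no_oos_mask_query}, the union bound now ranges over all states, so it suffices that $R\ge \delta^{-1}N^n2^r2^{Nk}E^kkt$. The per-query bound $1/R$ is unaffected, because I can also condition on $\rho$ before averaging over the out-of-sample offsets.

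In \Cref{claim:one_round_success}, the hybrid mechanism for fixed $\RO_{\rm out}$ is a masked mechanism over states, so the $\frac1{N+1}$ bound holds for each experiment. Independence across experiments still follows from the freshness of each block. \Cref{claim:lying_to_real} is unchanged, since the simulation now ranges over all states rather than all datasets.

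For finite pointers, I set $L_{\rm prev}=kE(N+N^n2^r(N+1)t)$. \Cref{claim:long_pointer_reduction} then works as before, and the pointer length becomes $\log((L_{\rm prev}+1)NR)=O(kN+\log t+r)$. This gives the claimed representation size and evaluation time once $k=O(n\log N)$ is inserted into the $kN$ term, as in \Cref{thm:RO_main}.

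\textbf{Step 3: lifting to the real RO model.} The reduction in the proof of \Cref{thm:RO_main} is a pure simulation, and it carries $\rho$ along unchanged. This step is immediate.

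\textbf{Main obstacle: bookkeeping of $k$.} The round count is now $k=O(n\log N+r)$, and $k$ enters $R$ through $2^{Nk}E^k$. I must check that the stated $R$ and pointer bounds, which are written with $k$, still work when $k$ grows with $r$. The stated expression for $R$ is in terms of $k$ and so tolerates this. The bound $O(nN\log N+\log t+r)$ on the representation size, however, would need a $kN$ term, giving $O(N(n\log N+r)+\log t)$. My plan is to track $k$ symbolically and accept whatever the actual $\log R$ is.

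The other subtle point is conditioning: the $1-o(1)$ guarantee must hold jointly over $\rho$, the offsets, and $\RO$. I would handle this by proving every claim conditionally on each fixed $\rho$ and then averaging over $\rho$.
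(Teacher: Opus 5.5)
Your proposal is correct and is essentially the paper's own argument: treat the private string $\rho$ as part of the hidden state so that there are at most $N^n2^r$ states, rerun the natural and masked lemmas with common points defined over states, and inflate $R$ and $L_{\mathrm{prev}}$ by a factor of $2^r$. Your bookkeeping worry is well founded and is an inconsistency in the statement, not a gap in your proof: the union bound in \Cref{claim:no_oos_mask_query} contributes $2^{Nk}$ with $k=O(n\log N+r)$ rounds, so this argument (yours and the paper's sketch alike) yields pointer size and evaluation time $O(N(n\log N+r)+\log t)$, which matches the stated $O(nN\log N+\log t+r)$ when $r=O(n\log N)$ but in general has an $Nr$ term instead of $r$, because the statement uses $k=O(n\log N)$ in the definition of $R$ and $k=O(n\log N+r)$ in the conclusion.
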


\section{Removing the Random Oracle: a Lower Bound in the Plain Model}\label{sec:plain}

In this section we prove Theorem~\ref{thm:plain-informal}. The plan follows the two-step outline of this paper: first force the mechanism to behave like a natural one, then defeat it by the halving attack of Section~\ref{sec:natural}. But the enforcement tool changes. In Section~\ref{sec:RO-main}, out-of-sample query values are hidden behind unread oracle bits; here, they are hidden behind a Ramsey extraction. On a suitably homogeneous subdomain, the answers of an arbitrary deterministic mechanism (discretized to a single bit, see below) can depend only on the \emph{order type} of the points appearing in its input, which is the dataset together with all query parameters used so far. We then design the domain and the query family so that changing an out-of-sample query value swaps a parameter between two \emph{order-adjacent} points of the extracted set, one point encoding the value $1$ and its unused twin the value $0$: the swap never changes the order type unless the dataset contains the corresponding support element, so at the discretized level the mechanism provably cannot react, and is forced to be natural. In this dictionary, the unused twin plays the role of an unread mask bit. The role of the dynamic pointers of Section~\ref{sec:RO-main} is played by a product structure: the domain consists of $k$-tuples and round $j$ reads only the $j$-th coordinate, so the parameters of past rounds are never touched or crossed by the current round's swaps. Notably, the two pitfalls of Section~\ref{sec:Pitfalls} do not arise: the invariance is deterministic and holds for every realization, so no probability amplification within a round, and no severing of historical dependence across rounds, is needed.

One simplification is adopted throughout: we analyze the mechanism's answers at \emph{one-bit granularity}, replacing each answer $a$ by the indicator $\one\{a\ge\tfrac12\}$, and prove that in every round some query is separating in a bit-level sense. This loses nothing essential, as the walk of Lemma~\ref{lem:exists_separating_query} only ever compares answers to $\tfrac12$.

\subsection{The construction}\label{subsec:plain-construction}

Fix $n$ sufficiently large and set
\[
N:=n^{10},\qquad
k:=\Bigl\lceil\log_{10/9} N^{\,n}\Bigr\rceil+1\ =\ O(n\log N),\qquad
r_{\max}:=k\,(n+N).
\]
Here $N$ is the support size of the target distribution, $k$ the number of rounds, and $r_{\max}$ the length of the tuple that will encode a full $k$-round dataset-plus-queries interaction (Section~\ref{subsec:plain-ramsey}). Let $[M]=\{1,\dots,M\}$ be a linearly ordered set, where $M$ will be set later (a very large, Ramsey-tower-type, value). The domain is the set of $k$-tuples over $[M]$,
\[
\X\ :=\ [M]^k\ =\ \bigl\{\,t=(t_1,\dots,t_k)\ :\ t_1,\dots,t_k\in[M]\,\bigr\}.
\]
We restrict the queries to be Boolean statistical queries $q:\X\to\{0,1\}$ from the following family.

\begin{definition}[Canonical query family]\label{def:plain-family}
For a round index $j\in[k]$ and a vector $\vec p=(p^1<p^2<\dots<p^N)$ of $N$ points from $[M]$, define
\[
q_{j,\vec p}:\X\to\{0,1\},\qquad q_{j,\vec p}(t)\ :=\ \one\bigl\{\,t_j\in\{p^1,\dots,p^N\}\,\bigr\} ,
\]
with canonical description $(j,\vec p)$.
\end{definition}

The mechanism $\M$ is an arbitrary deterministic mechanism in the sense of Definition~\ref{def:epsAcc}: it receives a dataset $S=(s_1,\dots,s_n)\in\X^n$ and answers the queries $q_1,q_2,\dots$ in sequence. Since $\M$ is deterministic, its round-$j$ answer is a function of the dataset and the queries so far; for a candidate dataset $S'$ we write $a_j(S';q_{\le j})$ for this %
answer and $\bit_j(S';q_{\le j}):=\one\{a_j(S';q_{\le j})\ge\tfrac12\}$ for its one-bit discretization.

\subsection{Ramsey extraction: enforcing natural behavior}\label{subsec:plain-ramsey}
In this section, we use Ramsey's theorem to show that for sufficiently large $M$, there exists a subset of $\X:=[M]^k$ with some favorable properties, which we now describe.

Since our data is ordered (a dataset is a list; a query description is an increasing vector), we work with an order-type variant of Ramsey's theorem, stated directly for tuples and accommodating repeated entries, which our encodings contain. For $\vec u,\vec v\in X^{r}$ over a linearly ordered set $X$, say that $\vec u$ and $\vec v$ have the same \emph{weak order type} if for all $s,s'\in[r]$ it holds that $u_s<u_{s'}\iff v_s<v_{s'}$; note that then also $u_s=u_{s'}\iff v_s=v_{s'}$.

\begin{lemma}[Order-type Ramsey]\label{lem:plain-ordramsey}
For all integers $r,c\ge1$ and $m\ge r$ there exists a finite number $R^{\mathrm{ord}}_{r}(m;c)$ such that: for every linearly ordered set $X$ with $|X|\ge R^{\mathrm{ord}}_{r}(m;c)$ and every coloring $\chi:X^{r}\to[c]$ there exists $Y\subseteq X$ with $|Y|=m$ on which the color depends only on the weak order type; that is, $\chi(\vec u)=\chi(\vec v)$ for all $\vec u,\vec v\in Y^{r}$ of the same weak order type.
\end{lemma}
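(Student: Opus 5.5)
The plan is to reduce the order-type variant to the classical (hypergraph) Ramsey theorem for $r'$-element subsets, applied once for each possible number of distinct entries. A tuple $\vec u\in X^r$ is determined by two pieces of data. The first is its weak order type $\tau$, which is an ordered set partition of $[r]$ into $s\le r$ blocks of equal coordinates, linearly ordered by value. The second is the increasing $s$-element set $\{w_1<\dots<w_s\}\subseteq X$ of its distinct values. The number of weak order types is a finite number $F_r$ (the Fubini number).

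Given $\chi$, define a coloring $\chi'$ of $s$-subsets of $X$, for each $s\in[r]$ separately. For a set $W=\{w_1<\dots<w_s\}$, let $\chi'_s(W)$ be the vector $(\chi(\tau(W)))_{\tau}$, where $\tau$ ranges over all weak order types with exactly $s$ blocks and $\tau(W)$ denotes the unique tuple of type $\tau$ whose distinct values are $W$. This uses at most $c^{F_r}$ colors.

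Apply Ramsey's theorem iteratively. First pass to a set $Y_r$ homogeneous for $\chi'_r$. Inside $Y_r$, pass to $Y_{r-1}$ homogeneous for $\chi'_{r-1}$, and continue down to $Y_1$. Homogeneity is inherited by subsets, so the final $Y=Y_1$ is simultaneously homogeneous for every $\chi'_s$. The required size is obtained by defining $R^{\mathrm{ord}}_r(m;c)$ as the nested composition
\[
R_r\Bigl(R_{r-1}\bigl(\cdots R_1(m;c^{F_r})\cdots;c^{F_r}\bigr);c^{F_r}\Bigr),
\]
where $R_s(\cdot\,;C)$ is the classical $s$-uniform Ramsey number with $C$ colors. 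This number is finite, and we may trim $Y$ to exactly $m$ elements.

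Now suppose $\vec u,\vec v\in Y^r$ have the same weak order type $\tau$ with $s$ blocks. Then $\vec u=\tau(W)$ and $\vec v=\tau(W')$ for some $s$-subsets $W,W'\subseteq Y$. Since $\chi'_s(W)=\chi'_s(W')$, comparing the $\tau$-coordinates of these two vectors gives $\chi(\vec u)=\chi(\vec v)$. The hypothesis $m\ge r$ is needed only so that every weak order type is actually realized in $Y$; the argument itself does not depend on it.

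The only delicate step is the bookkeeping of the bijection between tuples and pairs (type, set of distinct values). In particular, one must check that "same weak order type" in the paper's sense, meaning equivalence of all pairwise comparisons, coincides exactly with equality of the ordered set partition. This holds because the comparisons $<$ determine $=$ as well, as the paper notes. Everything else is a direct invocation of the classical finite Ramsey theorem.
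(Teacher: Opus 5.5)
Your proof is correct and follows essentially the same route as the paper: identify the tuples of a fixed weak order type with the $s$-sets of their distinct values, pull $\chi$ back to colorings of $s$-subsets, and apply the classical hypergraph Ramsey theorem repeatedly on nested subsets. The only difference is bookkeeping. You bundle all types with the same number $s$ of blocks into one product coloring with at most $c^{F_r}$ colors, giving $r$ nested applications, whereas the paper applies Ramsey once per template with $c$ colors, giving $\tau$ nested applications; both yield a finite bound.
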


Lemma~\ref{lem:plain-ordramsey} follows from the classical finite hypergraph Ramsey theorem by a short reduction; the classical statement and the derivation are given in Appendix~\ref{app:ramsey}. Only the finiteness of $R^{\mathrm{ord}}_{r}(m;c)$ is used below; it is of tower type.

Consider the following, as input to the mechanism $\M$: dataset $x_1,\ldots,x_n\in \X$, and queries of canonical form $(1,\vec p_1),\ldots,(k,\vec p_k)$. Since the queries are given sequentially, the index $j$ of query $(j,\vec p_j)$ could be removed by adding a preprocessing step to $\M$. Therefore, we can view $\M$'s input sequence as $(x_1,\ldots,x_n,p_1,\ldots,p_k)$, which is in fact a sequence of $nk+kN=r_{max}$ elements from $[M]$. 
In the one-bit discretization, $\M$'s output is a sequence of $k$ bits. That is, $\M$ maps an $r_{max}$-tuple to $\{0,1\}^k$, or equivalently, to one of $2^k$ colors.

Conversely, every tuple $w\in[M]^{r_{max}}$ decodes positionally into a dataset (any $k$-tuple over $[M]$ is a domain element) and $k$ candidate query descriptions; we say that $w$ is $j$-\emph{legal} if the first $j$ query blocks are strictly increasing, as the canonical description requires. Color $[M]^{r_{max}}$ by 
\[
\chi(w) := \text{the $j$-bit string of the configuration decoded from }w, \text{for the largest } j \text{ s.t. } w\ \text{is } j-\text{legal},
\]
a coloring with at most $2^{k}$ colors. 
Apply Lemma~\ref{lem:plain-ordramsey} to $\chi$, and take $M\geq R^{ord}_{r_{max}}(2Nk,2^k)$ (note that $r_{max}=k(N+n)<2Nk$, which is needed for applying Lemma~\ref{lem:plain-ordramsey}). This yields,

\begin{theorem}[Order-type invariance]\label{thm:plain-invariance}
For every deterministic mechanism $\M$ there exists $D\subseteq[M]$ with $|D|=2Nk$ such that for any two $(nk+jN)$-tuples $x,y$ of elements in $D$, for $j\leq k$, the following holds. If $x$ and $y$ have the same weak order type and are legal, then $\bit(x)=\bit(y)$.
\end{theorem}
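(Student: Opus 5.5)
The plan is to apply Lemma~\ref{lem:plain-ordramsey} to the coloring $\chi$ just defined on $[M]^{r_{\max}}$, and then push the resulting invariance down from full $r_{\max}$-tuples to the shorter $(nk+jN)$-tuples by padding. Concretely, take $X=[M]$, $r=r_{\max}$, $m=2Nk$ and $c=2^k$. The hypotheses hold because $r_{\max}=k(n+N)<2Nk=m$ (as $n<N$) and $M\ge R^{\mathrm{ord}}_{r_{\max}}(2Nk;2^k)$. The lemma then gives $D\subseteq[M]$ with $|D|=2Nk$ such that $\chi(\vec u)=\chi(\vec v)$ whenever $\vec u,\vec v\in D^{r_{\max}}$ have the same weak order type.

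Next, fix $j\le k$ and two $j$-legal $(nk+jN)$-tuples $x,y\in D^{nk+jN}$ with the same weak order type. Here $x,y$ encode a dataset together with the first $j$ query descriptions, and $\bit(x)$ denotes the $j$ answer bits of $\M$ on that input. Extend both by one common padding block $z\in D^{(k-j)N}$ to full tuples $x'=(x,z)$ and $y'=(y,z)$. I would choose $z$ so that each $x'$ and $y'$ is $k$-legal (or at least has the same maximal legality index as the other) and so that $x',y'$ have the same weak order type.

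The cleanest choice is to repeat one fixed element $d\in D$ throughout $z$. Repeated entries are allowed by weak order types. Since $d$ occupies the same positions in both tuples and $x,y$ have the same weak order type, the comparisons between $d$ and the entries of $x$, versus the entries of $y$, need not agree, so $d$ must be handled with care. Since $|D|=2Nk$ while only $nk+jN<2Nk$ elements are used, I would try to pick $d$ as $\max D$ or $\min D$. The extreme element of $D$ compares the same way with every entry, except for the possibility that it coincides with some entry. Taking $d=\max D$ when $\max D$ does not occur in $x$ or $y$ gives the same weak order type for $x'$ and $y'$.

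The main obstacle is that a constant padding block is not strictly increasing, so $x'$ is only $j$-legal, not $k$-legal. That turns out to be acceptable. By the definition of $\chi$, for a $w$ whose largest legal index is exactly $j$, $\chi(w)$ is the $j$-bit string produced by $\M$ on the first $j$ rounds. Because $\M$ answers sequentially, this string depends only on the prefix, so $\chi(x')=\bit(x)$ and $\chi(y')=\bit(y)$. Legality of block $j+1$ is itself an order-type property (strict increase within the block), so $x'$ and $y'$ fail it simultaneously, and both have largest legal index exactly $j$, giving well-defined colors of the same length. Ramsey then yields $\bit(x)=\chi(x')=\chi(y')=\bit(y)$.

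It remains to handle the case where an extreme element of $D$ occurs in $x$. If the order types match, $\max D$ occurring in $x$ at the maximal rank corresponds to the maximal entry of $y$, which may not be $\max D$. To avoid this, I would shrink $D$ slightly: apply Ramsey with $m=2Nk+1$ and discard the top element to serve as a padding reserve, or equivalently note the stated bound leaves slack. Alternatively, pad by setting $z$ to repeat the last entry of the tuple, which is order-type determined, and this avoids the issue entirely.
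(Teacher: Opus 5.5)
Your proposal is correct: once you settle on padding each tuple with $(k-j)N$ copies of its own last entry, it is exactly the paper's argument. You apply Lemma~\ref{lem:plain-ordramsey} to $\chi$ on $[M]^{r_{\max}}$, note that the padded tuples keep a common weak order type and both have maximal legal index exactly $j$, and conclude $\bit(x)=\chi(x')=\chi(y')=\bit(y)$. The detour through padding by $\max D$ is unnecessary. Your claim that the stated bound on $M$ ``leaves slack'' for $m=2Nk+1$ is not justified, but you do not need that variant.
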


\begin{proof}
The case $j=k$ is immediate by Lemma~\ref{lem:plain-ordramsey}.
Consider $j<k$, and suppose $x$ and $y$ are legal. Define $x'$ by appending to $x$ its last element $(k-j)N$ times, and similarly define $y'$ with respect to $y$. If $x$ and $y$ have the same weak order type, then so do $x'$ and $y'$. Therefore, they obtain the same color in $\{0,1\}^k$.
Both $x'$ and $y'$ are $j$-legal, and thus the bit-discretizations of the $j$-th outputs are equal.
\end{proof}

Fix such a $D$ for the given $\M$, label it as in Definition~\ref{def:plain-support}, and let $\PP=\PP(D)$.
\begin{definition}[Labelled set, support]\label{def:plain-support}
Let $D\subseteq[M]$ with $|D|=2Nk$, and enumerate $D$ in increasing order as
\[
x^1_{1,a}<x^1_{1,b}<x^1_{2,a}<x^1_{2,b}<\dots<x^1_{k,a}<x^1_{k,b}\ <\ x^2_{1,a}<\dots<x^2_{k,b}\ <\ \dots\ <\ x^N_{1,a}<\dots<x^N_{k,b}.
\]
We call $(x^i_{j,a},x^i_{j,b})$ the \emph{$(i,j)$-pair}; its two members are adjacent in $D$, and distinct pairs are disjoint. The \emph{$i$-th support tuple} is $t^i:=(x^i_{1,a},x^i_{2,a},\dots,x^i_{k,a})\in\X$, and so the support is $\{t^1,\dots,t^N\}$.
The $b$-points lie outside the support.
\end{definition}

\subsection{The attack}\label{subsec:plain-attack}

In round $j$, the analyst uses only queries $q_{j,\vec p}$ with $p^i\in\{x^i_{j,a},x^i_{j,b}\}$ for every $i\in[N]$ (such vectors are automatically increasing). Since distinct pairs are disjoint, $x^i_{j,a}\in\{p^1,\dots,p^N\}$ iff $p^i=x^i_{j,a}$, so
\begin{equation}\label{eq:plain-mean}
q_{j,\vec p}(t^i)\ =\ \one\{p^i=x^i_{j,a}\}
\qquad\text{and}\qquad
\E_{t\sim\PP}\bigl[q_{j,\vec p}(t)\bigr]\ =\ \tfrac1N\bigl|\{i:\ p^i=x^i_{j,a}\}\bigr| .
\end{equation}
Thus, on the support, the round-$j$ queries realize exactly all Boolean functions of the tuple index $i\in[N]$ --- an $a$-point in slot $i$ encodes the value $1$, its twin $b$-point the value $0$ --- with true means under the analyst's exact control.

The next lemma is the engine of the section: on the extracted set, every deterministic mechanism is forced to behave, at the level of discretized answers, like a \emph{natural} mechanism over the index domain $[N]$ --- in every round, history and all.
We say that a mechanism is ``bit-natural'' if by discretizing its outputs by $\one\{a\geq 0.5\}$, we obtain a natural mechanism. Observe that Lemma~\ref{lem:plain-keep} essentially forces the mechanism to behave like a ``bit-natural'' mechanism.

\begin{lemma}[Forcing ``bit-natural'']\label{lem:plain-keep}
    Fix a dataset $S$, a round $j$, the asked queries $q_{1,\vec p_1},\dots,q_{j-1,\vec p_{j-1}}$, and two potential new queries $q_{j,\vec p},q_{j,\vec l}$ such that $\vec p$ and $\vec l$ differ only in the $i$-th element, i.e., one has $x_{j,a}^i$, and the other has $x_{j,b}^i$. 
    If $x_{j,a}^i\notin S$, then $b(S,q_{1,\vec p_1},\dots,q_{j-1,\vec p_{j-1}},q_{j,\vec p})=b(S,q_{1,\vec p_1},\dots,q_{j-1,\vec p_{j-1}},q_{j,\vec l})$.
\end{lemma}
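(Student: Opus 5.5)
The plan is to apply Theorem~\ref{thm:plain-invariance} directly. I will show that the two input tuples presented to $\M$ have the same weak order type and are both legal. Write the round-$j$ input as the $(nk+jN)$-tuple
\[
w=(s_1,\dots,s_n,\vec p_1,\dots,\vec p_{j-1},\vec p)\quad\text{and}\quad w'=(s_1,\dots,s_n,\vec p_1,\dots,\vec p_{j-1},\vec l),
\]
where each $s_u\in\X=[M]^k$ is flattened into its $k$ coordinates. The two tuples differ in exactly one position, namely the $i$-th entry of the last block: $w$ has $x^i_{j,a}$ there and $w'$ has $x^i_{j,b}$, or the other way around. The claimed equality of bits then follows from Theorem~\ref{thm:plain-invariance}, provided all entries lie in $D$, both tuples are legal, and their weak order types coincide.

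Legality is straightforward. Each query vector has one entry from each pair in the rows indexed by $i'=1,\dots,N$, and the pairs are increasing in $i'$, so every block is strictly increasing. Membership in $D$ requires that the dataset entries lie in $D$. I will restrict attention to datasets supported on $\{t^1,\dots,t^N\}$, which holds for every sample from $\PP$ and for every candidate dataset in the attack. Then every dataset coordinate is an $a$-point of $D$, and the earlier queries use only points of $D$ as well.

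The main step is the order-type comparison. Since $x^i_{j,a}$ and $x^i_{j,b}$ are adjacent in $D$ and all entries of $w,w'$ lie in $D$, replacing one by the other preserves every strict comparison with any entry $z\notin\{x^i_{j,a},x^i_{j,b}\}$. Indeed, no element of $D$ lies strictly between them, so $z<x^i_{j,a}\iff z<x^i_{j,b}$, and likewise for $>$. The only comparisons that could change are therefore between the modified position and some other position holding $x^i_{j,a}$ or $x^i_{j,b}$, since equalities could be created or destroyed.

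I will rule these out as follows. The point $x^i_{j,b}$ appears nowhere else: $b$-points are not in the support, so they do not occur in $S$; the blocks $\vec p_{j'}$ with $j'<j$ use only $(i',j')$-pairs, which are disjoint from the $(i,j)$-pair; and within $\vec p$ or $\vec l$ the $i$-th slot is the only one drawing from the $(i,j)$-pair. The point $x^i_{j,a}$ could occur elsewhere only as the $j$-th coordinate of $t^i$ inside $S$, or in the other blocks, which are excluded by the same disjointness. Hypothesis $x^i_{j,a}\notin S$ excludes the first case. Here I read $x^i_{j,a}\notin S$ as $t^i\notin S$, because coordinate $j$ of a support tuple equals $x^i_{j,a}$ only for $t^i$, and the points of $D$ are distinct. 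Hence the modified position is strictly comparable with every other position in the same way in $w$ and $w'$, the weak order types agree, and Theorem~\ref{thm:plain-invariance} gives $\bit(w)=\bit(w')$. I expect this bookkeeping to be the main obstacle, specifically making precise that the $(i,j)$-pair appears nowhere else except possibly through $t^i\in S$.
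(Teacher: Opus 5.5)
Your proof is correct and follows essentially the same route as the paper's. Both encodings differ only in slot $i$ of the $j$-th block, and the changed value occurs nowhere else: the $b$-point is off-support, $x^i_{j,a}\notin S$ by hypothesis, and distinct pairs are disjoint. Adjacency of $x^i_{j,a},x^i_{j,b}$ in $D$ then preserves every strict comparison, so Theorem~\ref{thm:plain-invariance} applies. You also explicitly check legality and that all dataset entries lie in $D$ (by restricting to datasets supported on $\{t^1,\dots,t^N\}$). The paper uses both facts tacitly, so your version fills in those details rather than taking a different route.
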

Therefore, the discretized answers of the mechanism are independent of the out-of-sample queries. Thus, for any distribution over the support $\{t^1,\ldots,t^N\}$, the mechanism is ``bit-natural''.
\begin{proof}[Proof of Lemma~\ref{lem:plain-keep}]
    Let $x=(S,\vec p_1,\ldots,\vec p_{j-1},\vec p)$ and $y=(S,\vec p_1,\ldots,\vec p_{j-1},\vec l)$. Assume $x_{j,a}^i\notin S$. We claim that $x$ and $y$ have the same weak order type. By Theorem~\ref{thm:plain-invariance}, this gives $\bit(x)=\bit(y)$, and comparing the $j$-th coordinates proves the lemma.

    Note that $x$ and $y$ differ only in the $i$-th element of the $j$-th query-block.
    Since all other positions carry identical entries, it suffices to only compare the changed positions. Assume without loss of generality that $x$ got $x_{j,a}^i$ and $y$ got $x_{j,b}^i$.
    Observe that $x_{j,b}^i$ is a $b$-point, hence not a coordinate of any support tuple, so it differs from every dataset entry of the encoding; and by assumption, $x_{j,a}^i\notin S$. Moreover, since distinct pairs are disjoint and slot $i$ of round $j'$ is drawn from the $(i,j')$-pair, neither $x_{j,a}^i$ nor $x_{j,b}^i$ equals any other query entry. Thus the changed position is distinct from all other positions in both encodings, and the equality relations agree. \emph{Strict inequalities:} $x_{j,a}^i$ and $x_{j,b}^i$ are adjacent in $D$, and every other entry lies in $D\setminus\{x_{j,a}^i,x_{j,b}^i\}$, hence is either smaller than both or larger than both; so its order relation to the changed position is the same in the two encodings.
\end{proof}

We show that ``bit-natural'' mechanisms have the same bound as natural mechanisms. 
The proof is analogous to the proof of Lemma~\ref{lem:exists_separating_query}.
\begin{lemma}[``Bit-natural'' mechanisms]\label{lem:bit_natural}
    A ``bit-natural'' mechanism that is $0.1$-accurate with respect to any distribution over the domain $[N]$ can answer at most $O(n\log N)$ queries from a computationally unbounded analyst.
\end{lemma}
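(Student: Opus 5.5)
We follow the proof of Lemma~\ref{lem:exists_separating_query} and the subsequent halving theorem almost verbatim. The only change is that every object (consistency of a dataset with the transcript, and separation of a query) is defined at the level of the discretized bit $\bit=\one\{a\ge\tfrac12\}$ rather than the real-valued answer. Concretely, the analyst keeps $H_T\subseteq[N]^n$, the set of datasets $S'$ whose discretized answers to the queries so far coincide with the observed bits. A query $q$ is called \emph{bit-separating} if each of the two possible bit values is produced by at most a $\tfrac{9}{10}$ fraction of $H_T$. After observing the bit of a bit-separating query, $|H_T|$ shrinks by a factor of at least $\tfrac{9}{10}$. Starting from $|H_T|\le N^n$, after $O(n\log N)$ such rounds $|H_T|\le1$.

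\textbf{Existence of a bit-separating query.} In a given round, let $X_m$ be the set of points appearing in at least $\tfrac{8}{10}$ of the datasets in $H_T$. By the double counting in \eqref{eq:common_points}, $m=O(n)$. Consider the threshold queries $q_0^{X_m},\dots,q_{N-m}^{X_m}$ of Definition~\ref{def:threshold_query}, with the uniform distribution on $[N]$.
\begin{itemize}
\item For $q_0^{X_m}$ the true mean is $m/N\le 0.1$, so any $0.1$-accurate answer is $<\tfrac12$, i.e.\ the correct bit is $0$. If more than a $\tfrac1{10}$ fraction of $H_T$ would output bit $1$, the query is either bit-separating or (if the true dataset is among them) forces an error.
\item Similarly, for $q_{N-m}^{X_m}\equiv1$ the correct bit is $1$.
\end{itemize}
So we may assume that at least $\tfrac9{10}$ of $H_T$ output $0$ on the first query and at least $\tfrac9{10}$ output $1$ on the last. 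Take the first index $j$ at which more than half of $H_T$ outputs $1$. On $q_{j-1}^{X_m}$ at least half output $0$, and only datasets containing the non-common point $z_j$ can change their bit between $q_{j-1}^{X_m}$ and $q_j^{X_m}$, because the mechanism is bit-natural. This is exactly the argument at the end of the proof of Lemma~\ref{lem:exists_separating_query}. If $q_{j-1}^{X_m}$ is not bit-separating, then at least $\tfrac9{10}$ of $H_T$ output $0$ on it. At most $\tfrac8{10}$ of $H_T$ contain $z_j$, so at least $\tfrac1{10}$ of $H_T$ still output $0$ on $q_j^{X_m}$. Since more than half output $1$, also at most $\tfrac9{10}$ output $1$, and therefore $q_j^{X_m}$ is bit-separating.

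\textbf{Termination.} Once $|H_T|=1$, we have $X_m=S$. Then $q_0^{X_m}$ and $q_{N-m}^{X_m}$ agree on every sample point, so a bit-natural mechanism outputs the same bit on both. The correct bits are $0$ and $1$ respectively, so one of the answers has error $>0.1$. The one genuinely new point to check carefully is that accuracy of the real answers implies correctness of the bits on these boundary queries, since the true means ($\le0.1$ and $1$) are bounded away from $\tfrac12$ by more than $0.1$. In contrast, for the intermediate walk no correctness of bits is needed at all, which is why discretization loses nothing.

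Rounds in which the real true dataset lies outside $H_T$ cannot occur, because the real dataset is always consistent with the observed bits. Hence, unless the mechanism errs earlier, every round reduces $|H_T|$ and the total number of rounds is at most $\log_{10/9}N^n+1=O(n\log N)$, matching the choice of $k$.
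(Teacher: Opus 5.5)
Your proposal is correct and follows the paper's route; the paper only says the proof is analogous to Lemma~\ref{lem:exists_separating_query}, and you carry that argument out at the level of the discretized bit: the common-points count, the threshold walk, and the halving count. One small wording fix: when more than $\tfrac9{10}$ of $H_T$ output bit $1$ on $q_0^{X_m}$ but the true dataset outputs $0$, the query is neither bit-separating nor error-forcing, yet it still shrinks $H_T$ to under a $\tfrac1{10}$ fraction, so your ``error or progress'' dichotomy holds as needed.
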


We now obtain,

\begin{theorem}[Plain-model lower bound]\label{thm:plain}
Let $N,k,r_{\max}$ be as above and $M$ sufficiently large as in Theorem~\ref{thm:plain-invariance}. For every deterministic mechanism $\M$ over $\X=[M]^k$ there exist a distribution $\PP$ over $\X$ and a computationally unbounded analyst $\A$, both depending on $\M$, such that
\[
\Pr_{S\sim\PP^n}\bigl[\,\M\ \text{answers all of }\A\text{'s queries within error }0.1\,\bigr]\ \le\ n^{-8} .
\]
In particular, no deterministic mechanism that is $0.1$-accurate in the sense of Definition~\ref{def:epsAcc} can answer $k=O(n\log N)=O(n\log n)$ adaptively chosen queries in the plain model.
\end{theorem}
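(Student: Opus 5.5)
The plan is to combine the three ingredients already in place. The distribution is $\PP=\PP(D)$, uniform on the support tuples $\{t^1,\dots,t^N\}$ of the Ramsey set $D$ from Theorem~\ref{thm:plain-invariance}. The analyst is the halving attacker of Section~\ref{sec:natural}, transplanted to index space. We first condition on the sample being collision-free, i.e.\ on the $n$ indices drawn from $[N]$ being distinct. With $N=n^{10}$ this fails with probability at most $\binom n2/N\le n^{-8}$, which is exactly the slack in the statement. Every remaining step is deterministic.

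\textbf{Step 1: pass to index space.} Identify a sample $S\sim\PP^n$ with its index set $I(S)\subseteq[N]$. Identify a round-$j$ query $q_{j,\vec p}$ (with $p^i\in\{x^i_{j,a},x^i_{j,b}\}$) with the Boolean function $f:[N]\to\{0,1\}$, $f(i)=\one\{p^i=x^i_{j,a}\}$. By \eqref{eq:plain-mean}, the true mean of the query is $\frac1N\sum_i f(i)$, and every $f$ is realizable. Because the round-$j$ query uses only $(i,j)$-pairs, the choice in round $j$ is independent of earlier rounds' parameters. So the full transcript is determined by $I(S)$ together with the sequence $f_1,\dots,f_j$.

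\textbf{Step 2: bit-naturality.} Apply Lemma~\ref{lem:plain-keep} repeatedly. Two round-$j$ functions $f,f'$ that agree on $I(S)$ differ only at indices $i\notin I(S)$, and for such $i$ we have $x^i_{j,a}\notin S$ (the $a$-points of round $j$ appear only in coordinate $j$ of $t^i$). Changing those coordinates one at a time therefore preserves $\bit_j$. Hence $\bit_j(S;q_{\le j})$ depends only on $I(S)$ and on the restrictions $f_1|_{I(S)},\dots,f_j|_{I(S)}$. This is the bit-natural mechanism over $[N]$ required by Lemma~\ref{lem:bit_natural}, and earlier rounds' queries enter only through these restrictions.

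\textbf{Step 3: run the attack.} The analyst maintains $H_T\subseteq\binom{[N]}{n}$, the set of index sets consistent with the observed bits. In each round it runs the threshold walk of Lemma~\ref{lem:exists_separating_query} at bit granularity:
\begin{enumerate}
\item Compute the common set $X_m$, which has size $m=O(n)$.
\item If more than $9/10$ of $H_T$ answer with bit $1$ on $q_0^{X_m}$ (true mean at most $m/N\le 0.01$), or with bit $0$ on $q_{N-m}^{X_m}$ (true mean $1$), then that query is the one to ask. The majority answer is then wrong, and otherwise it separates.
\item Otherwise, find the transition index and take $q_j^{X_m}$ or $q_{j+1}^{X_m}$, exactly as in the earlier proof. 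The step from $j$ to $j+1$ is justified by Step 2, since $z_{j+1}$ lies in fewer than $8/10$ of the sets.
\end{enumerate}
If the true sample is in the minority for the chosen query, then either it errs or $H_T$ shrinks by a factor of $9/10$.

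\textbf{Main obstacle.} The real difficulty is that separation only shrinks $H_T$ when the true mechanism answers accurately, and "error" must be measured against the real-valued answer $a_j$, not its bit. Threshold queries whose mean lies near $1/2$ do not force any particular bit. The way around this is to make the attacker win on \emph{every} outcome:
\begin{itemize}
\item If the actual answer is inconsistent with the (accurate) majority behavior, then $H_T$ shrinks, since sets disagreeing with the observed bit are discarded.
\item If $|H_T|=1$, then $X_m=I(S)$. The queries $q_0^{X_m}$ and $q_{N-m}^{X_m}$ agree on $I(S)$, so they receive the same bit. But their means are at most $0.01$ and equal to $1$ respectively, so one of the two answers must err by more than $0.1$. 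This is consistent, because bit $1$ implies $a\ge\frac12>0.11$, and bit $0$ implies $a<\frac12<0.9$.
\end{itemize}
Starting from $|H_T|\le N^n$, the choice $k=\lceil\log_{10/9}N^n\rceil+1$ rounds therefore suffices to force an error on every collision-free sample, which yields the bound $n^{-8}$.
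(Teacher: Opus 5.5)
Your route is the paper's: $\PP=\PP(D)$, then Lemma~\ref{lem:plain-keep} to make the one-bit answers natural over $[N]$, then the threshold walk of Lemma~\ref{lem:exists_separating_query} at bit granularity. That walk is the content of Lemma~\ref{lem:bit_natural}, which the paper only asserts by analogy.

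One step is false as written: you cannot identify a sample with its index set $I(S)$. The mechanism receives the ordered list $(s_1,\dots,s_n)$. Two orderings of the same support tuples, say $(t^1,t^2,\dots)$ and $(t^2,t^1,\dots)$, encode to tuples of different weak order type. So Theorem~\ref{thm:plain-invariance} relates nothing between them, and an arbitrary deterministic $\MMM$ may answer them differently. Your claim that the transcript is determined by $I(S)$ and $f_1,\dots,f_j$ therefore fails, and ``the bit of an index set'' in $H_T\subseteq\binom{[N]}{n}$ is not well defined. The repair is to track ordered index tuples, $H_T\subseteq[N]^n$, as in Section~\ref{sec:natural}. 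The following then go through verbatim:
\begin{itemize}
\item the common-point count;
\item Lemma~\ref{lem:plain-keep}, whose hypothesis only concerns which points occur in $S$;
\item the endgame at $|H_T|=1$;
\item the starting bound $|H_T|\le N^n$ you already use.
\end{itemize}

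A smaller hole is in your endpoint test. Write $f(j)$ for the fraction of $H_T$ whose bit on $q_j^{X_m}$ is $1$. Your ``otherwise'' case only ensures $f(0)\le 9/10$ and $f(N-m)\ge 1/10$. So there need not be any transition from a bit-$0$ majority to a bit-$1$ majority (e.g.\ $f\equiv 0.6$), and your walk has nowhere to stop. The fix:
\begin{itemize}
\item Ask $q_0^{X_m}$ as soon as $f(0)\ge 1/10$: bit $1$ is an error, and bit $0$ keeps at most $9/10$ of $H_T$.
\item Ask $q_{N-m}^{X_m}$ as soon as $f(N-m)\le 9/10$, by the symmetric reasoning.
\item In the remaining case $f(0)<1/10<9/10<f(N-m)$, and your walk works.
\end{itemize}
This also dissolves your ``main obstacle''. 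Since $H_T$ is defined through bits, a query on which each bit class contains at most $9/10$ of $H_T$ shrinks $H_T$ whatever is returned. Accuracy is needed only at the two endpoint queries.
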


\begin{proof}
Take $\PP=\PP(D)$ for the set $D$ of Theorem~\ref{thm:plain-invariance}. Apply Lemma~\ref{lem:plain-keep} to force the mechanism to behave like a ``bit-natural'' mechanism. Lemma~\ref{lem:bit_natural} concludes the proof.
\end{proof}

\begin{remark}
As in Remark~\ref{rem:13}, the order of quantifiers --- a different distribution and analyst for every mechanism --- is unavoidable for any sub-quadratic bound. Mechanisms with $r$ bits of private randomness are handled exactly as in Section~\ref{sec:limitedRandom}, by treating the random string as part of the candidate state; the state space grows to $|\Ind|\cdot 2^r$ and the bound becomes $k=O(n\log n+r)$ rounds.
\end{remark}

\section*{Acknowledgments}

\paragraph{Edith Cohen:}  Partially supported by the Israel Science Foundation (grant 1156/23). 

\paragraph{Haim Kaplan:} Partially supported by the Israel Science Foundation (grant 1156/23), and the Blavatnik Research Foundation.

\paragraph{Yishay Mansour:} This project has received funding from the European Research Council (ERC) under the European Union’s Horizon 2020 research and innovation program (grant agreement No. 882396), by the Israel Science Foundation, the Yandex Initiative for Machine Learning at Tel Aviv University and a grant from the Tel Aviv University Center for AI and Data Science (TAD). 

\paragraph{Uri Stemmer:} Partially supported by the Israel Science Foundation (grant 1419/24), and the Blavatnik Research Foundation.

\bibliographystyle{alpha}

\begin{thebibliography}{DFH{\etalchar{+}}15b}

\bibitem[AMS96]{alon1996space}
Noga Alon, Yossi Matias, and Mario Szegedy.
\newblock The space complexity of approximating the frequency moments.
\newblock In {\em Proceedings of the twenty-eighth annual ACM symposium on Theory of computing}, pages 20--29, 1996.

\bibitem[Ang88]{angluin1988queries}
Dana Angluin.
\newblock Queries and concept learning.
\newblock {\em Machine learning}, 2(4):319--342, 1988.

\bibitem[BBF{\etalchar{+}}26]{BernsteinBFKS26}
Aaron Bernstein, Sayan Bhattacharya, Nick Fischer, Peter Kiss, and Thatchaphol Saranurak.
\newblock Separations between oblivious and adaptive adversaries for natural dynamic graph problems.
\newblock In {\em {SODA}}, pages 5669--5687. {SIAM}, 2026.

\bibitem[BEF{\etalchar{+}}23]{bateni2023optimal}
MohammadHossein Bateni, Hossein Esfandiari, Hendrik Fichtenberger, Monika Henzinger, Rajesh Jayaram, Vahab Mirrokni, and Andreas Wiese.
\newblock Optimal fully dynamic k-center clustering for adaptive and oblivious adversaries.
\newblock In {\em Proceedings of the 2023 Annual ACM-SIAM Symposium on Discrete Algorithms (SODA)}, pages 2677--2727. SIAM, 2023.

\bibitem[BKM{\etalchar{+}}22]{BeimelKMNSS22}
Amos Beimel, Haim Kaplan, Yishay Mansour, Kobbi Nissim, Thatchaphol Saranurak, and Uri Stemmer.
\newblock Dynamic algorithms against an adaptive adversary: generic constructions and lower bounds.
\newblock In {\em {STOC}}, pages 1671--1684. {ACM}, 2022.

\bibitem[Bla23]{Blanc23}
Guy Blanc.
\newblock Subsampling suffices for adaptive data analysis.
\newblock In {\em {STOC}}, pages 999--1012. {ACM}, 2023.

\bibitem[Ble98]{bleichenbacher1998chosen}
Daniel Bleichenbacher.
\newblock Chosen ciphertext attacks against protocols based on the rsa encryption standard pkcs\# 1.
\newblock In {\em Annual international cryptology conference}, pages 1--12. Springer, 1998.

\bibitem[BNS{\etalchar{+}}16]{BassilyNSSSU16}
Raef Bassily, Kobbi Nissim, Adam~D. Smith, Thomas Steinke, Uri Stemmer, and Jonathan~R. Ullman.
\newblock Algorithmic stability for adaptive data analysis.
\newblock In {\em {STOC}}, pages 1046--1059. {ACM}, 2016.

\bibitem[BO83]{ben1983another}
Michael Ben-Or.
\newblock Another advantage of free choice (extended abstract) completely asynchronous agreement protocols.
\newblock In {\em Proceedings of the second annual ACM symposium on Principles of distributed computing}, pages 27--30, 1983.

\bibitem[BR93]{bellare1993random}
Mihir Bellare and Phillip Rogaway.
\newblock Random oracles are practical: A paradigm for designing efficient protocols.
\newblock In {\em Proceedings of the 1st ACM Conference on Computer and Communications Security}, pages 62--73, 1993.

\bibitem[CK16]{chakrabarti2016strong}
Amit Chakrabarti and Sagar Kale.
\newblock Strong fooling sets for multi-player communication with applications to deterministic estimation of stream statistics.
\newblock In {\em 2016 IEEE 57th Annual Symposium on Foundations of Computer Science (FOCS)}, pages 41--50. IEEE, 2016.

\bibitem[CS24]{ChakrabartiS24}
Amit Chakrabarti and Manuel Stoeckl.
\newblock Finding missing items requires strong forms of randomness.
\newblock In {\em {CCC}}, volume 300 of {\em LIPIcs}, pages 28:1--28:20. Schloss Dagstuhl - Leibniz-Zentrum f{\"{u}}r Informatik, 2024.

\bibitem[DFH{\etalchar{+}}15a]{DworkFHPRR15b}
Cynthia Dwork, Vitaly Feldman, Moritz Hardt, Toniann Pitassi, Omer Reingold, and Aaron Roth.
\newblock Generalization in adaptive data analysis and holdout reuse.
\newblock In {\em {NeurIPS}}, pages 2350--2358, 2015.

\bibitem[DFH{\etalchar{+}}15b]{DworkFHPRR15}
Cynthia Dwork, Vitaly Feldman, Moritz Hardt, Toniann Pitassi, Omer Reingold, and Aaron~Leon Roth.
\newblock Preserving statistical validity in adaptive data analysis.
\newblock In {\em {STOC}}, pages 117--126. {ACM}, 2015.

\bibitem[DSWZ23]{DinurSWZ23}
Itai Dinur, Uri Stemmer, David~P. Woodruff, and Samson Zhou.
\newblock On differential privacy and adaptive data analysis with bounded space.
\newblock In {\em {EUROCRYPT} {(3)}}, volume 14006 of {\em Lecture Notes in Computer Science}, pages 35--65. Springer, 2023.

\bibitem[FLP85]{fischer1985impossibility}
Michael~J Fischer, Nancy~A Lynch, and Michael~S Paterson.
\newblock Impossibility of distributed consensus with one faulty process.
\newblock {\em Journal of the ACM (JACM)}, 32(2):374--382, 1985.

\bibitem[FM85]{flajolet1985probabilistic}
Philippe Flajolet and G~Nigel Martin.
\newblock Probabilistic counting algorithms for data base applications.
\newblock {\em Journal of computer and system sciences}, 31(2):182--209, 1985.

\bibitem[FRR18]{FishRR18}
Benjamin Fish, Lev Reyzin, and Benjamin I.~P. Rubinstein.
\newblock Sublinear-time adaptive data analysis.
\newblock In {\em {ISAIM}}, 2018.

\bibitem[FS86]{FiatS86}
Amos Fiat and Adi Shamir.
\newblock How to prove yourself: Practical solutions to identification and signature problems.
\newblock In {\em {CRYPTO}}, volume 263 of {\em Lecture Notes in Computer Science}, pages 186--194. Springer, 1986.

\bibitem[FS17]{FeldmanS17}
Vitaly Feldman and Thomas Steinke.
\newblock Generalization for adaptively-chosen estimators via stable median.
\newblock In {\em {COLT}}, volume~65 of {\em Proceedings of Machine Learning Research}, pages 728--757. {PMLR}, 2017.

\bibitem[FS18]{FeldmanS18}
Vitaly Feldman and Thomas Steinke.
\newblock Calibrating noise to variance in adaptive data analysis.
\newblock In {\em {COLT}}, volume~75 of {\em Proceedings of Machine Learning Research}, pages 535--544. {PMLR}, 2018.

\bibitem[HNO08]{harvey2008sketching}
Nicholas~JA Harvey, Jelani Nelson, and Krzysztof Onak.
\newblock Sketching and streaming entropy via approximation theory.
\newblock In {\em 2008 49th Annual IEEE Symposium on Foundations of Computer Science}, pages 489--498. IEEE, 2008.

\bibitem[HU14]{HardtU14}
Moritz Hardt and Jonathan~R. Ullman.
\newblock Preventing false discovery in interactive data analysis is hard.
\newblock In {\em {FOCS}}, pages 454--463. {IEEE} Computer Society, 2014.

\bibitem[JLN{\etalchar{+}}20]{JungLN0SS20}
Christopher Jung, Katrina Ligett, Seth Neel, Aaron Roth, Saeed Sharifi{-}Malvajerdi, and Moshe Shenfeld.
\newblock A new analysis of differential privacy's generalization guarantees.
\newblock In {\em {ITCS}}, volume 151 of {\em LIPIcs}, pages 31:1--31:17. Schloss Dagstuhl - Leibniz-Zentrum f{\"{u}}r Informatik, 2020.

\bibitem[KMNS21]{kaplan2021separating}
Haim Kaplan, Yishay Mansour, Kobbi Nissim, and Uri Stemmer.
\newblock Separating adaptive streaming from oblivious streaming using the bounded storage model.
\newblock In {\em Annual International Cryptology Conference}, pages 94--121. Springer, 2021.

\bibitem[KPW21]{kamath2021simple}
Akshay Kamath, Eric Price, and David~P Woodruff.
\newblock A simple proof of a new set disjointness with applications to data streams.
\newblock {\em arXiv preprint arXiv:2105.11338}, 2021.

\bibitem[KS92]{kalyanasundaram1992probabilistic}
Bala Kalyanasundaram and Georg Schnitger.
\newblock The probabilistic communication complexity of set intersection.
\newblock {\em SIAM Journal on Discrete Mathematics}, 5(4):545--557, 1992.

\bibitem[KSS22]{KontorovichSS22}
Aryeh Kontorovich, Menachem Sadigurschi, and Uri Stemmer.
\newblock Adaptive data analysis with correlated observations.
\newblock In {\em {ICML}}, volume 162 of {\em Proceedings of Machine Learning Research}, pages 11483--11498. {PMLR}, 2022.

\bibitem[Lit88]{littlestone1988learning}
Nick Littlestone.
\newblock Learning quickly when irrelevant attributes abound: A new linear-threshold algorithm.
\newblock {\em Machine learning}, 2(4):285--318, 1988.

\bibitem[LR81]{lehmann1981advantages}
Daniel Lehmann and Michael~O Rabin.
\newblock On the advantages of free choice: A symmetric and fully distributed solution to the dining philosophers problem.
\newblock In {\em Proceedings of the 8th ACM SIGPLAN-SIGACT symposium on Principles of programming languages}, pages 133--138, 1981.

\bibitem[LS19]{LigettS19}
Katrina Ligett and Moshe Shenfeld.
\newblock A necessary and sufficient stability notion for adaptive generalization.
\newblock In {\em NeurIPS}, pages 11481--11490, 2019.

\bibitem[LT25]{LyuT25}
Xin Lyu and Kunal Talwar.
\newblock Fingerprinting codes meet geometry: Improved lower bounds for private query release and adaptive data analysis.
\newblock In {\em {STOC}}, pages 2374--2385. {ACM}, 2025.

\bibitem[Man01]{manger2001chosen}
James Manger.
\newblock A chosen ciphertext attack on rsa optimal asymmetric encryption padding (oaep) as standardized in pkcs\# 1 v2. 0.
\newblock In {\em Annual international cryptology conference}, pages 230--238. Springer, 2001.

\bibitem[NSS{\etalchar{+}}18]{NissimSSSU18}
Kobbi Nissim, Adam~D. Smith, Thomas Steinke, Uri Stemmer, and Jonathan~R. Ullman.
\newblock The limits of post-selection generalization.
\newblock In {\em NeurIPS}, pages 6402--6411, 2018.

\bibitem[NST23]{NissimST23}
Kobbi Nissim, Uri Stemmer, and Eliad Tsfadia.
\newblock Adaptive data analysis in a balanced adversarial model.
\newblock In {\em NeurIPS}, 2023.

\bibitem[Pel02]{pelc2002searching}
Andrzej Pelc.
\newblock Searching games with errors—fifty years of coping with liars.
\newblock {\em Theoretical Computer Science}, 270(1-2):71--109, 2002.

\bibitem[Ram30]{Ramsey1930}
Frank~P. Ramsey.
\newblock On a problem of formal logic.
\newblock {\em Proceedings of the London Mathematical Society}, 30(1):264--286, 1930.

\bibitem[RCS25]{RapoportCS25}
Emma Rapoport, Edith Cohen, and Uri Stemmer.
\newblock Tight bounds for answering adaptively chosen concentrated queries.
\newblock {\em CoRR}, abs/2507.13700, 2025.

\bibitem[RMK{\etalchar{+}}80]{rivest1980coping}
Ronald~L. Rivest, Albert~R. Meyer, Daniel~J. Kleitman, Karl Winklmann, and Joel Spencer.
\newblock Coping with errors in binary search procedures.
\newblock {\em Journal of Computer and System Sciences}, 20(3):396--404, 1980.

\bibitem[RRST16]{RogersRST16}
Ryan~M. Rogers, Aaron Roth, Adam~D. Smith, and Om~Thakkar.
\newblock Max-information, differential privacy, and post-selection hypothesis testing.
\newblock In {\em {FOCS}}, pages 487--494. {IEEE} Computer Society, 2016.

\bibitem[RZ16]{RussoZ16}
Daniel Russo and James Zou.
\newblock Controlling bias in adaptive data analysis using information theory.
\newblock In {\em {AISTATS}}, volume~51 of {\em {JMLR} Workshop and Conference Proceedings}, pages 1232--1240. JMLR.org, 2016.

\bibitem[SL23]{ShenfeldL23}
Moshe Shenfeld and Katrina Ligett.
\newblock Generalization in the face of adaptivity: {A} bayesian perspective.
\newblock In {\em NeurIPS}, 2023.

\bibitem[Ste16]{stemmer2016}
Uri Stemmer.
\newblock {\em Individuals and privacy in the eye of data analysis}.
\newblock PhD thesis, Ben-Gurion University of the Negev, 2016.

\bibitem[Sto23]{stoeckl2023streaming}
Manuel Stoeckl.
\newblock Streaming algorithms for the missing item finding problem.
\newblock In {\em Proceedings of the 2023 Annual ACM-SIAM Symposium on Discrete Algorithms (SODA)}, pages 793--818. SIAM, 2023.

\bibitem[SU15]{SteinkeU15}
Thomas Steinke and Jonathan~R. Ullman.
\newblock Interactive fingerprinting codes and the hardness of preventing false discovery.
\newblock In {\em {COLT}}, volume~40 of {\em {JMLR} Workshop and Conference Proceedings}, pages 1588--1628. JMLR.org, 2015.

\bibitem[SZ20]{SteinkeZ20}
Thomas Steinke and Lydia Zakynthinou.
\newblock Reasoning about generalization via conditional mutual information.
\newblock In {\em {COLT}}, volume 125 of {\em Proceedings of Machine Learning Research}, pages 3437--3452. {PMLR}, 2020.

\bibitem[Yao79]{yao1979some}
Andrew Chi-Chih Yao.
\newblock Some complexity questions related to distributive computing (preliminary report).
\newblock In {\em Proceedings of the eleventh annual ACM symposium on Theory of computing}, pages 209--213, 1979.

\end{thebibliography}

\newcommand{\etalchar}[1]{$^{#1}$}

\appendix
\section{Positive result for computationally bounded analysts}\label{sec:bounded_adversary}

Let us first show that the mechanism can extract randomness from the sample. We will then use a pseudorandom generator to stretch the randomness and obtain the quadratic bound.

\begin{lemma}\label{lem:extract_rand_permutation}
Let $\delta \in (0, 1), n \ge \Omega(\log^2\tfrac{1}{\delta})$ and domain $\mathcal{X}$ of size $|\X|=N\in \poly(n)$.
There exists a deterministic algorithm $\mathcal{M}_{extract}$ such that given a sample $S \sim \mathcal{P}^n$, with probability at least $1-\delta$, in $\poly(n)$ time, $\mathcal{M}_{extract}$ either:
\begin{enumerate}
    \item Extracts $m = \lfloor \sqrt{n} \rfloor$ uniformly random bits.
    \item Answers an unbounded number of adaptively chosen statistical queries $q_t \colon \mathcal{X} \to [0, 1]$ with $0.1$-accuracy.
\end{enumerate}
\end{lemma}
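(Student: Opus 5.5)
The plan is to extract bits by a von Neumann–style pairing trick on the sample, and to fall back to a trivial answering strategy when the distribution is nearly a point mass.

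\textbf{Construction.} Partition $S=(x_1,\dots,x_n)$ into $\lfloor n/2\rfloor$ disjoint pairs $(x_{2i-1},x_{2i})$, and fix an arbitrary linear order on $\X$. Call a pair \emph{distinct} if $x_{2i-1}\neq x_{2i}$. $\mathcal{M}_{extract}$ proceeds as follows.
\begin{enumerate}
\item If at least $m=\lfloor\sqrt n\rfloor$ pairs are distinct, take the first $m$ of them and output the bits $\one\{x_{2i-1}<x_{2i}\}$.
\item Otherwise, let $x^*$ be the most frequent element of $S$, and answer every query $q_t$ with $q_t(x^*)$, ignoring adaptivity completely.
\end{enumerate}
Both steps are deterministic and run in $\poly(n)$ time.

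\textbf{Uniformity of the bits.} Condition on the multiset of unordered pairs $\{\{x_{2i-1},x_{2i}\}\}_i$. Both the case decision and the choice of which pairs are used are functions of this multiset alone. Since the two points in each pair are i.i.d., they are exchangeable. So, given the unordered pairs, each distinct pair is ordered in either of its two ways with probability $1/2$, independently across pairs. Hence, conditioned on being in case 1, the output is exactly uniform on $\{0,1\}^m$. No concentration argument is needed for this part; it is exact.

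\textbf{Correctness of the fallback.} Let $c=\sum_x \mathcal{P}(x)^2$ be the collision probability. The number of distinct pairs is distributed as $\mathrm{Bin}(\lfloor n/2\rfloor,\,1-c)$. We split into two cases.
\begin{enumerate}
\item Suppose $1-c\ge 4/\sqrt n$. Then the expected number of distinct pairs is at least roughly $2\sqrt n$. By a Chernoff bound, case 2 is triggered with probability at most $e^{-\Omega(\sqrt n)}\le\delta/2$, using $n\ge\Omega(\log^2\tfrac1\delta)$. This regime is exactly where the hypothesis $n\ge\Omega(\log^2\tfrac1\delta)$ is needed.
\item Suppose instead $c>1-4/\sqrt n$. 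Since $c\le\max_x\mathcal{P}(x)$, some point $x_0$ has mass $p_0\ge 1-4/\sqrt n\ge 0.95$ for large $n$. A Chernoff bound shows that $x_0$ is a strict majority of $S$ with probability $1-\delta/2$, so the most frequent element satisfies $x^*=x_0$. Then for every query, $|q_t(x_0)-\E_{\mathcal{P}}[q_t]|\le 1-p_0\le 4/\sqrt n\le 0.1$. This holds deterministically in $q_t$, so it holds for unboundedly many adaptive queries. In this regime case 1 may also occur, and that is fine: the lemma only requires that one of the two outcomes is valid.
\end{enumerate}

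\textbf{Main obstacle.} The subtle point is making the ``either/or'' precise without the case decision biasing the extracted bits. The conditioning on unordered pairs handles this cleanly, because the decision is measurable with respect to them. The remaining work is routine Chernoff bookkeeping with the constants $4/\sqrt n$ and $0.1$.
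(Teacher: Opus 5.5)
Your proof is correct, and it takes a genuinely different and simpler route than the paper's.

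\textbf{The paper's route.} The paper chooses the mode by the top empirical frequency ($N_{\max}\ge 0.95n$). It certifies query mode through a case analysis that partitions $\mathcal{X}$ into sets of mass at most $0.9$, plus Hoeffding. It extracts each bit by lexicographically comparing two length-$\lfloor m/2\rfloor$ halves of a block. This design forces three extra pieces:
\begin{itemize}
\item a FAIL outcome for tied blocks;
\item a sampling-without-replacement tie bound of $0.975^{\lfloor\sqrt n/2\rfloor}$, which is where it needs $n\ge\Omega(\log^2\frac1\delta)$;
\item a uniformity argument via exchangeability of the whole sample, conditioned on the permutation-invariant event $N_{\max}<0.95n$, using the swap $L_i\leftrightarrow R_i$.
\end{itemize}

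\textbf{What your route buys.} Your von Neumann pairing makes the mode decision depend only on the unordered pairs. Uniformity is then exact and follows just from $\mathcal{P}(a)\mathcal{P}(b)=\mathcal{P}(b)\mathcal{P}(a)$ and independence across pairs, with no FAIL event. Your single dichotomy on the collision probability $c$, via $\max_x\mathcal{P}(x)\ge c$, replaces the paper's domain partition. It also gives fallback error at most $4/\sqrt n$ in the regime where the fallback is needed, rather than $0.1$.

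\textbf{A small fix.} Condition on the indexed sequence of unordered pairs, not on their multiset. Which pairs count as ``the first $m$ distinct ones'' depends on the indices, so your sentence claiming this choice is a function of the multiset is inaccurate. The conclusion is unaffected.

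\textbf{A bonus.} A straightforward variant of your argument proves more. Threshold at $\lfloor n/40\rfloor$ distinct pairs, and split the analysis at $c=0.9$ instead of $1-4/\sqrt n$. If $1-c\ge 0.1$, Chernoff gives enough distinct pairs except with probability $e^{-\Omega(n)}$. Otherwise some point has mass above $0.9$, so it is the strict majority except with probability $e^{-\Omega(n)}$, and answering with it has error below $0.1$. This yields $\Omega(n)$ exactly uniform bits under only $n\ge\Omega(\log\frac1\delta)$. That confirms the paper's remark that extracting $O(n)$ bits seems possible.
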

We remark that it seems possible that one can extract $O(n)$ random bits, but $\sqrt{n}$ bits suffice for our application.
\begin{proof}

The algorithm $\mathcal{M}_{extract}$ does the following:
Given $S = (x_1, \dots, x_n)$, compute the empirical multiplicity $N_x$ for all unique elements in $S$. Let $N_{\max} = \max_x N_x$, and let $\hat{y}$ be the corresponding most frequent element (breaking ties arbitrarily).
\begin{itemize}
    \item \textbf{Query Mode ($\boldsymbol{N_{\max} \geq 0.95 n}$):}
    For any adaptively chosen query $q_t$, output $a_t = q_t(\hat{y})$.
    \item \textbf{Extraction Mode ($\boldsymbol{N_{\max} < 0.95 n}$):}
    Let $m = \lfloor \sqrt{n} \rfloor$. Partition the first $m^2 \le n$ elements of $S$ into $m$ disjoint blocks $B_1, \dots, B_m$, each of size $m$. 
    
    For each block $i \in \{1, \dots, m\}$:
    \begin{itemize}
        \item Let $k = \lfloor m/2 \rfloor$. Let $L_i$ be the first $k$ elements of $B_i$, and $R_i$ be the next $k$ elements.
        \item Lexicographically compare the sequences $L_i$ and $R_i$. 
        \item If $L_i < R_i$, output bit $b_i = 1$.
        \item If $L_i > R_i$, output bit $b_i = 0$.
        \item If $L_i = R_i$, halt the algorithm and output \textbf{FAIL}.
    \end{itemize}
\end{itemize}

\paragraph{Probabilistic Analysis:}
We partition $\mathcal{X}$ into sets by a case analysis on the distribution $\mathcal{P}$. 
\begin{itemize}[leftmargin=37px]
    \item[Case 1:] Suppose there exists $y^*\in \mathcal{X}$ such that $\Pr_{Y\sim\mathcal{P}}[Y=y^*] \geq 0.9$. In this case, we consider the singleton $\{y^*\}$ and the set $A_1=\mathcal{X}\setminus \{y^*\}$.

    \item[Case 2:] For all $y\in\mathcal{X}$ we have $\Pr_{Y\sim\mathcal{P}}[Y=y]<0.9$ and there exists $y_1\in\mathcal{X}$ with $\Pr_{Y\sim\mathcal{P}}[Y=y_1]\geq 0.1$. In this case, we set $A_1=\{y_1\}$ and $A_2=\mathcal{X}\setminus A_1$.

    \item[Case 3:] Otherwise, sequence the domain $\mathcal{X}$ as $y_1,y_2,\ldots$ and let $S_m=\{y_i\mid i\leq m\}$. Consider the smallest integer $m$ such that $\Pr_{Y\sim\mathcal{P}}[Y\in S_m]\geq 0.1$. By minimality of $m$ and since all elements have probability mass at most $0.1$, we obtain $\Pr_{Y\sim\mathcal{P}}[Y\in S_m]\leq 0.2$. We set $A_1=S_m$ and $A_2=\mathcal{X}\setminus A_1$.
\end{itemize}

\noindent Note that in all cases we have $\Pr_{Y\sim\mathcal{P}}[Y\in A_1],\Pr_{Y\sim\mathcal{P}}[Y\in A_2]\leq 0.9$. 
Let $N(A_i)$ be the number of samples from $S$ belonging to $A_i$ (when $A_i$ is defined). By Hoeffding's Inequality:
\[
\Pr[N(A_i) \ge 0.95 n] \le \exp\left(-2n(0.05)^2\right) = \exp(-\Omega( n)).
\]
Since $n\geq\Omega(\log\tfrac{1}{\delta})$ and by a union bound, we obtain that with probability at least $1-\tfrac{\delta}{2}$, $N(A_1),N(A_2)<0.95n$. Denote this event by $\mathcal{E}$ and assume it holds.

\paragraph{Query Mode (${\boldsymbol{N_{\max} \ge 0.95 n}}$).}
If the algorithm routes to this mode, the most frequent element satisfies $N_{\hat{y}} \ge 0.95 n$. Under event $\mathcal{E}$, no non-canonical partition $A_i$ receives $0.95n$ samples, and thus $\hat{y}=y^*$. Therefore, $\Pr_{Y\sim\mathcal{P}}[Y=\hat{y}] \geq 0.90$.

For any adaptively chosen query $q_t \colon \mathcal{X} \to [0,1]$, 
we evaluate the error:
\[
\left| a_t - \mathbb{E}[q_t(Y)] \right| = \left| q_t(\hat{y}) - \Big( p_{\hat{y}} q_t(\hat{y}) + (1-p_{\hat{y}}) \mathbb{E}[q_t(Y) \mid Y \neq \hat{y}] \Big) \right| \le 1 - p_{\hat{y}} < 1 - 0.90 = 0.1.
\]
Because $\hat{y}$ is fixed statically and its state completely ignores the chosen queries, these strict $0.1$-error bounds hold uniformly and simultaneously over an unbounded number of adaptive evaluations.

\paragraph{Extraction Mode (${\boldsymbol{N_{\max} < 0.95n}}$).}
Condition on $N_{\max} < 0.95n$. Observe that every permutation of the dataset $S$ is equally likely.

\textit{1. Negligible Failure Probability:}
We bound the probability of outputting \textbf{FAIL}. Because a uniformly random permutation is exchangeable, the marginal distribution of elements populating any block $B_i$ is identical. We evaluate the probability $\Pr[L_1 = R_1]$ by imagining we sequentially draw its $2k$ elements in paired positions $(1, k+1), (2, k+2), \dots, (k, 2k)$.

The probability that the $j$-th pair matches, conditioned on the history of previous draws, is bounded by the maximum possible fraction of remaining identical elements. When drawing the second element of the $j$-th pair, we have already removed $2j - 1$ elements from the pool. In the worst case, none of the removed elements were of the majority class, leaving $N_{\max}$ identical elements in a pool of size $n - 2j + 1$. Because $2j \le m \le \sqrt{n}$ and $N_{\max} < 0.95n$, this probability is bounded:
\[
\Pr[\text{pair } j \text{ matches} \mid \text{history}] \le \frac{N_{\max}}{n - 2j + 1} \le \frac{0.95n}{n - \sqrt{n}} \leq 0.975,
\]
where the last step is by assuming $n \ge 1600$. 
The probability that all $k$ sequentially drawn pairs match is at most $(0.975)^k$. Applying a union bound over all $m$ disjoint blocks, the probability of any block failing is:
\[
\Pr[\text{\textbf{FAIL}}] \le m (0.975)^k \le \sqrt{n} (0.975)^{\lfloor \sqrt{n}/2 \rfloor}\leq \tfrac{\delta}{2},
\]
where the last step is since $n=\Omega(\log^2 \tfrac{1}{\delta})$.
By a union bound, with probability at least $1-\delta$, the algorithm does not output \textbf{FAIL} and $\mathcal{E}$ holds.

\textit{2. Perfect Uniformity:}
Consider the set of valid permutations of $S$ where no block ties (i.e., $L_i \ne R_i$ for all $i$).
Clearly, for all $i$, $\Pr[L_i<R_i]=\Pr[L_i>R_i]=\tfrac{1}{2}$, hence the $i$-th bit $b_i$ is uniform in $\{0,1\}$. Now, we claim that the bits $(b_1,\ldots,b_m)$ are independent.

Consider the realization of the blocks $\{(L_i,R_i)\mid i\leq m\}$. For every $i$, the permutation where the blocks $L_i,R_i$ are swapped, i.e., with $L'_i=R_i$ and $R'_i=L_i$, is equally likely.
Moreover, there are $2^m$ options for these swapping ($2$ options for each $i\in [m]$), and each one is equally likely. This holds regardless of the block's content, hence the bits $(b_1,\ldots,b_m)$ are uniform over $\{0,1\}^m$, concluding the proof.

\end{proof}

To obtain the positive result against computationally bounded analysts, we assume the existence of \emph{non-uniform} pseudorandom generators. Whenever we say (here and in the introduction) assuming PRGs exist, we mean this definition.
\begin{definition}
    Let $\ell \colon \mathbb{N} \to \mathbb{N}$ be a polynomial function such that $\ell(n) > n$ for all $n \in \mathbb{N}$ (the expansion factor). A deterministic, polynomial-time algorithm $G \colon \{0,1\}^* \to \{0,1\}^*$ is a \emph{Pseudorandom Generator (PRG)} if it satisfies the following two conditions:
    \begin{itemize}
        \item Expansion: For every input string $s \in \{0,1\}^n$, the output $G(s)$ has length exactly $\ell(n)$.
        \item Computational Indistinguishability: For every Probabilistic Polynomial-Time (PPT) algorithm $D$ (called a distinguisher) that takes an advice $a_n\in \{0,1\}^{\poly(n)}$, there exists a negligible function $\mathsf{negl}(n)$ such that:
        $$ \left| \Pr_{s \sim U_n}[D(G(s),a_n) = 1] - \Pr_{r \sim U_{\ell(n)}}[D(r,a_n) = 1] \right| \le \mathsf{negl}(n), $$
        where $U_n$ is the uniform distribution over $\{0,1\}^n$.
    \end{itemize}
\end{definition}
A standard assumption in cryptography is that for every polynomial $\ell$, there exists a PRG with expansion factor $\ell$.
Lastly, we recall that randomized mechanisms can answer a quadratic number of queries.
\begin{lemma}[\cite{BassilyNSSSU16}]\label{lem:rand_mechanism}
    For every constant $0<\eps_0<1$,
    there exists a randomized mechanism running in polynomial time such that given a dataset $S=(x_1,\ldots,x_n)\sim \mathcal{P}^n$, with high probability the algorithm answers $\tilde{\Omega}(n^2)$ adaptive statistical queries within accuracy $\eps_0$.
\end{lemma}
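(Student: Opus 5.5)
This lemma is the standard positive result of \cite{BassilyNSSSU16}. I would obtain it by combining a differentially private noise-addition mechanism with the ``transfer theorem'' relating differential privacy to generalization under adaptivity.

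\textbf{The mechanism.} On query $q_j$, output
\[
a_j=\frac1n\sum_{i=1}^n q_j(x_i)+Z_j,\qquad Z_j\sim\mathcal N(0,\sigma^2)\ \text{i.i.d.}
\]
This clearly runs in polynomial time per query.

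\textbf{Step 1: privacy.} Each empirical average has sensitivity $1/n$. A single answer is therefore $\rho$-zCDP with $\rho=1/(2n^2\sigma^2)$. By adaptive composition, the whole $k$-round interaction is $(k\rho)$-zCDP, and hence $(\eps,\delta)$-DP with
\[
\eps\approx \frac{\sqrt{k\log(1/\delta)}}{n\sigma}.
\]
Equivalently, one can use the advanced composition theorem. Composition holds against adaptively chosen queries, so the analyst's adaptivity is harmless at this stage.

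\textbf{Step 2: empirical accuracy.} By a Gaussian tail bound and a union bound over the $k$ rounds, with probability $1-\beta$ every answer is within $\alpha=\sigma\sqrt{2\log(2k/\beta)}$ of its empirical mean.

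\textbf{Step 3: transfer to the population.} I would invoke the transfer theorem of \cite{BassilyNSSSU16}: an $(\eps,\delta)$-DP mechanism that is $(\alpha,\beta)$-empirically accurate against adaptive analysts is $(O(\alpha+\eps),O(\beta/\eps+\delta/\eps))$-distributionally accurate. This requires $\delta,\beta$ to be suitably small relative to $\eps$, say $\delta=\beta=\eps^2/C$ for a large constant $C$. If I had to reprove it, I would use the monitor argument. Run $T=\Theta(1/\eps)$ independent copies of the game on fresh samples. A DP ``monitor'' selects, via the exponential mechanism, the (copy, query) pair with the largest population-minus-empirical gap. If some answer deviated from the population mean by more than $\alpha+O(\eps)$ with probability exceeding $\beta/\eps$, the monitor would pick a query overfitting its own sample. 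That contradicts the fact that DP algorithms cannot output a query whose empirical mean on their input deviates much from its population mean, which is the expectation bound $|\E[q(S)]-\E[q(\PP)]|\le e^\eps-1+\delta$.

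\textbf{Step 4: parameters.} Set $\eps=\eps_0/10$ and choose $\sigma$ so that $\alpha\le\eps_0/10$, i.e. $\sigma\approx\eps_0/\sqrt{\log(k/\beta)}$. Substituting into $\eps\approx\sqrt{k\log(1/\delta)}/(n\sigma)$, the privacy constraint becomes
\[
k\lesssim \frac{\eps_0^4\, n^2}{\log(1/\delta)\,\log(k/\beta)}=\tilde\Omega(n^2),
\]
with constant failure probability, as the lemma requires. Steps 1, 2 and 4 are routine calculations. The main obstacle is Step 3, the adaptive transfer theorem; I would simply cite it from \cite{BassilyNSSSU16} rather than reprove it.
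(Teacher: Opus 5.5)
Your outline is the standard argument of \cite{BassilyNSSSU16}: Gaussian noise, adaptive composition, and the privacy-to-generalization transfer theorem via the monitor. That is all the paper relies on, since it states this lemma purely by citation and gives no proof of its own. The one real flaw is your closing remark. With $\beta=\delta=\eps^2/C$, the transfer theorem only gives population failure probability $O((\beta+\delta)/\eps)=O(\eps)$, a constant, whereas the lemma asserts accuracy \emph{with high probability}. This matters for the paper. In the PRG-based observation of Appendix~\ref{sec:bounded_adversary}, the distinguisher has non-negligible advantage only if the failure probability with truly random coins is much smaller than the non-negligible failure probability assumed for pseudorandom coins. With constant failure under true randomness, that argument would only give a constant success probability, not the high-probability claim of the observation.

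The repair is cheap, because the failure probability enters only logarithmically. For a target failure probability $\gamma$, take $\beta=\delta=\Theta(\eps\gamma)$. Then:
\begin{itemize}
\item the monitor runs $T=\Theta(\eps/\delta)=\Theta(1/\gamma)$ copies rather than $\Theta(1/\eps)$;
\item the multi-copy bias bound $e^{O(\eps)}-1+T\delta$ is still $O(\eps)$;
\item the exponential-mechanism selection loss $O(\log(Tk)/(\eps n))$ is still $o(1)$;
\item your Step~4 calculation yields $k=\Omega\bigl(\eps_0^4 n^2/(\log(1/(\eps\gamma))\log(n/(\eps\gamma)))\bigr)$.
\end{itemize}
That is $k=\Omega(n^2/\log^2 n)$ for constant $\eps_0$ and $\gamma=n^{-O(1)}$, and even $\gamma=n^{-\log n}$ costs only polylogarithmic factors. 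Two smaller points. Set $\eps=\eps_0/C'$ for a sufficiently large absolute constant $C'$ rather than $\eps_0/10$, so the constant hidden in $O(\alpha+\eps)$ does not push the population error above $\eps_0$. Also note the transfer theorem's side condition $n\gtrsim\log(1/\delta)/\eps^2$, which holds for large $n$ under these choices.
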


We are now ready to prove the positive result against computationally bounded analysts. 

\begin{observation}
    Assuming PRGs exist, there exists a deterministic mechanism $\MMM$ such that the following holds. For every sufficiently large $n\in\N$ and every computationally bounded analyst, when $\MMM$ is given a dataset $S\sim\mathcal{P}^n$ drawn iid from a distribution $\mathcal{P}$, with high probability, $\MMM$ answers $\tilde{\Omega}(n^2)$ adaptive statistical queries within accuracy $0.1$.
\end{observation}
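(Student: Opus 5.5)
The plan is to compose the three ingredients already set up in this appendix: randomness extraction (Lemma~\ref{lem:extract_rand_permutation}), a PRG stretch, and the randomized mechanism of Lemma~\ref{lem:rand_mechanism}.

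\textbf{The mechanism.} On input $S$, the mechanism $\MMM$ first runs $\MMM_{extract}$ with $\delta=1/\poly(n)$. This is permissible since $n\ge\Omega(\log^2\tfrac1\delta)$ holds for $\delta=n^{-c}$ once $n$ is large. If $\MMM_{extract}$ lands in query mode, $\MMM$ simply answers $q_t(\hat y)$ forever; that lemma already guarantees accuracy against every analyst, bounded or not. Otherwise it obtains a seed $s\in\{0,1\}^{m}$ with $m=\lfloor\sqrt n\rfloor$ that is exactly uniform, conditioned on not failing. Let $L=\poly(n)$ bound the number of coins used by the mechanism of Lemma~\ref{lem:rand_mechanism} over $\tilde\Omega(n^2)$ rounds. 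Choose a PRG $G$ with expansion $\ell(m)\ge L$; this exists since $L$ is polynomial in $m$. Then $\MMM$ runs that randomized mechanism on $S$ with coins $G(s)$, with $\eps_0=0.1$. A minor point: the extraction reads only the order of the first $m^2$ samples. To avoid reusing those samples, I would instead split $S$, spending the first $m^2\le n$ samples on extraction and the remaining $n-m^2\ge n/2$ on the randomized mechanism. This costs only a constant factor in $k$ and makes the seed independent of the data the mechanism uses.

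\textbf{Analysis by hybrid.} First, consider the ideal world in which the coins are truly uniform $r\sim U_L$. Then Lemma~\ref{lem:rand_mechanism} gives accuracy with high probability, say at least $1-\beta$, against every analyst. Second, compare with the world where the coins come from a true uniform seed $s$ expanded by $G$. Suppose a bounded analyst $\AAA$ makes the mechanism fail with probability noticeably more than $\beta$. Build a distinguisher $D(r,a_n)$ as follows. It samples $S'$ (the half of the data fed to the mechanism) from $\PPP^{n/2}$, runs the interaction between $\AAA$ and the randomized mechanism using coins $r$, and outputs $1$ iff some answer is off by more than $0.1$ from the true expectation.

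\textbf{Main obstacle: the distinguisher's efficiency.} The distinguisher must sample from $\PPP$ and compute true means $\E_{\PPP}[q_t]$, neither of which is efficient in general. This is exactly where the non-uniform advice $a_n$ in the PRG definition is used. The advice hardwires the finitely supported description of $\PPP$ (the domain has size $N=\poly(n)$, so $\PPP$ is a $\poly(n)$-size table, up to rounding probabilities to $1/\poly(n)$ precision) and any non-uniform advice of $\AAA$. With this advice, $D$ samples from $\PPP$ and computes each $\E_{\PPP}[q_t]$ exactly in polynomial time by summing over $\X$. Rounding errors of order $1/\poly$ are absorbed by testing accuracy at $0.1$ in $D$ against a slightly smaller $\eps_0$ for the mechanism. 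Hence $D$ is PPT with advice, and its advantage is at most $\mathsf{negl}(m)=\mathsf{negl}(\sqrt n)$, which is still negligible in $n$.

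\textbf{Conclusion.} Combining the steps with a union bound, the failure probability is at most
\[
\delta+\beta+\mathsf{negl}(\sqrt n),
\]
where $\delta$ is the extraction FAIL/bad-event probability, $\beta$ is the ideal-world failure probability, and the last term is the PRG advantage. This is $o(1)$, giving the observation.
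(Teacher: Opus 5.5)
Your architecture matches the paper's: extract a seed with $\MMM_{extract}$, stretch it with $G$, run the mechanism of Lemma~\ref{lem:rand_mechanism} on samples disjoint from the seed, and reduce to PRG security with a non-uniform distinguisher. However, the sample split you propose does not work.

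With $m=\lfloor\sqrt n\rfloor$ we have $m^2>n-2\sqrt n+1$. So the samples left after the extraction block number $n-m^2<2\sqrt n$, and there are none at all when $n$ is a perfect square; they do not number $\ge n/2$. The randomized mechanism then runs on $O(\sqrt n)$ samples, and Lemma~\ref{lem:rand_mechanism} yields only about $(\sqrt n)^2=n$ queries rather than $\tilde{\Omega}(n^2)$. So the construction as written does not prove the statement.

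The fix is what the paper does. Run all of $\MMM_{extract}$, including its mode test, on the first $n/2$ samples only, which yields $\lfloor\sqrt{n/2}\rfloor$ seed bits. Then feed the untouched second half to the randomized mechanism. Because the mode decision, the FAIL event and $s$ are then all functions of the first half, the second half stays i.i.d.\ from $\PPP^{n/2}$ after conditioning on them and is independent of $s$. Hence Lemma~\ref{lem:rand_mechanism} applies directly. Also commit to some $\eps_0<0.1$ from the outset (the paper uses $0.09$) instead of $\eps_0=0.1$, since your distinguisher needs that slack.

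Once the split is repaired, your distinguisher is a legitimate alternative to the paper's. The paper first builds $D'$ assuming $\PPP$ is samplable. It estimates each $\E_{\PPP}[q_t]$ on a fresh validation set of $O(\log n)$ points, with a union bound over the $k$ rounds absorbed by the gap between $\eps_0$ and $0.1$. It then removes samplability by an averaging argument that hardwires the best fixed pair $(S^*,V^*)$ as advice.

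You instead hardwire a rounded table of $\PPP$ and compute the means exactly by enumerating $\X$. This skips the concentration step and the averaging argument. The cost is that it relies on $|\X|=\poly(n)$, on evaluating every query on every domain point, and on a rounding step. Rounding to precision $2^{-n}$, say, makes that loss negligible rather than $1/\poly(n)$. The paper's distinguisher, by contrast, only needs sample points to have polynomial-size descriptions and never touches the rest of the domain.
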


\begin{proof}
    The mechanism does the following. First, employ the algorithm $\mathcal{M}_{extract}$ of \Cref{lem:extract_rand_permutation} on $(x_1,\ldots,x_{n/2})$. If $\mathcal{M}_{extract}$ enters the query mode, use its output. If $\mathcal{M}_{extract}$ enters the extraction mode and outputs a string $s$ of $\sqrt{n/2}$ bits, apply a PRG $G$ to expand $s$ to length $\ell=\poly(n)$. Apply the mechanism $\MMM_{rand}$ of \Cref{lem:rand_mechanism} with accuracy parameter $\eps_0=0.09$ on $(x_{n/2+1},\ldots,x_n)$ using $G(s)$.

    Let us assume the event in \Cref{lem:extract_rand_permutation} is successful. Therefore, if $\mathcal{M}_{extract}$ enters the query mode, we are done.
    Assume henceforth that $\mathcal{M}_{extract}$ enters the extraction mode. By the assumption that the event in \Cref{lem:extract_rand_permutation} is successful, $\mathcal{M}_{extract}$ outputs a string $s$ of $\sqrt{n/2}$ random bits. Clearly, $G(s)$ is independent of $(x_{n/2+1},\ldots,x_n)$. 
    Let an analyst $\AAA_{bounded}$ running in polynomial time. We claim that with high probability, the application of $\MMM_{rand}$ on $(x_{n/2+1},\ldots,x_n)$ using the expanded randomness $G(s)$ answers $\tilde{\Omega}(n^2)$ adaptive queries issued by $\AAA_{bounded}$ within $0.1$-accuracy. 

    Assume by contradiction that when using the pseudo-random string $G(s)$, the mechanism $\MMM_{rand}$ answers fewer than the $\tilde{\Omega}(n^2)$ bound of \Cref{lem:rand_mechanism} adaptive queries issued by $\AAA_{bounded}$, i.e., with non-negligible probability, one of the answers has error $>0.1$.
    We now use the distribution $\mathcal{P}$, the analyst $\AAA_{bounded}$ and the mechanism $\MMM_{rand}$ to construct a distinguisher $D$ that distinguishes between $G(s)$ and a string drawn uniformly from $\{0,1\}^\ell$.

    \paragraph{Samplable Distributions.}
    Let us make a simplifying assumption that $\mathcal{P}$ is samplable, i.e., one can sample from $\mathcal{P}$ in $\poly(n)$ time. We will later show how to remove this assumption. Denote the distinguisher that uses this assumption by $D'$.

    Given an input string $r \in \{0,1\}^{\ell}$, the distinguisher $D'$ operates as follows. First, $D'$ draws a fresh dataset $S' = (x_{n/2+1}, \dots, x_n) \sim \mathcal{P}^{n/2}$ and a large validation dataset $V \sim \mathcal{P}^{n'}$, where $n' =O(\log n)$, chosen to be sufficiently large to estimate query expectations to within a constant error of $\eps_{small}=\eps_0/100$ via standard concentration bounds. Next, $D'$ simulates the interaction between the computationally bounded analyst $\mathcal{A}_{bounded}$ and the mechanism $\mathcal{M}_{rand}$. In this simulation, $\mathcal{M}_{rand}$ uses $S'$ as its input dataset and the string $r$ as its internal randomness. 
    For every $t\in[k]$, denote by $q_t,a_t$ the queries of $\AAA_{bounded}$ and answers of $\MMM_{rand}$, respectively.
    Denote by $\mu_t$ the empirical expectation of the query over the validation set, i.e., $\mu_t=\tfrac{1}{n'}\sum_{x \in V} q_t(x)$.
    If there exists round $t \in [k]$ where $|a_t - {\mu}_t| > \eps_0+\eps_{small}$, the distinguisher $D'$ outputs $1$ (indicating $r$ is likely pseudorandom), and otherwise $D'$ output $0$ (indicating $r$ is likely truly random).
    
    Observe that $V$ is independent of $q_1,\ldots,q_k$. By setting $n'$ sufficiently large, we have by standard concentration bounds and a union bound that, with high probability, 
    \[\forall t\in[k], \qquad |\mu_t-\E_{x\sim\mathcal{P}}[q_t(x)]|\leq \eps_{small}.\] 
Hence, if $r$ is pseudo-random, then by assumption, there exists $t\in[k]$ such that $|a_t-\E_{x\sim\mathcal{P}}[q_t(x)]|>0.1\geq \eps_0+2\eps_{small}$. Therefore, 
    $|a_t-\mu_t|>\eps_0+\eps_{small}$ and $D'$ outputs $1$.
    On the other hand, if $r$ is truly random, we have by \Cref{lem:rand_mechanism} that for all $t\in[k]$,
    $|a_t-\E_{x\sim\mathcal{P}}[q_t(x)]|\leq \eps_0$, hence $|a_t - \tilde{\mu}_t| \leq \eps_0+\eps_{small}$ and $D'$ outputs $0$.
    Therefore, $D'$ distinguish between truly random $r\in\{0,1\}^\ell$ and pseudo-random $r=G(s)$ where $s\in\{0,1\}^{\sqrt{n/2}}$ with non-negligible probability, contradiction. Therefore, the result hold when the distribution $\mathcal{P}$ is samplable.

    \paragraph{General Distributions, Using the Advice.}
    Recall the setting: we fixed the mechanism $\MMM_{rand}$, analyst $\AAA_{bounded}$ and distribution $\mathcal{P}$, and assume that when given true random string $r\in\{0,1\}^\ell$, the mechanism is $\eps_0$-accurate, and when given pseudo-random string, one of the answers has error $>0.1$.
    Construct an advice $a_n=(S^*,V^*)$, with $S^*\in \X^{n/2}$ and $V^*\in \X^{n'}$ maximizing 
    \[\left| \Pr_{s \sim U_{\sqrt{n/2}}}[D(G(s),(S,V)) = 1] - \Pr_{r \sim U_{\ell}}[D(r,(S,V)) = 1] \right|.\]
    A distinguisher $D$ obtains this advice, and implements $D'$ assuming it has drawn the sample $S^*$ and the validation set $V^*$.    
Observe that if $D$ were to actually draw $S\sim \mathcal{P}^{n/2}$ and $V\sim \mathcal{P}^{n'}$, then it would have precisely implemented $D'$. By the same arguments as before,  it would then distinguishes with non-negligible probability between truly random $r\in\{0,1\}^\ell$ and pseudo-random $r=G(s)$ where $s\in\{0,1\}^{\sqrt{n/2}}$.
Clearly, the maximum is at least as large as the average, and thus $D$, with the advice $a_n$, distinguishes with non-negligible probability between truly random $\ell$-bit string and a pseudo-random one, contradiction.
Therefore, $\MMM$ is $0.1$-accurate, concluding the proof.
    \end{proof}

\section{Proof of Lemma~\ref{lem:plain-ordramsey} (Order-Type Ramsey)}\label{app:ramsey}
 
We derive Lemma~\ref{lem:plain-ordramsey} from the classical finite hypergraph Ramsey theorem, which we use in the following exact form; by the Erd\H{o}s--Rado bounds, $R^{(r)}(m;c)$ below is at most a tower of height $r$ in $\mathrm{poly}(c,m)$.
 
\begin{theorem}[Finite hypergraph Ramsey theorem; \cite{Ramsey1930}]\label{thm:plain-ramsey}
For all integers $r\ge1$, $c\ge1$ and $m\ge r$ there exists a finite number $R^{(r)}(m;c)$ such that: for every set $X$ with $|X|\ge R^{(r)}(m;c)$ and every coloring $\chi:\bin{X}{r}\to[c]$ there exists $Y\subseteq X$ with $|Y|=m$ on which $\chi$ is constant.
\end{theorem}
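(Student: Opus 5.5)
The statement to prove is the finite hypergraph Ramsey theorem. I would prove it by induction on the uniformity $r$, via the standard Erd\H{o}s--Rado "end-homogeneous sequence" argument. Within each step I also need a pigeonhole on colors. Since only finiteness of $R^{(r)}(m;c)$ is claimed, crude bounds suffice throughout.

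\emph{Base case $r=1$.} A coloring of singletons is a coloring of points. By pigeonhole, any $X$ with $|X|\ge c(m-1)+1$ contains $m$ points of a single color. So $R^{(1)}(m;c)\le c(m-1)+1$.

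\emph{Inductive step.} Assume $R^{(r-1)}(\cdot;c)$ is finite for every target size, and let $\chi:\bin{X}{r}\to[c]$ with $X$ very large. I build points $a_1,a_2,\dots,a_L$ and nested sets $X\supseteq X_1\supseteq X_2\supseteq\cdots$ greedily:
\begin{itemize}
\item Pick $a_i\in X_{i-1}$ and remove it.
\item For $Z\in\bin{X_{i-1}\setminus\{a_i\}}{r-1}$, consider the colors $\chi(\{a_{j_1},\dots,a_{j_s},a_i\}\cup Z)$, ranging over all sub-configurations that involve $a_i$ together with earlier $a$'s.
\item Refine $X_{i-1}\setminus\{a_i\}$ so that these colors do not depend on which new points are chosen.
\end{itemize}
The simplest version of the refinement is the one-point Erd\H{o}s--Rado step. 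Keep only those $x$ for which $\chi(\{a_{j_1},\dots,a_{j_{r-1}},x\})$ agrees with a fixed ``type'' function of $(j_1,\dots,j_{r-1})$; the number of such types is bounded by $c^{\binom{i}{r-1}}$. Pigeonhole then keeps at least a $c^{-\binom{i}{r-1}}$ fraction of the points.

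\emph{Induced coloring and conclusion.} After $L$ steps the sequence $a_1<\dots<a_L$ (in construction order) is \emph{end-homogeneous}: the color of an $r$-set $\{a_{j_1},\dots,a_{j_r}\}$ with $j_1<\dots<j_r$ depends only on $\{a_{j_1},\dots,a_{j_{r-1}}\}$. This gives an induced coloring $\chi'$ of $\bin{\{a_1,\dots,a_{L-1}\}}{r-1}$ with $c$ colors. Choosing $L-1\ge R^{(r-1)}(m-1;c)$, the induction hypothesis yields a $\chi'$-homogeneous set $Y'$ of size $m-1$. Then $Y=Y'\cup\{a_L\}$ is $\chi$-homogeneous of size $m$.

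\emph{Bookkeeping.} One has to check that $X$ large enough guarantees the greedy process survives $L$ steps. Since each step shrinks the set by at most a factor $c^{\binom{L}{r-1}}$ and removes one point, $|X|\ge (c^{\binom{L}{r-1}}+1)^{L}$ suffices.

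The main obstacle is stating the end-homogeneity invariant correctly. The point is that $\chi(\{a_{j_1},\dots,a_{j_{r-1}},x\})$ must be the same for every $x$ that is still available after step $j_{r-1}$, including all later $a$'s. This is exactly what the pigeonhole refinement at step $j_{r-1}$ ensures, because all later $a$'s are drawn from the refined set.
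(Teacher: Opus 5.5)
Your proof is correct. The paper does not prove this theorem; it cites Ramsey and notes the Erd\H{o}s--Rado tower-type bound, and your argument is exactly the standard Erd\H{o}s--Rado end-homogeneous-sequence proof behind that bound. Your cruder estimate $|X|\ge (c^{\binom{L}{r-1}}+1)^L$ still gives finiteness, which is all the paper needs.

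The only blemish is the first bullet's multi-point refinement over $Z\in\binom{X_{i-1}\setminus\{a_i\}}{r-1}$. As written it is ill-formed: for $s\ge 1$ those sets have more than $r$ elements. It is also unnecessary, so drop it; the one-point refinement you state next is what the argument actually uses.
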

 
The two statements differ in two respects: Theorem~\ref{thm:plain-ramsey} colors \emph{unordered} $r$-subsets and produces a set on which the color is \emph{constant}, while Lemma~\ref{lem:plain-ordramsey} colors \emph{tuples} (with repetitions allowed) and produces a set on which the color is constant \emph{on each weak order type separately}. The reduction is the natural one: there are only finitely many weak order types; each type identifies the tuples of that type with unordered subsets, thereby converting the tuple-coloring into a subset-coloring to which Theorem~\ref{thm:plain-ramsey} applies; and nesting one application per type handles all types at once. We now carry this out in four steps.
 
\begin{proof}[Proof of Lemma~\ref{lem:plain-ordramsey}]
{\bf Step 1: weak order types as templates.}
Define a \emph{template} of arity $r$ to be a pair $T=(s,\pi)$, where $s\le r$ and $\pi:[r]\to[s]$ is surjective. Say that a tuple $\vec u\in X^r$ \emph{matches} $T$ if $\vec u$ has exactly $s$ distinct values and, for every position $t\in[r]$, the entry $u_t$ is the $\pi(t)$-th smallest of them. Every tuple matches exactly one template: take $s$ to be its number of distinct values and $\pi(t)$ the rank of $u_t$ among them. Moreover, two tuples have the same weak order type (as defined before Lemma~\ref{lem:plain-ordramsey}) if and only if they match the same template: the comparisons $u_t<u_{t'}$ determine the ranks $\pi(t)$, and conversely the ranks determine all comparisons.  Let $T_1,\dots,T_\tau$ enumerate the templates of arity $r$; their number $\tau$ is finite (crudely, $\tau\le r^r$).
 
\paragraph{Step 2: one subset-coloring per template.}
Fix a template $T=(s,\pi)$ and a coloring $\chi:X^r\to[c]$. A tuple matching $T$ is completely determined by its \emph{set} of distinct values: if that set is $W=\{w_1<\dots<w_s\}$, the tuple must be $\bigl(w_{\pi(1)},\dots,w_{\pi(r)}\bigr)$. This identifies the tuples matching $T$ with the $s$-subsets of $X$, and lets us pull $\chi$ back to a coloring of subsets:
\[
\chi_T:\bin{X}{s}\to[c],
\qquad
\chi_T(W)\ :=\ \chi\bigl(w_{\pi(1)},\dots,w_{\pi(r)}\bigr)
\quad\text{for }W=\{w_1<\dots<w_s\} ,
\]
i.e.\ $\chi_T(W)$ is the color of the unique tuple matching $T$ whose value set is $W$. By construction, if $\chi_T$ is constant on the $s$-subsets of some $Y\subseteq X$, then $\chi$ is constant on the tuples in $Y^r$ matching $T$.
 
\paragraph{Step 3: handling all templates at once.}
We now apply Theorem~\ref{thm:plain-ramsey} once per template, each time shrinking the set at hand. Start from $X$ and consider the first template: Theorem~\ref{thm:plain-ramsey}, applied to the subset-coloring $\chi_{T_1}$, yields a large subset $Y_1\subseteq X$ on which $\chi_{T_1}$ is constant. Next consider the second template, but work inside $Y_1$: another application, this time to $\chi_{T_2}$, yields a large subset $Y_2\subseteq Y_1$ on which $\chi_{T_2}$ is constant. Crucially, nothing is lost in this second step: $\chi_{T_1}$ was constant on the subsets of $Y_1$, so it certainly remains constant on the subsets of the smaller $Y_2$. Continuing through all $\tau$ templates produces a chain $X\supseteq Y_1\supseteq Y_2\supseteq\dots\supseteq Y_\tau$, and the final set $Y:=Y_\tau$ is homogeneous for every one of the colorings $\chi_{T_1},\dots,\chi_{T_\tau}$ at once.
 
Each application of Theorem~\ref{thm:plain-ramsey} requires its input set to be large enough and shrinks it, so to end with $|Y|=m$ we must start large: the last application must begin from a set big enough to leave $m$ elements, the one before it from a set big enough to leave \emph{that} many, and so on. Composing the Ramsey bounds $\tau$ times, backwards from $m$, gives a finite starting size, which we take as the definition of $R^{\mathrm{ord}}_r(m;c)$.
 
\paragraph{Step 4: conclusion.}
Let $\vec u,\vec v\in Y^r$ have the same weak order type. By Step 1 they match a common template $T=(s,\pi)$; let $W,W'\in\bin{Y}{s}$ be their respective sets of distinct values. By the identity of Step 2,
\[
\chi(\vec u)\ =\ \chi_T(W)\ =\ \chi_T(W')\ =\ \chi(\vec v),
\]
where the middle equality holds because $\chi_T$ is constant on $\bin{Y}{s}$ by Step 3. This is the assertion of the lemma.
\end{proof}

\end{document}